\def\mycmd{2}
\definecolor{refkey}{rgb}{1,0.5,0} % change the colors
\definecolor{labelkey}{rgb}{1,0.5,0}
\renewcommand*{\CustomAcronymFields}{%
  name={\the\glsshorttok},% name is abbreviated form
  description={\the\glslongtok},% description is long form
  first={\noexpand\emph{\the\glslongtok}\space(\the\glsshorttok)},%
  firstplural={\noexpand\emph{\the\glslongtok\noexpand\acrpluralsuffix}\space(\the\glsshorttok)},%
  text={\the\glsshorttok},%
  plural={\the\glsshorttok\noexpand\acrpluralsuffix}%
}
\DeclareMathAlphabet{\mathcal}{OMS}{cmsy}{m}{n}
\DeclareSymbolFont{Letters}{OML}{cmm}{m}{it}
\DeclareMathSymbol{\alpha}{\mathalpha}{Letters}{11}
\DeclareMathSymbol{\beta}{\mathalpha}{Letters}{12}
\DeclareMathSymbol{\lambda}{\mathalpha}{Letters}{21}
\DeclareMathSymbol{\Lambda}{\mathalpha}{Letters}{3}
\DeclareMathSymbol{\pi}{\mathalpha}{Letters}{25}
\DeclareMathSymbol{\rho}{\mathalpha}{Letters}{26}
\DeclareMathSymbol{\sigma}{\mathalpha}{Letters}{27}
\def\@IEEEsectpunct{:\ \,}
\def\paragraph{\@startsection{paragraph}{4}{\z@}{1.5ex plus 1.5ex minus 0.5ex}%
{0ex}{\normalfont\normalsize\bfseries}}
\def \scalevalueS {0.9}
\def \scalevalueSS {0.59}
\def \scalevalueFigWidth {0.8}
\tikzset{naming/.style={align=center,font=\small}}
\tikzset{antenna/.style={insert path={-- coordinate (ant#1) ++(0,0.25) -- +(135:0.25) + (0,0) -- +(45:0.25)}}}
\tikzset{station/.style={naming,draw,shape=dart,shape border rotate=90, minimum width=10mm, minimum height=10mm,outer sep=0pt,inner sep=3pt}}
\tikzset{mobile/.style={naming,draw,shape=rectangle,minimum width=12mm,minimum height=6mm, outer sep=0pt,inner sep=3pt}}
\tikzset{radiation/.style={{decorate,decoration={expanding waves,angle=90,segment length=4pt}}}}
\tikzset{radiation/.style={{decorate,decoration={expanding waves,angle=90,segment length=4pt}}},
         BS/.pic={
        code={\tikzset{scale=5/10}
            \draw[semithick] (0,0) -- (1,4);% left line
            \draw[semithick] (3,0) -- (2,4);% right line
            \draw[semithick] (0,0) arc (180:0:1.5 and -0.5) node[above, midway]{#1};
            \node[inner sep=4pt] (circ) at (1.5,5.5) {};
            \draw[semithick] (1.5,5.5) circle(8pt);
            \draw[semithick] (1.5,5.5cm-8pt) -- (1.5,4);
            \draw[semithick] (1.5,4) ellipse (0.5 and 0.166);
            \draw[semithick,radiation,decoration={angle=50}] (1.5cm+8pt,5.5) -- +(0:2);
            \draw[semithick,radiation,decoration={angle=50}] (1.5cm-8pt,5.5) -- +(180:2);
  }}
}
\tikzset{radiation/.style={{decorate,decoration={expanding waves,angle=90,segment length=4pt}}},
         node/.pic={
        code={\tikzset{scale=5/10}
            \filldraw[semithick] (0,0) circle(10pt);
            \draw[semithick,radiation,decoration={angle=45}] (0,12pt) -- +(90:2);
  }}
}
\newcommand{\plos}{p_{\text{L}}}
\newcommand{\Pt}{P_t}
\newcommand{\Pa}{P_a}
\newcommand{\Pd}{P_d}
\newcommand{\Toh}{T_{1\text{h}}} % T one-hop
\newcommand{\Tth}{T_{2\text{h}}} % T two-hop
\newcommand{\Roh}[1]{R_{1\text{h},#1}} 
\newcommand{\Rth}[1]{R_{2\text{h},#1}} 
\newcommand{\aR}{{c}} % achievable rate
\newcommand{\ahR}{{\hat{c}}} % achievable rate
\newcommand{\ha}[2]{h^\tr{a}_{#1,#2}}
\newcommand{\hha}[2]{\hat{h}^\tr{a}_{#1,#2}}
\newcommand{\hd}[2]{h^\tr{d}_{#1,#2}}
\newcommand{\ga}[2]{g^\tr{a}_{#1,#2}}
\newcommand{\gha}[2]{\hat{g}^\tr{a}_{#1,#2}}
\newcommand{\gd}[2]{g^\tr{d}_{#1,#2}}
\newcommand{\esterror}{\epsilon}
\newcommand{\ra}[2]{\rho^\tr{a}_{#1,#2}}
\newcommand{\rd}[2]{\rho^\tr{d}_{#1,#2}}
\newcommand{\koh}{K_{1\text{h}}}
\newcommand{\kth}{K_{2\text{h}}}
\newcommand{\varnoise}{\sigma_{\tr{o}}}
\newcommand{\prob}[1]{\text{Pr}\left(#1\right)}
\newcommand{\indic}[1]{\mathbf{1}_{\scriptstyle #1}}
\newcommand{\Poh}{P_{1h}}
\newcommand{\Pth}{P_{2h}}
\newcommand{\Pm}[1]{p{\left(#1\right)}}
\newcommand{\rbo}{\theta}
\newcommand{\msa}{\mathscr{A}}
\newcommand{\msd}{\mathscr{D}}
\newcommand{\msc}{\mathscr{C}}
\newcommand{\mss}{\mathscr{S}}
\newcommand{\msr}{\mathscr{R}}
\newcommand{\msoh}{\mathscr{D}_{1\text{h}}}
\newcommand{\msth}{\mathscr{D}_{2\text{h}}}
\newcolumntype{C}{>{\(\displaystyle}c<{\)}@{}} 
\newcolumntype{L}{>{\(\displaystyle}l<{\)}@{}} 
\newcolumntype{R}{>{\(\displaystyle}r<{\)}@{}}
\newcommand{\tr}[1]{\textrm{#1}}
\newcommand{\set}[1]{\{#1\}}
\newcommand{\cdf}{\tr{F}}       % cdf
\newcommand{\card}[1]{|#1|}
\newcommand{\secref}[1]{Section~\ref{#1}}
\newcommand{\figref}[1]{Fig.~\ref{#1}}
\newcommand{\propref}[1]{Proposition~\ref{#1}}
\newcommand{\tabref}[1]{Table~\ref{#1}}
\newcommand{\appref}[1]{Appendix~\ref{#1}}
\newcommand{\Pout}{\mathbb{P}_\tr{\tiny{out}}}
\newtheorem{proposition}{Proposition}
\def \scalevalueS {0.6} 
\def \scalevalueSS {0.46} 
\def \scalevalueSSS {0.5}
\def \scalevalueFigWidth {0.3} 
\def \scalevalueFigWidthSmall {0.74}
\def \scalevalueS {0.9}
\def \scalevalueSS {0.8}
\def \scalevalueSSS {0.6}
\def \scalevalueFigWidth {0.98}
\def \scalevalueFigWidthSmall {0.95}
\begin{document}
%% for glossaries go to http://texblog.org/2014/01/15/glossary-and-list-of-acronyms-with-latex/
\newacronym{csi}{CSI}{channel state information}
\newacronym{cqi}{CQI}{channel quality indicator}
\newacronym{ack}{ACK}{acknowledgement}
\newacronym{arq}{ARQ}{automatic repeat request}
\newacronym{awgn}{AWGN}{additive white Gaussian noise}
\newacronym{cc}{CC}{chase combining}
\newacronym{dp}{DP}{dynamic programming}
\newacronym{fec}{FEC}{forward error correction}
\newacronym{harq}{HARQ}{hybrid automatic repeat request}
\newacronym{hspa}{HSPA}{high speed packet access}
\newacronym{iid}{i.i.d.}{independent and identically distributed}
\newacronym{ir}{IR}{incremental redundancy}
\newacronym{lte}{LTE}{long term evolution}
\newacronym{mdp}{MDP}{markov decision process}
\newacronym{mrc}{MRC}{maximal-ratio combining}
\newacronym{nack}{NAK}{negative acknowledgement}
\newacronym{pdf}{pdf}{probability density function}
\newacronym{wimax}{WiMax}{worldwide interoperability for microwave access}
\newacronym{3gpp}{3GPP}{3rd generation partnership project}
\newacronym{ofdm}{OFDM}{orthogonal frequency-division multiplexing}
\newacronym{ofdma}{OFDMA}{orthogonal frequency-division multiple access}
\newacronym{wlan}{WLAN}{wireless local area network}
\newacronym{mmse}{MMSE}{minimum mean-square-error}
\newacronym{gsm}{GSM}{global system for mobile communications}
\newacronym{edge}{EDGE}{enhanced data \gls{gsm} environment}
\newacronym{stbc}{STBC}{space-time block code}
\newacronym{amc}{AMC}{adaptive modulation and coding}
\newacronym{snr}{SNR}{signal to noise ratio}
\newacronym{sinr}{SINR}{signal to interference and noise ratio}
\newacronym{mi}{MI}{mutual information}
\newacronym{acmi}{ACMI}{accumulated mutual information}
\newacronym{nacmi}{NACMI}{normalized ACMI}
\newacronym{cdi}{CDI}{channel distribution information}
\newacronym{latr}{LATR}{long-term average transmission rate}
\newacronym{rtr}{RTR}{round transmission rate}
\newacronym{pomdp}{POMDP}{Partially Observable Markov Decision Process}
\newacronym{fd}{FD}{full-duplex}
\newacronym{hd}{HD}{half-duplex}
\newacronym{td}{TD}{Time Division}
\newacronym{tdma}{TDMA}{time-division multiple access}
\newacronym{mac}{MAC}{Media Access Control}
\newacronym{uwb}{UWB}{Ultra Wideband}
\newacronym{ieee}{IEEE}{institute of electrical and electronics engineers}
\newacronym{dB}{dB}{decibel}
\newacronym{cdf}{cdf}{cumulative density function}
\newacronym{ccdf}{ccdf}{complementary cumulative density function}
\newacronym{min}{Min.}{minimum}
\newacronym{med}{Med.}{median}
\newacronym{avg}{Avg.}{average}
\newacronym{ul}{UL}{up-link}
\newacronym{dl}{DL}{down-link}
\newacronym{app}{APP}{a-posteriori probability}
\newacronym{logmap}{LogMAP}{log maximum a-posteriori}
\newacronym{llr}{LLR}{log-likelihood ratio}
\newacronym{ue}{UE}{user equipment}
\newacronym{qos}{QoS}{quality of service}
\newacronym{5g}{5G}{fifth generation}
\newacronym{4g}{4G}{fourth generation}
\newacronym{tti}{TTI}{transmission time interval}
\newacronym{rrm}{RRM}{radio resource management}
\newacronym{mmib}{MMIB}{mean mutual information per bit}
\newacronym{dsi}{DSI}{decoder state information}
\newacronym{tb}{TB}{transport block}
\newacronym{tbs}{TBS}{transport block size}
\newacronym{cb}{CB}{code block}
\newacronym{cbg}{CBG}{code block group}
\newacronym{cbs}{CBS}{code block size}
\newacronym{prb}{PRB}{physical resource block}
\newacronym{rb}{RB}{resource block}
\newacronym{bler}{BLER}{block error rate}
\newacronym{blep}{BLEP}{block error probability}
\newacronym{crc}{CRC}{cyclic redundancy check}
\newacronym{tdd}{TDD}{time division duplex}
\newacronym{fdd}{FDD}{frequency division duplex}
\newacronym{embb}{eMBB}{enhanced mobile broadband}
\newacronym{mcc}{MCC}{mission critical communication}
\newacronym{mmc}{MMC}{massive machine communication}
\newacronym{mtc}{MTC}{machine type of communication}
\newacronym{mmtc}{mMTC}{massive machine type of communication}
\newacronym{umtc}{uMTC}{ultra-reliable \gls{mtc}}
\newacronym{urllc}{URLLC}{ultra-reliable low-latency communications}
\newacronym{rtt}{RTT}{round trip time}
\newacronym{rs}{RS}{reference symbols}
\newacronym{kpi}{KPI}{key performance indicator}
\newacronym{kpis}{KPIs}{key performance indicators}
\newacronym{tx}{Tx}{transmitter node}
\newacronym{rx}{Rx}{receiver node}
\newacronym{cran}{C-RAN}{centralized radio access network}
\newacronym{rru}{RRU}{remote radio unit}
\newacronym{bbu}{BBU}{baseband unit}
\newacronym{fhd}{FHD}{fronthaul delay}
\newacronym{cch}{CCH}{control channel}
\newacronym{saw}{SAW}{stop-and-wait}
\newacronym{qci}{QCI}{\gls{qos} class identifier}
\newacronym{gbr}{GBR}{guaranteed bit rate}
\newacronym{mbr}{MBR}{maximum bit rate}
\newacronym{ngbr}{non-GBR}{non-\gls{gbr}}
\newacronym{arp}{ARP}{allocation and retention priority}
\newacronym{effcr}{ECR}{effective coding rate}
\newacronym{cbs}{CBS}{code block size}
\newacronym{mcs}{MCS}{modulation and coding scheme}
\newacronym{eva}{EVA}{extended vehicular A}
\newacronym{epa}{EPA}{extended pedestrian A}
\newacronym{etu}{ETU}{extended typical urban}
\newacronym{re}{RE}{resource element}
\newacronym{reS}{REs}{resource elements}
\newacronym{nr}{NR}{new radio}
\newacronym{qpsk}{QPSK}{quadrature phase shift keying}
\newacronym{qam}{QAM}{quadrature amplitude modulation}
\newacronym{siso}{SISO}{single-input and single-output}	
\newacronym{miso}{MISO}{multiple-input single-output}
\newacronym{mimo}{MIMO}{multiple-input multiple-output}
\newacronym{bs}{BS}{base station}
\newacronym{phy}{PHY}{physical layer}
\newacronym{rlc}{RLC}{radio link control}
\newacronym{bcfsaw}{BCF-SAW}{BCF-SAW}
\newacronym{bcf}{BCF}{backwards composite feedback}
\newacronym{bac}{BAC}{binary asymmetric channel}
\newacronym{bsc}{BSC}{binary symmetric channel}
\newacronym{dtx}{DTX}{discontinued transmission}
\newacronym{bpsk}{BPSK}{binary phase shift keying}
\newacronym{bep}{BEP}{bit error probability}
\newacronym{ndi}{NDI}{new data indicator}
\newacronym{dci}{DCI}{downlink control information}
\newacronym{csit}{CSIT}{channel state information at the transmitter}
\newacronym{lt}{LT}{loudest talker}
\newacronym{ct}{CT}{cooperative transmission}
\newacronym{bps}{bps}{bits per second}
\newacronym{bpcu}{bpcu}{bits per channel use}
\newacronym{los}{LOS}{line-of-sight}
\newacronym{nlos}{NLOS}{non-line-of-sight}
\newacronym{regsaw}{Reg-SAW}{Regular SAW}
\newacronym{Lrep}{$L$-Rep-ACK}{Increased feedback repetition order}
\newacronym{Lack}{$L$-ACK-SAW}{$L$ required ACK per packet}
\newacronym{Asym}{Asym-SAW}{Asymmetric feedback detection for SAW}
\newacronym{bretx}{Blind-ReTx}{Blind retransmission}
\newacronym{df}{DF}{decode-and-forward}
\newacronym{af}{AF}{amplify-and-forward}
\newacronym{ap}{AP}{access point}
\newacronym{icn}{ICN}{industrial control network}
\newacronym{comp}{CoMP}{coordinated multi-point}
\newacronym{rhs}{RHS}{right hand side}
\newacronym{lhs}{LHS}{left hand side}
\newacronym{ap}{AP}{access point}
\newacronym{comp}{CoMP}{Coordinated Multipoint}
\newacronym{sumiso}{SU-MISO}{single-user multiple-input-single-output}
\newacronym{ibl}{IBL}{infinite block length}
\newacronym{fbl}{FBL}{finite block length}
\newacronym{icn}{ICN}{industrial control network}
\newacronym{lan}{LAN}{local area network}
\newacronym{wsn}{WSN}{wireless sensor network}
\newacronym{rt}{RT}{real-time}
\newacronym{tdm}{TDM}{time division multeplxing}
\newacronym{isi}{ISI}{inter-symbol interference}
\newacronym{nist}{NIST}{National Institute of Standards and Technology}
\newacronym{cbrs}{CBRS}{Citizens Broadband Radio Service}
\newacronym{los}{LOS}{line-of-sight}
\newacronym{nlos}{NLOS}{non-line-of-sight}
\newacronym{itu}{ITU}{International Telecommunications Union}
\newacronym{mmwave}{mmWave}{millimeter-wave}
\newacronym{nsr}{NSR}{noise-to-signal ratio}
\newacronym{das}{DAS}{distributed antenna system}
\newacronym{pdf}{PDF}{probability density function}		
\newacronym{pmf}{PMF}{probability mass function}
\newacronym{srs}{SRS}{sounding reference signal}
\newacronym{dmrs}{DMRS}{demodulation reference signal}
\newacronym{psd}{PSD}{power spectral density}
\newacronym{rf}{RF}{radio frequency}
\newacronym{id}{ID}{identifier}
\newacronym{df}{DF}{decode-and-forward}
\newacronym{iot}{IoT}{internet of things}
\newacronym{iiot}{IIoT}{industrial internet-of-things}
\newacronym{icsi}{I-CSI}{imperfect CSI}
\newacronym{pcsi}{P-CSI}{perfect CSI}
\newacronym{andcoop}{ANDCoop}{adaptive network-device cooperation}
\newacronym{dmt}{DMT}{diversity-multiplexing tradeoff}
\newacronym{scs}{SCS}{sub-carrier spacing}

%\newacronym{}{}{}
%\newglossaryentry{fast fading}{
%  name = fast-fading,
%  description={ridiiiiiiifast-fading}
%}

\bstctlcite{IEEEexample:BSTcontrol}

\if\mycmd 1
% To make the title fit to width, enable the following
\makeatletter
  \def\title@font{\Large}
  \let\ltx@maketitle\@maketitle
  \def\@maketitle{\bgroup%
    \let\ltx@title\@title%
    \def\@title{\resizebox{\textwidth}{!}{%
      \mbox{\title@font\ltx@title}%
    }}%
    \ltx@maketitle%
  \egroup}
\makeatother
\else
\fi

\if\mycmd 1
\title{\Large Exploiting Diversity for  Ultra-Reliable and Low-Latency Wireless Control}
\else
\title{Exploiting Diversity for  Ultra-Reliable and Low-Latency Wireless Control}
\fi%

\if\mycmd 1
\author{{\normalsize Saeed R. Khosravirad, Harish Viswanathan,  and Wei Yu}
\thanks{Saeed R. Khosravirad (saeed.khosravirad@nokia-bell-labs.com), and Harish Viswanathan (harish.viswanathan@nokia-bell-labs.com) are with Nokia Bell Labs, Murray Hill, NJ, USA.}%
\thanks{Wei Yu (weiyu@comm.utoronto.ca) is with the Edward S. Rogers Sr. Department of Electrical and Computer Engineering, University of Toronto, ON, Canada.}}
\else
\author{
    \IEEEauthorblockN{Saeed R. Khosravirad, \textsl{Member, IEEE,} Harish Viswanathan, \textsl{Fellow, IEEE,} and Wei Yu, \textsl{Fellow, IEEE}}
\thanks{Saeed R. Khosravirad (saeed.khosravirad@nokia-bell-labs.com), and Harish Viswanathan (harish.viswanathan@nokia-bell-labs.com) are with Nokia Bell Labs, Murray Hill, NJ, USA.}%
\thanks{Wei Yu (weiyu@comm.utoronto.ca) is with the Edward S. Rogers Sr. Department of Electrical and Computer Engineering, University of Toronto, ON, Canada. The work of Wei Yu is supported by the Nokia Bell Labs and by the Canada Research Chairs program.}
\thanks{This work was presented in part at the IEEE Vehicular Technology Conference (VTC-Spring'19), Kuala Lumpur, Malaysia, May 2019.}}
\fi

\maketitle
\if\mycmd 1
\vspace{-50pt}
\else
\fi

\begin{abstract}
%Wireless  \gls{iiot} in the next generation of industrial control systems requires  communications  with sub-ms, extreme low latency and ``cable-like" ultra-high reliability. For a large-scale network of sensor and actuator devices in factory automation applications, different wireless transmission schemes have recently been proposed to exploit  spatial and multi-user diversity gain in the network. The challenge in this new paradigm of wireless communication is that the design requires  guaranteed service to all, including the \emph{weakest} user, as opposed to the classic paradigm of network design that targets average performance.

This paper introduces a wireless communication protocol for industrial control systems that uses channel quality awareness to dynamically create network-device cooperation and assist the nodes in momentary poor channel conditions.  To that point,  channel state information  is used to identify nodes with \emph{strong} and \emph{weak} channel conditions. We show that strong nodes in the network are best to be served in a single-hop transmission with transmission rate adapted to their instantaneous channel conditions. Meanwhile,  the remainder of time-frequency resources is used to serve the nodes with weak channel condition using a two-hop transmission with cooperative communication among all the nodes to meet the target reliability in their communication with the controller. We formulate the achievable multi-user and multi-antenna diversity gain in the low-latency regime,  and propose a new scheme for  exploiting those \emph{on-demand}, in favor of reliability and efficiency. The proposed transmission scheme is therefore dubbed adaptive network-device cooperation (ANDCoop), since it is able to adaptively allocate cooperation resources while enjoying  the multi-user diversity gain of the network. %\emph{Diversity-multiplexing tradeoff} analysis in high \gls{snr} is provided for two extreme cases of the proposed scheme.
 We formulate the optimization problem of associating nodes to each group and dividing resources between the two groups.  Numerical solutions show significant improvement in spectral efficiency and system reliability compared to the existing schemes in the literature. System design incorporating the proposed transmission strategy can thus reduce infrastructure cost for future private wireless networks.
%numerical analysis shows that the proposed \gls{urllc} scheme can achieve an extended range of spectral efficiency as compared to schemes with equal-rate transmission and provides robust ultra-reliability across a wide range of data payload sizes. We show that the proposed scheme also exploits multi-user diversity similar to cooperative relaying schemes proposed in the literature.
% We assume a network with flexible number of connected \gls{bs}s where half-duplex wireless S/A nodes can be deployed as \gls{df} relay nodes to transmit the decoded message.  We solve the optimization problem of minimizing energy consumption at the battery-limited wireless nodes to achieve a reliable wireless communication in different transmission protocols.
\end{abstract}

\if\mycmd 1
\vspace{-10pt}
\else
\fi

\begin{IEEEkeywords}
\if\mycmd 1
\vspace{-10pt}
\else
\fi
Ultra-reliable low-latency communications, factory automation, industrial internet-of-things,  5G
\end{IEEEkeywords}

\if\mycmd 1
\vspace{-10pt}
\else
\fi
\section{Introduction}

\if\mycmd 1
\vspace{-10pt}
\else
\fi
\subsection{Industrial Wireless Control}

%"Advances in wireless connectivity will allow robots and automated systems to roam free while taking full advantage of remote computing and data-storage systems. As futuristic as it might sound, the previous picture describes what is going on in the mining industry today. Mines are gradually becoming large-scale mobile cyber-physical systems, namely, net-centric layered constructs where digital and physical worlds interact continuously to achieve a set of common goals with little human intervention. Nevertheless, several technical, cultural and business challenges remain ahead. For example, industrial environments have received far less attention than urban areas from the RF propagation community; the different IT and OT cultures need to be respectfully harmonized; and the dialogue around the critical importance of the early incorporation of cybersecurity protections in 5G networks, services, and devices needs to evolve before there is wide-scale adoption of industrial IoT solutions. This talk addresses some of these issues with the purpose of enriching the discussion around industrial applications of 5G systems."

Wireless  \gls{iiot} in the next generation of industrial control systems require  communications  with sub-ms, extreme low latency and ``cable-like" ultra-high reliability. For a large-scale network of sensor and actuator devices in factory automation applications, different wireless transmission schemes have recently been proposed to exploit  spatial and multi-user diversity gain in the network. The challenge in this new paradigm of wireless communication is that the design requires  guaranteed service to all, including the \emph{weakest} user, as opposed to the classic paradigm of network design that targets average performance. 
%Communications for motion control in industrial control systems  face highly demanding requirements in reliability and latency \glspl{kpi}.  Remote  control industrial applications such as  robotic manufacturing,  packaging and printing machines, etc., together with process monitoring on field-level require reiability figure up to nine nines %\footnote{Definition of ultra-reliability in the telecommunication literature spans over 99.999 to 99.9999999 figures depending on the use case.}
% within a constrained 1-2 ms closed-loop control latency.  %, in order to migrate from  wired communication, e.g.,  fieldbus and Ethernet based solutions \cite{Cena:2008}.
Traditionally, such industrial automation requirements are realized on the factory floor through wired communications e.g., using fieldbus and Ethernet based solutions. The wired solutions, however, are considered to be cumbersome and expensive in many applications. Moreover, the future industrial automation targets a highly  flexible and  dynamic environment of production stations that support robotic mobility to be able to seamlessly re-arrange according to production requirements \cite{Cena:2008,Bennis:2018,Harish:6G2020}. As a result, there is an increased desire to replace wired communication systems for factory automation with wireless alternatives to reduce bulk as well as installation and maintenance costs  \cite{Gungor:2009}. This calls for innovative solutions  in constrast to existing wireless technologies that are designed for  delay tolerant consumer solutions, making them unsuited for  industrial automation \cite{weiner2014design}. In the era beyond the \gls{5g} mobile networks, \gls{urllc} is promised to deliver such demanding requirements using the  advanced  physical layer technologies, including communications in \gls{mmwave} and \gls{uwb} spectrum access \cite{Gilberto:2018,mazgula2020ultra}, accompanied by the improvements in network architecture in  bringing the cloud close to the edge to reduce latency, and using machine learning for a fast and reliable prediction of channels and traffic~\cite{Bennis:2020arxiv}.

\if\mycmd 1
\vspace{-10pt}
\else
\fi
\subsection{Prior Work}

\if\mycmd 1
\vspace{-5pt}
\else
\fi
To fulfill the requirements of ultra-high reliability  within a stringent latency constraint, different diversity techniques are suggested in the literature. Time and frequency diversity techniques \cite{zand2012wireless} as well as spatial diversity and multi-user cooperation \cite{swamy2015cow} can reduce the required reliability-achieving  \gls{snr}. For example, the importance of multi-antenna receive diversity in improving reliability and coverage was pointed out in \cite{brahmi2015gc}. In low-latency industrial automation applications, the  cycle time is shorter than the fading channel coherence time, which rules out viability of \gls{arq}-based time diversity techniques~\cite{Khosravirad:2017}. In \cite{swamy2015cow}, it is shown that relying solely on frequency diversity to achieve $10^{-9}$ error rate requires impractically  high \gls{snr} values in realistic channel conditions. It is further shown that multi-user diversity, even in low or moderate \gls{snr} regime, can  achieve ultra-reliability. Incidentally, increasing transmit diversity by engaging multiple transmitting \glspl{ap}, similar to \gls{comp} technology \cite{randa:2012}, is also not a straightforward path to reliability. In fact, in the absence of \gls{csi} at the transmitter, transmit diversity falls dramatically short of achieving high reliability as discussed in \cite{Rebal:2018}. The study in \cite{Rebal:2018} further shows that by using \gls{csi} to adapt transmission rate at the transmitter, the inherent multi-user diversity gain of a large size network can be exploited to achieve high reliability. To that point, the cooperative transmission in \cite{swamy2015cow} attempts to exploit the full potential of  wireless network by enabling cooperative device-to-device relaying to improve  reliability. The focus of \cite{swamy2015cow} is to devise transmission schemes that do not require transmitter \gls{csi}. Such \gls{csi}-agnostic schemes are not able to take  the  differences in the instantaneous  channel conditions into account, resulting in a sub-optimal and conservative choice of transmission rate determined by the worst user's conditions, and loss of spectral efficiency. On the other hand, in a multi-user wireless network, the overhead for acquiring \gls{csi} can grow large as the size of the network grows. The works in \cite{8259329} and \cite{Rebal:2018} study the impact of such overhead. Particularly, \cite{8259329} considers the overhead of \gls{csi} acquisition when communicating with small packets, thus characterizing the error performance under the finite block length regime. 

System-level  simulations  for multi-user networks under \gls{urllc} requirements is highly time-consuming and complex. We acknowledge that several works in the literature have   addressed and dealt with those complexities, including \cite{Klaus:2018system,7247339}, and have provided insightful conclusions for system design of cellular networks with extreme reliability requirements.

 %Moreover, the  processing time required by the cooperating nodes to decode and forward the common message  and/or the gap period required in a \gls{tdd} mode of operation at a relay node to  switch from receiving mode to transmitting mode can become a bottleneck to time budget of multi-hop relaying schemes.
%A variable rate scheme that allocates different transmission times to different devices to send the same number of bits indirectly \emph{also} exploits multi-user diversity since the aggregate transmission time depends on the channels of all the devices. 

\if\mycmd 1
\vspace{-10pt}
\else
\fi
\subsection{Exploiting Diversity ``On-demand''}
\if\mycmd 1
\vspace{-5pt}
\else
\fi

Spatial diversity transmission, as the most dependable source of achieving high reliability when communicating over fading channels, is a viable solution for reliability that may be achieved by using multiple transmission points or antennas. In low cost deployments, however, it is desirable to have a small number of spatially distributed, simple \glspl{ap} with limited number of antennas. Spatial diversity transmission, however, needs to also exploit multi-user diversity stemming from the fact that several users with different channel conditions are part of the communication system. On the other hand, cooperative relaying among users (as proposed in \cite{swamy2015cow}) essentially also achieves spatial diversity through multi-user diversity. 

The core question this paper  tries to answer is \emph{how to use channel awareness at the transmitter to efficiently allocate radio and cooperation resources, and to exploit diversity according to the instantaneous needs of the users}. We introduce a transmission protocol that is capable of identifying users' channel strength and allows for exploiting different sources of diversity, \emph{on demand}. We propose to adapt the transmission rate for  users with strong channel from the \glspl{ap} according to their channel, while exploiting cooperative diversity for the remaining users with weak channel. This introduces a robust way of deploying the \emph{emergency} resources of network cooperation, only for the devices that absolutely need them. We study the improvements offered by this protocol on the operating spectral efficiency and the minimum required \gls{snr} for reliability. In high \gls{snr}, this impacts  the slope of the outage probability curve and improves diversity order  by deploying network-device cooperation for the poor links. Meanwhile, it brings better multiplexing gain by smartly exploiting the difference in channel conditions.

\if\mycmd 1
\vspace{-15pt}
\else
\fi
\subsection{Channel Estimation in Wireless Networks}
%Why do we have to assume channel estimation pilots anyways, and how we can utilize that to better utilize resources?
\if\mycmd 1
\vspace{-5pt}
\else
\fi

%Coherent detection is the widely popular choice over the noncoherent counterpart for communication over fading channels, mostly because of lower complexity and better performance at high SNR of coherent detection schemes.
% In  advanced wireless technologies, coherent transmission is realized by using  pilot signals that enables the receiver to estimate the channel response in real-time.  Such approaches typically require no knowledge of fading statistics, unlike noncoherent transmission, which leaves the design of low-complexity noncoherent coding schemes   an open problem \cite{ostman:2017}. Particularly, for the case of cooperative diversity schemes such as those in \cite{Bauch:2004}, complexity of noncoherent detection can rapidly increase with the number of transmitters. Therefore, 
 
We assume a general transmission framework where data is accompanied with proper amount of pilot signal in all transmissions \cite{hassibi03}. The estimated channel at a receiving node can then be reported back to the transmitting node, e.g. in form of \gls{cqi}. With an adequate frequency of  \gls{cqi} updates, transmitter can improve resource utilization efficiency by adapting the transmission attributes to the channel. In fact, such an approach is widely adopted in multi-user cellular technologies such as the \gls{lte} and \gls{nr}. More interestingly, by assuming channel reciprocity in \gls{tdd} transmission mode for the industrial wireless control problem of our interest with isochronous traffic pattern in  the \gls{dl} and the \gls{ul} directions,  \gls{cqi} acquisition requires no feedback exchange between the nodes. Instead, \gls{cqi} can be  estimated when the node performs channel estimation while  in receiving mode and can be used when the node switches to transmit mode. In this paper, we adopt this assumption and present a new transmission protocol that utilizes channel state information to best exploit sources of diversity in the wireless network.

\if\mycmd 1
\vspace{-15pt}
\else
\fi
\subsection{Contributions}

\if\mycmd 1
\vspace{-5pt}
\else
\fi
The objective of this paper is to design an ultra-reliable transmission scheme for extreme low-latency applications which uses minimal   control signaling for scheduling. To this end, we aim to exploit full spatial and multi-user diversity potential of the network in favor of system reliability and spectral efficiency. Our contributions are summarized as follows.
\if\mycmd 1
\vspace{-5pt}
\else
\fi
\begin{itemize}
\item 	We identify and analyze different sources of diversity gain for ultra-reliable wireless communications in an industrial wireless control network. A network with multiple fully-connected \glspl{ap} is assumed where the \glspl{ap} coordinate their transmissions similar to \gls{comp}. We formulate the achievable multi-user and multi-antenna diversity gain in the low-latency regime,  and propose a new scheme for  exploiting those in favor of reliability and efficiency.
\item 	A new ultra-reliable transmission  scheme dubbed \gls{andcoop} that exploits  different sources of diversity in the network is introduced. The proposed scheme uses the approximate knowledge of \gls{csi} to categorizes the devices into two groups, namely, group of devices with \emph{strong} instantaneous channel, and the group of devices with \emph{weak} channel. The two groups are then scheduled  in separate scheduling phases: first, each of the strong devices  receives its \gls{dl} message with a unique transmission rate that is adapted to its instantaneous channel state; next, the second group of devices are scheduled with a fixed  rate through two-hop cooperative transmission where all the devices in both groups can potentially contribute in as \gls{df} relays.
\item 	Reliability performance  of the proposed transmission scheme is analytically  formulated, in order to characterize  the system outage probability. The analysis is then extended to  diversity-multiplexing trade-off, where closed-form formulations for the  achievable  diversity order are derived. We further formulate the optimization problem of allotting time between the two scheduling phases and provide numerical solutions to the optimization problem.
\item 	Comprehensive and detailed system-level simulations are reported to identify guidelines for optimal system design. The proposed protocol is compared against the existing transmission protocols in the literature. The numerical analysis demonstrates significant concurrent improvement in spectral efficiency (approximately 0.5 \gls{bpcu} per \gls{ap} antenna) and reliability. Alternatively, under fixed spectral efficiency setup, the proposed algorithm acheives the desired reliability at significantly smaller transmit power (around 15 dB improvement compared to the existing schemes), while utilizing around $40\%$ less  relay nodes' energy, which in turn reduces the interference footprint. Moreover, the impact of \gls{csi} estimation error is carefully studied, suggesting that the proposed \gls{andcoop} transmission scheme consistently reduces the impact of such error on system reliability, thanks to the strategy of grouping devices according to channel quality. We identify significant potential in cost reduction for the future private industrial wireless control network, thanks to the improved operation efficiency using the proposed \gls{andcoop} scheme.
\end{itemize}

\if\mycmd 1
\vspace{-15pt}
\else
\fi
\subsection{Organization of the Paper}
\if\mycmd 1
\vspace{-5pt}
\else
\fi

The sequence of this paper is  as follows: in \secref{Sec:ProblemSetup} we present the problem description and the assumed network setup; further, we provide motivations for designing a new ultra-reliable transmission scheme; in \secref{Sec:Model} the proposed channel-aware \gls{urllc} solution is presented and analyzed for outage probability and diversity order; \secref{Sec:Results} presents and discusses the numerical analysis of the proposed scheme; and finally, \secref{Sec:Conclusion} covers the concluding remarks.
%\vspace{-5pt}

\section{Problem Setup}
\label{Sec:ProblemSetup}

%---------------------------
\begin{figure}[t]
\begin{center}
\psfrag{u}[c][c][\scalevalueSS]{wireless link}
\psfrag{v}[c][c][\scalevalueSS]{wired link}
\psfrag{x}[c][c][\scalevalueSS]{wireless connected}
\psfrag{w}[c][c][\scalevalueSS]{devices}
\psfrag{y}[c][c][\scalevalueSS]{access points}
\psfrag{z}[c][c][\scalevalueSS]{controller}
\includegraphics[width=\scalevalueFigWidth\columnwidth,keepaspectratio]{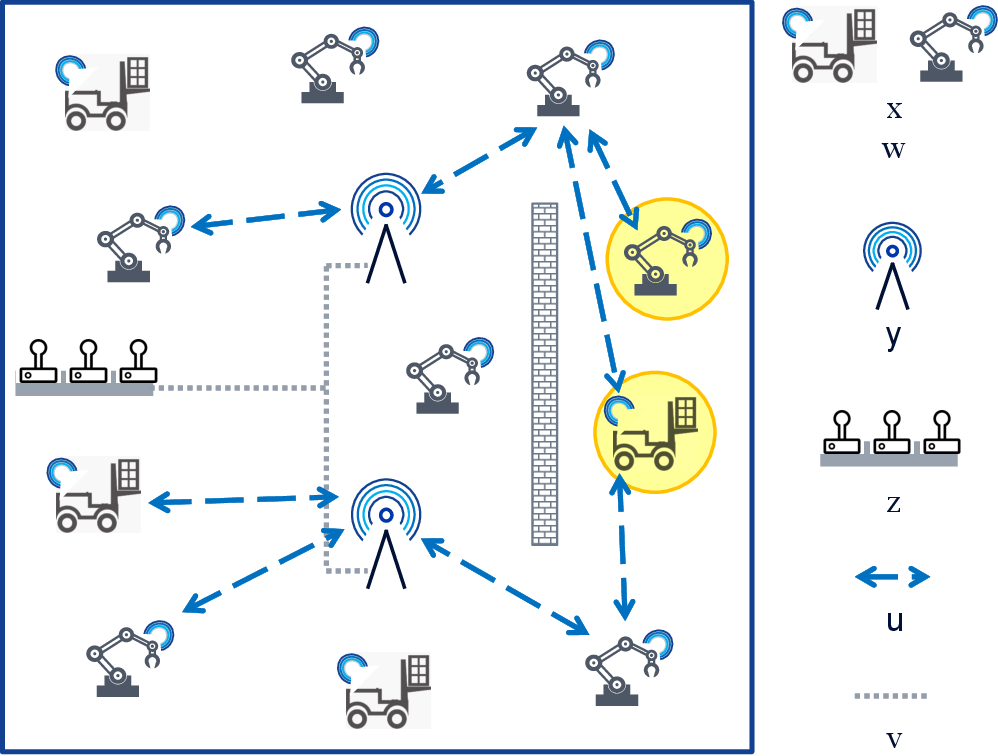}
\caption{Wireless network model for industrial wireless control.  Wireless devices with \emph{weak} channel conditions (highlighted) are identified for  two-hop communication.}
\label{Fig:Model}
\end{center}
\if\mycmd 1
\vspace{-35pt}
\else
%\vspace{-10pt}
\fi
\end{figure}
%----------------

In this section, we first describe the communications system model of interest and  highlight the main system assumptions we use in our analysis. Then, we discuss exploiting diversity gain a multi-user wireless   network under the  paradigm of ultra-reliable communications to motivate our design target in exploiting full diversity potential of the network for industrial wireless control.

\if\mycmd 1
\vspace{-10pt}
\else
\fi

\subsection{System Model}

\paragraph*{Network} $N$ devices are scattered on a factory floor and are wirelessly connected with the controller \glspl{ap}. \figref{Fig:Model} illustrates the considered network where a controller is wired to $M$ fully synchronized \glspl{ap}. This paper considers  a \gls{comp} setting  where all \glspl{ap} are synchronized and they coordinate their transmission attributes for transmission to every device.  All the communicating nodes have a single antenna for transmission and reception. Every device expects an independent   $B$ bytes of  data to be delivered every $T$ seconds over a bandwidth of $W$ Hertz. We use $\eta = \frac{N B}{T W}$, measured in \gls{bpcu}, to denote the overall spectral efficiency of the system. %The system uses \gls{tdd} and operates in cycles of period $T$, split into an uplink {training phase of duration $\mytp$ followed by a downlink data phase of duration $\mytd$, $T=\mytp+\mytd$. 
Let $\msa = \set{1,2,\dots,M}$ denote  the set of \glspl{ap}, where $M = \card{\msa}$ is the number of \glspl{ap}. Similarly, let $\msd = \set{1,2,\dots,N}$ be the set of device IDs where $N = \card{\msd}$. Throughout the paper, we reserve the letter $R$ to denote transmit rate measured in \gls{bps}.

\paragraph*{Channel dynamics} Wireless channels linking every \gls{ap}-device and device-device pair are assumed to undergo independent frequency-flat Rayleigh fading. We note that this assumption is adopted for analytical tractability, although measurement campaigns for industrial environments show frequency-selectivity over wide bandwidth \cite{nist1951,rapp91}. We assume a  setting where each time-cycle experiences a constant channel which fades independently from one cycle to the next. Let $\ha{i}{j}$ and $\ra{i}{j}$ denote the channel fade random variable and  the average received \gls{snr} (which  includes the effect of path loss and is averaged with respect to  fading distribution) of the transmission from  \gls{ap} $i$ to device $j$, where $i \in \msa$ and $j \in \msd$.  We use $\ga{i}{j} = \ra{i}{j}|\ha{i}{j}|^2$ to denote the instantaneous received \gls{snr}. Similarly, let $\hd{k}{j}$, $\rd{k}{j}$ and $\gd{k}{j}$ denote the same variables for the link  from device $k$ to device $j$, where $k, j \in \msd$. Note that  $\ra{i}{j} = \Pa / (W \cdot \sigma_0)$ and $\rd{k}{j} = \Pd / (W \cdot \sigma_0)$, where $\Pa$ and $\Pd$ denote the transmit power of an access point and a device, respectively, and $\varnoise$ denotes \gls{psd} of the \gls{awgn}.

%
%We further use $\bfh_j$ and $\bvr_{j}$ to denote the vector of all those variables for device $j$, i.e., $\bfh_j = [\ha{1}{j} \cdots \ha{M}{j} \; \hd{1}{j} \cdots \hd{N}{j}]$ and $\bvr_{j} = [\ra{1}{j} \cdots \ra{M}{j} \; \rd{1}{j} \cdots \rd{N}{j}]$.
%
%and $\msd_j \subset \msd$ is the set of  devices that transmit to device $j$
% We assume that the \gls{snr} corresponding to every device-BS is fixed and known, and we will  refer to the channel fades simply as \emph{channels}. 

%We assume a quasi-static setting where each packet experiences a single fading value. This setting arises whenever the packets are small enough relative to the fading coherence, in time and frequency, for the fading to remain approximately constant over each packet. The appropriate metric to quantify transmission errors in a quasi-static setting is the outage probability. Additionally, error symbol and packet error probabilities can be identified with the outage probability \cite{lozano2012}, but the latter is more general as it does not depend on the modulation.

\paragraph*{Outage model}

A device is said to be in outage if the transmission rate $R$ exceeds the instantaneous channel capacity, and  is considered successful otherwise. We assume  distributed space-time coding that collects spatial diversity through summation of the received signal powers from all transmitters. Therefore, with $\msc_j$ denoting the set of nodes cooperatively transmitting with rate $R$ to node $j$ over bandwidth $W$, the transmission  fails if
\begin{align}\label{Eq:linkoutage}
W \log \left( 1 + \sum_{i \in \msa \cap \msc_j}\ga{i}{j} + \sum_{k \in \msd \cap \msc_j}\gd{k}{j} \right) < R.
\end{align}
The expression in \eqref{Eq:linkoutage} implicitly assumes long block-length transmission. While admittedly, the transmission  of small packets in line with what is typically expected in \gls{urllc} scenarios and also suitable for the block-fading model challenges the assumption that the packets are long enough for \eqref{Eq:linkoutage}, we note that the impact of such assumptions can be further evaluated by adopting the  finite block-length regime outage models \cite{PolyanskiyFBL:2010}. More importantly, the recent findings in \cite{6802432} suggest  that in a fading channel, the effect of outage dominates the effect of short block-length, so the  outage capacity is in fact a fair substitute  for the finite block-length   fundamental limits. For this reason, the rest of this paper focuses on the outage model in \eqref{Eq:linkoutage}. 

Similar to the previous works in \cite{Rebal:2018TCOM,swamy2015cow,Khosravirad:vtc2019}, we analyze  \emph{system outage probability} as the key performance metric, denoted by $\Pout(.)$ and defined as the  probability that at least one device fails to decode its own message at the end of time-cycle $T$. This is a more appropriate measure for reliability of communication in an industrial wireless control setup compared to e.g., average outage probability across devices. The argument is that the industrial wireless control system may only continue its operation when all devices follow the controller instructions, and the system fails if at least one devices fails. Note that $\Pout(.)$ is a function of the channel random variables, as well as the  parameters of the system. Moreover, such definition complies with  the joint definition of reliability and latency requirements in the context of \gls{urllc}. In essence, a \gls{urllc} system satisfies its requirements only if it can guarantee the desired reliability level \emph{within} the desired latency budget \cite{3gpp38913}. Therefore, in this work, instead of the statistics of the experience  delay, we are interested in the  outage probability within a constrained latency of $T$ seconds.

\paragraph*{Diversity-multiplexing}

It is widely accepted that the  end goal of \gls{urllc} systems is to increase reliability, and therefore, the system outage probability curve is the natural benchmark for performance evaluation. However, the true performance  of such system can only be evaluated if data rate is monitored alongside the reliability. Thanks to the  choice of \emph{system outage probability} (described above) to represent error rate in the system model, the diversity gain can be captured as the slope at which the error rate decays in the high \gls{snr} regime. Moreover, we define the  multiplexing gain $r$ as the ratio at which the payload size per device $B$ increases with transmit power $\Pt$ in log scale, i.e., $B \propto r \log \Pt$. Thus, the dual benefits can be captured by the diversity-multiplexing tradeoff in the high \gls{snr} regime, where similar to \cite{Zhao:2007diversity,Tse2003divmux}, we say that a diversity gain $d(r)$ is achieved at multiplexing gain $r$, if  $\eta = r \log \Pt$ and,
\begin{align}\label{Eq:dive_order_def}
d(r) = - \lim_{\Pt \rightarrow \infty} \frac{\log \Pout(r \log \Pt)}{\log \Pt},
\end{align}
thus capturing the tradeoff between data rate increase (i.e., $r$) and diversity order, in high \gls{snr}.

\begin{table*}[t]
\if\mycmd 1
\scriptsize
\else
\fi
\centering
\caption{Summary of notation.}
\if\mycmd 1
\vspace{-5pt}
\else
\fi
\label{table:notation}
%		\begin{tabular}{ |p{2.5cm}|p{10cm}|} 
%		\begin{tabular}{ |l| p{14cm} | } 
\begin{tabularx}{\textwidth}{|l|X|} 
\hline
\textbf{Notation} 			& \textbf{Description} \\
\hline
$M; \; N$					& Total number of transmitting \glspl{ap}; total number of receiving devices. \\
\hline
$B$						& Payload size per device, in bytes. 					\\
\hline
$W; \; \lambda$ 			& Available bandwidth; the wavelength of radio signals. \\
\hline
$T;\; T_D; \; T_S; \; T_P$ 	& Cycle duration; downlink transmission time duration; symbol period of $1/W$; total pilot transmission period.	\\
\hline
$\Toh; \Tth$				& Duration of the single-hop rate-adaptive transmission phase; duration of the two-hop cooperative transmission phase. \\
\hline
$\beta; \; \alpha$			& Ratio of total downlink transmission time allotted to the single-hop rate-adaptive transmission phase; ratio of the two-hop cooperative transmission phase allotted to broadcasting. \\
\hline
$L$							& Number of uplink pilot symbols per device. 						\\
\hline
$\ha{i}{j}; \; \ga{i}{j}; \; \ra{i}{j}$ 		&  Channel (fading);  received \gls{snr}; average received \gls{snr}, between the $i$th \gls{ap} and $j$th device.  		\\
\hline
$\hd{k}{j}; \; \gd{k}{j}; \; \rd{k}{j}$ 		& Channel (fading); received \gls{snr}; average received \gls{snr}, between the $k$th and $j$th devices.  		\\
\hline
$\hat{h}; \;\esterror$ 		& 	Estimated channel fade; channel estimation error. \\
\hline
$\aR_{j}; \; \ahR_j$ 		& Achievable transmission rate for device $j$; estimated achievable transmission rate for device $j$. \\
\hline
$\msd; \; \msoh; \; \msth$	& Set of all devices; set of devices scheduled over single-hop rate-adaptive transmission phase; set of devices scheduled over two-hop cooperative transmission phase. \\
\hline
$\koh; \; \kth$ 			& $\card{\msoh}$; $\card{\msth}$. \\
\hline
$\varnoise; \; \sigma_e$	& \gls{awgn} noise power; channel estimation error power.\\
\hline
$\Pout(\cdot)$		& System outage probability. \\
\hline
$\Pt; \; \Pa; \; \Pd$		& Transmit power; transmit power of an access point; transmit power of a device. \\
\hline
$R; \; \eta$ 				& Transmission rate in \gls{bps}; spectral efficiency in \gls{bpcu}. \\
\hline
$\Roh{j}$ 					& Transmission rate for device $j \in \msoh$ in the single-hop rate-adaptive transmission phase. \\
\hline
$\Rth{b}; \; \Rth{r}$ 		& Transmission rate of broadcast; transmission rate of relaying, in the two-hop cooperative phase. \\
\hline
\end{tabularx}
\if\mycmd 1
\vspace{-25pt}
\else
%\vspace{-10pt}
\fi
\end{table*}

\paragraph*{Channel estimation} We assume instantaneous \gls{csi} of the \gls{ap}-device pairs are present at the controller in the form of link strength $\ga{i}{j}$'s, which doesn't require the knowledge of channel phase. This can be for instance provided by frequent transmission of uplink pilot sequences by the devices similar to \gls{srs} in \gls{lte}. % or by monitoring \gls{dmrs} that are sent along with \gls{ul} packet transmissions for coherent detection. 
Each \gls{ap}  estimates its channel from all the devices, using those pilot sequences. The \gls{csi} will be used to identify groups of devices with \emph{strong} and \emph{weak} channel conditions and to adapt the transmission rate for the former group. The variance of channel estimation error can be arbitrarily minimized by increasing the number of pilot sequences and transmit power of the pilots \cite{hassibi03,yoo06,Rebal:2018}. 

Assuming that the channel \gls{snr}, $\rho$, is known, we use $\hat{h}$ and $\hat{g}$  to denote the estimated  channel fade and estimated \gls{snr}, respectively. We further use  $L$ to denote the length of the pilot training sequence for each device, with duration of $T_P = L \cdot T_S$ seconds over orthogonal time-frequency resources, where $T_S = 1/W$ is the symbol period. The total overhead cost of pilot transmission then equals to $N \cdot T_P$, and is paid out of the time budget $T$, leaving $T_D = T - N\cdot T_P$ seconds for data transmission. Using recursive \gls{mmse} channel estimation \cite{hassibi03,yoo06}, the true Rayleigh fade $h$ can be written as $h=\hat{h}+\esterror$, where $\esterror\sim\mathcal{CN}(0,\sigma_e(L))$, $\hat{h}\sim\mathcal{CN}(0,1-\sigma_e(L))$, and
\begin{align}\label{eq_mmse}
\sigma_e(L)=\frac{1}{1+L\cdot\rho}.
\end{align}

With respect to channel estimation, and for completeness of the investigation, we adopt two scenarios in this paper; namely,  genie-aided \gls{pcsi}, i.e., where fade is perfectly estimated as $\hat{h}=h$, at the cost of zero pilot overhead $L = 0$, leaving $T_D = T$; and the case of \gls{icsi}, where channel estimation error is a function of the pilot training sequence length $L$ based on \eqref{eq_mmse}, resulting in $T_D = T - N \cdot L \cdot T_S$.

%The two scenarios above are used to observe the performance loss of the proposed scheme under \gls{icsi}. In the proposed scheme, the transmission rate is adapted to instantaneous channel only for the devices with strongest channels, which in fact  face the smallest channel estimation error. Therefore, we argue that  with a small  total overhead of pilot sequence transmission and conservative choice of \gls{mcs}, the effect of channel estimation error can be marginalized.

\paragraph*{Notations} Throughout the paper, we use the notations listed in \tabref{table:notation}.

\if\mycmd 1
\vspace{-10pt}
\else
\fi
\subsection{On the Role of Multi-User Diversity in Low-Latency Regime}

In a \emph{large} network with multiple users, each fading independently, there is likely to be a user whose channel is near its peak, at any time.  This can be utilized to maximize the long term total throughput  by use of  \gls{csi} feedback and always serving the user with the strongest channel \cite{Knopp:1995,Grossglauser:2002} hence, exploiting \emph{multi-user diversity} gain. %The nature-given channel fluctuations are therefore exploited  by hitting the peaks of the channels when scheduling for multi-user networks.
 Similarly, for a given spectral efficiency, the per-user reliability of  transmission can be maximized by choosing the user with strongest channel at any time. Therefore, with loose latency requirement, multi-user diversity gain is a natural source of reliability and efficiency. However, in low-latency regime, where tolerated latency is smaller or equal to the channel coherence time, it is likely to have one or few  users whose channels are poor, at any time. It is therefore challenging  to exploit multi-user diversity while guaranteeing timely reliability to multiple users with asymmetric channel statistics.

To further analyze the  diversity gain in low-latency regime, let's assume the network setup described earlier in this section with $M = 1$, where all the channel gain $\ga{i}{j}$'s are perfectly known and thus, the controller can precisely determine the achievable rate $\aR_{j}$ for device $j$, and the  \gls{ap} targets an average spectral efficiency  of $\frac{N B}{T W}$ in each time cycle $T$, with equal packet size for every scheduled device. Let's consider the case where the scheduler has the complete freedom to choose any nonempty subset of the $N$ devices in each time cycle.  We model the average latency based on $K$, the number of users that are scheduled within a given time cycle. The average experienced latency by a device to be scheduled can be shown to be $\frac{(N+K)}{2K}T$\footnote{Derived by averaging across all devices, knowing that the first $K$ devices experience latency of $T$, while the last $K$ devices to be scheduled in a  round experience latency of $NT/K$.}. For example under round-robin scheduling, by scheduling all devices in every time cycle $T$, i.e., $K = N$, the average latency is $T$, and by scheduling only one device in each time cycle the average latency is increased to $\frac{(N+1)}{2}T$. The scheduler is thus rewarded with reduced average latency, for scheduling every additional device out of the $N$. The cost of scheduling an additional device is going to be a loss in the collected multi-user diversity order. The following proposition addresses the trade-off between average latency and the collected multi-user diversity gain, assuming \gls{iid} Rayleigh fading on every link. %Therefore, if the \gls{ap} transmits to  $K$ out of $N$ devices it can utilize  $K \frac{T}{N}$ of the time, where for $K = N$, meaning transmission  to all devices, it can utilized the total $T$ time duration. Thus, with packet size of $B$ bytes per device .
\if\mycmd 1
\vspace{-5pt}
\else
\fi
\begin{proposition}
\label{Prop:motiv1}
The maximum diversity order exploited at zero multiplexing point by the scheduler described above for scheduling exactly $K$ users in each time cycle is $N - K + 1$ for the case of $M=1$, and $M(N-K+1)$ for general $M$.
\end{proposition}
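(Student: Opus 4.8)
The plan is to reduce the statement to a high-$\sfsnr$ tail estimate of a binomial random variable that counts the devices whose instantaneous combined channel falls below a \emph{fixed} outage threshold. First I would observe that at the zero-multiplexing point ($r=0$) the payload $B$, and hence the single-hop transmission rate $R$, does not scale with $\Pt$; consequently the per-device outage threshold $\gamma_{\mathrm{th}}$ obtained by inverting \eqref{Eq:linkoutage} is a constant independent of $\Pt$. In the single-hop ($\msc_j = \msa$) setting, device $j$ receives the coherently combined quantity $Y_j = \sum_{i \in \msa}\ga{i}{j}$ and succeeds iff $Y_j \geq \gamma_{\mathrm{th}}$.

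The crucial observation is that, among all rules that select exactly $K$ devices as a function of the (perfectly known) gains, the rule selecting the $K$ devices with the largest $Y_j$ is optimal \emph{pointwise}, i.e. for every channel realization: if at least $K$ of the $Y_j$ exceed $\gamma_{\mathrm{th}}$ this rule incurs no failure, while if fewer than $K$ do, then \emph{no} size-$K$ selection can avoid a failure, so the top-$K$ outage event is a subset of every other rule's outage event. Hence the minimum system outage for scheduling exactly $K$ devices is $\Pout = \prob{X \geq N-K+1}$, where $X = \card{\{\,j \in \msd : Y_j < \gamma_{\mathrm{th}}\,\}}$ counts the sub-threshold devices. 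This simultaneously establishes achievability and excludes any better scheduler, which is what ``maximum diversity order exploited'' demands.

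Next I would quantify the per-device outage probability $p := \prob{Y_j < \gamma_{\mathrm{th}}}$ in the high-$\sfsnr$ regime. Since each $\ga{i}{j} = \ra{i}{j}|\ha{i}{j}|^2$ is exponential with mean $\ra{i}{j} \propto \Pt$, for $M=1$ a first-order expansion of the exponential $\cdf$ gives $p \sim \gamma_{\mathrm{th}}/\ra{i}{j} \propto \Pt^{-1}$. For general $M$, $Y_j$ is a sum of $M$ independent exponentials, whose $\cdf$ near the origin behaves like $\tfrac{\gamma_{\mathrm{th}}^{M}}{M!}\prod_{i}(\ra{i}{j})^{-1}$, so $p \propto \Pt^{-M}$; the coherent combining across $M$ \glspl{ap} thus yields per-device diversity $M$ irrespective of the (possibly unequal) average link $\sfsnr$s. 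Because the $Y_j$ are independent across $j$, $X$ is a sum of $N$ independent Bernoulli variables each of parameter $\Theta(\Pt^{-M})$, and the tail $\prob{X \geq N-K+1}$ is dominated in high $\sfsnr$ by the leading term (exactly $N-K+1$ devices below threshold), which scales as $\Pt^{-M(N-K+1)}$. Substituting into \eqref{Eq:dive_order_def} gives $d(0)=M(N-K+1)$, recovering $N-K+1$ for $M=1$.

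I expect the technical care to lie in two routine leading-order expansions: confirming the $\Pt^{-M}$ scaling of the sum-of-exponentials $\cdf$ near the origin when the $\ra{i}{j}$ differ across $i$, and verifying that the single binomial term of order $\Pt^{-M(N-K+1)}$ dominates the remaining terms (which decay strictly faster). Neither is a genuine obstacle; the conceptual crux is the pointwise optimality of top-$K$ selection, since it is exactly this step that converts the opportunistic-scheduling question into the clean binomial tail estimate from which the diversity order reads off.
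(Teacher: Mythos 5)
Your proposal is correct, and its computational core --- the tail of the $K$-th order statistic of the combined AP--device SNRs, i.e.\ a binomial count of sub-threshold devices with per-device probability $\Theta(\Pt^{-M})$, whose dominant term scales as $\Pt^{-M(N-K+1)}$ --- is exactly the computation in the paper's Appendix~A. Where you diverge is in the modelling of the failure event. The scheduler described in the paper adapts each scheduled device's rate to its instantaneous channel, so under perfect CSI no individual device is ever in outage; the system fails only through \emph{time overflow}, $\sum_{j=1}^{K} B'/\aR_{j} > T$ with $B' = NB/K$. You instead posit a fixed per-device threshold $\gamma_{\mathrm{th}}$ (equivalently, equal time slots at a fixed rate), which turns the failure event directly into ``at least $N-K+1$ of the $Y_j$ fall below a constant.'' The paper bridges this gap with the sandwich $1/\aR_{K} \leq \sum_{j=1}^{K} 1/\aR_{j} \leq K/\aR_{K}$ (indices ordered by decreasing rate), which traps the time-overflow probability between two fixed-threshold order-statistic events whose thresholds differ only by the constant factor $K$ and therefore share the diversity order $M(N-K+1)$ at zero multiplexing; you should add this one line so that your argument applies to the rate-adaptive scheduler of the proposition rather than to a fixed-rate surrogate. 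On the other side of the ledger, your pointwise-optimality argument for top-$K$ selection (the top-$K$ outage event is contained in every other size-$K$ rule's outage event) is a genuine addition: the paper simply asserts that the scheduler picks the $K$ best devices, whereas the word ``maximum'' in the statement calls for exactly the converse you supply, and the same pointwise argument carries over to the time-overflow criterion since top-$K$ minimizes $\sum_{j} 1/\aR_{j}$ realization by realization. Your treatment of unequal average link SNRs across the $M$ APs is also slightly more general than the paper's proof, which assumes a common nominal SNR $\rho$ on all links.
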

\if\mycmd 1
\vspace{-10pt}
\else
\fi
\begin{proof}
See \appref{App:motiv}.
\end{proof}
In other words, with strict latency requirements, i.e., all the $N$ devices must be scheduled over one coherence time, which is the case for an industrial wireless control network as described earlier in this section, then the system experiences diversity order of $1$ (for the case of M=1), meaning that no multi-user diversity gain is exploited. On the other hand, the maximum multi-user diversity order of $N$,  is only exploited when the latency requirement is maximally loose and the controller gets to schedule the best device in each time cycle. Midway, the controller can trade off latency with  diversity order by  transmitting to $K \leq N$ devices over each coherence time, thus gaining the diversity order of $N - K + 1$. In fact, as we  see in the \secref{Sec:Model}, the transmission protocol proposed in this paper  benefits from such  trade-off in exploiting the multi-user diversity gain by scheduling a subset of the devices with the highest channel strengths. For the remaining devices, the protocol  seeks for the gain of cooperation among nodes, which is the topic of the following discussion.

\if\mycmd 1
%---------------------------
\begin{figure*}[t]
\begin{center}
\begin{subfigure}{.25\textwidth}
\begin{center}
\begin{psfrags}
\psfrag{xlabel}[c][c][\scalevalueSSS]{X (m)}
\psfrag{ylabel}[c][c][\scalevalueSSS]{Y (m)}
\psfrag{title}[c][c][\scalevalueSSS]{\gls{snr} (dB)}
\psfrag{xlabel2}[c][c][\scalevalueSSS]{X (m)}
\psfrag{ylabel2}[c][c][\scalevalueSSS]{Y (m)}
\psfrag{title2}[c][c][\scalevalueSSS]{coverage map}
\includegraphics[width=0.6\columnwidth,keepaspectratio]{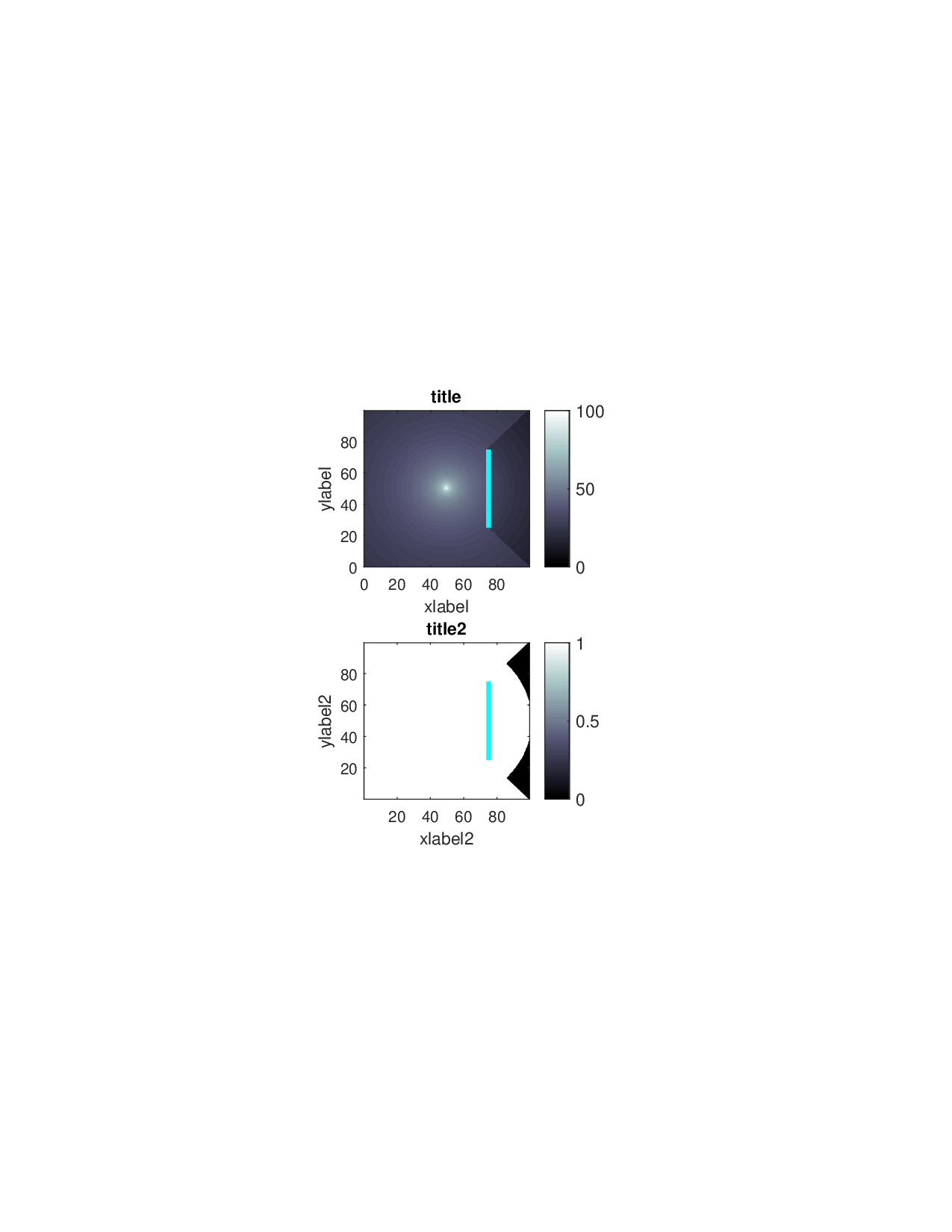}
\end{psfrags}
\caption{Coverage of single-hop transmission.}
\label{fig:coverage_1h}
\end{center}
\end{subfigure}%
\begin{subfigure}{.55\textwidth}
\begin{subfigure}{.45\textwidth}
\begin{center}
\begin{psfrags}
\psfrag{xlabel}[c][c][\scalevalueSSS]{X (m)}
\psfrag{ylabel}[c][c][\scalevalueSSS]{Y (m)}
\psfrag{title}[c][c][\scalevalueSSS]{\gls{snr} (dB)}
\psfrag{xlabel2}[c][c][\scalevalueSSS]{X (m)}
\psfrag{ylabel2}[c][c][\scalevalueSSS]{Y (m)}
\psfrag{title2}[c][c][\scalevalueSSS]{coverage map}
\psfrag{xlabel3}[c][c][\scalevalueSSS]{X (m)}
\psfrag{ylabel3}[c][c][\scalevalueSSS]{Y (m)}
\psfrag{title3}[c][c][\scalevalueSSS]{\gls{snr} (dB)}
\psfrag{xlabel4}[c][c][\scalevalueSSS]{X (m)}
\psfrag{ylabel4}[c][c][\scalevalueSSS]{Y (m)}
\psfrag{title4}[c][c][\scalevalueSSS]{coverage map}
\includegraphics[width=.6\columnwidth,keepaspectratio]{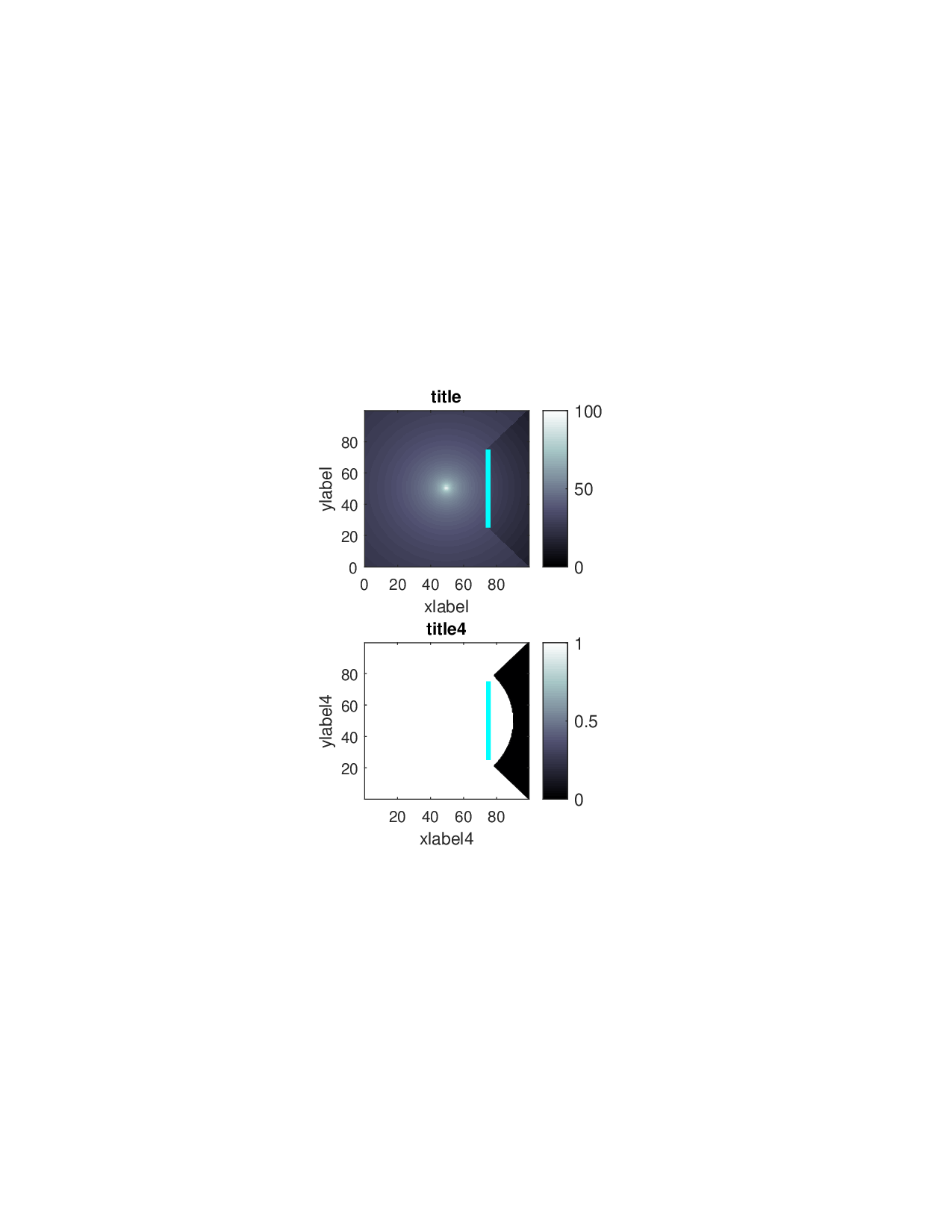}
\end{psfrags}
\end{center}
\end{subfigure}%
\begin{subfigure}{.45\textwidth}
\begin{center}
\begin{psfrags}
\psfrag{xlabel1}[c][c][\scalevalueSSS]{X (m)}
\psfrag{ylabel1}[c][c][\scalevalueSSS]{Y (m)}
\psfrag{title1}[c][c][\scalevalueSSS]{\gls{snr} (dB)}
\psfrag{xlabel2}[c][c][\scalevalueSSS]{X (m)}
\psfrag{ylabel2}[c][c][\scalevalueSSS]{Y (m)}
\psfrag{title2}[c][c][\scalevalueSSS]{\gls{snr} (dB)}
\psfrag{xlabel3}[c][c][\scalevalueSSS]{X (m)}
\psfrag{ylabel3}[c][c][\scalevalueSSS]{Y (m)}
\psfrag{title3}[c][c][\scalevalueSSS]{\gls{snr} (dB)}
\psfrag{xlabel4}[c][c][\scalevalueSSS]{X (m)}
\psfrag{ylabel4}[c][c][\scalevalueSSS]{Y (m)}
\psfrag{title4}[c][c][\scalevalueSSS]{coverage map}
\includegraphics[width=.6\columnwidth,keepaspectratio]{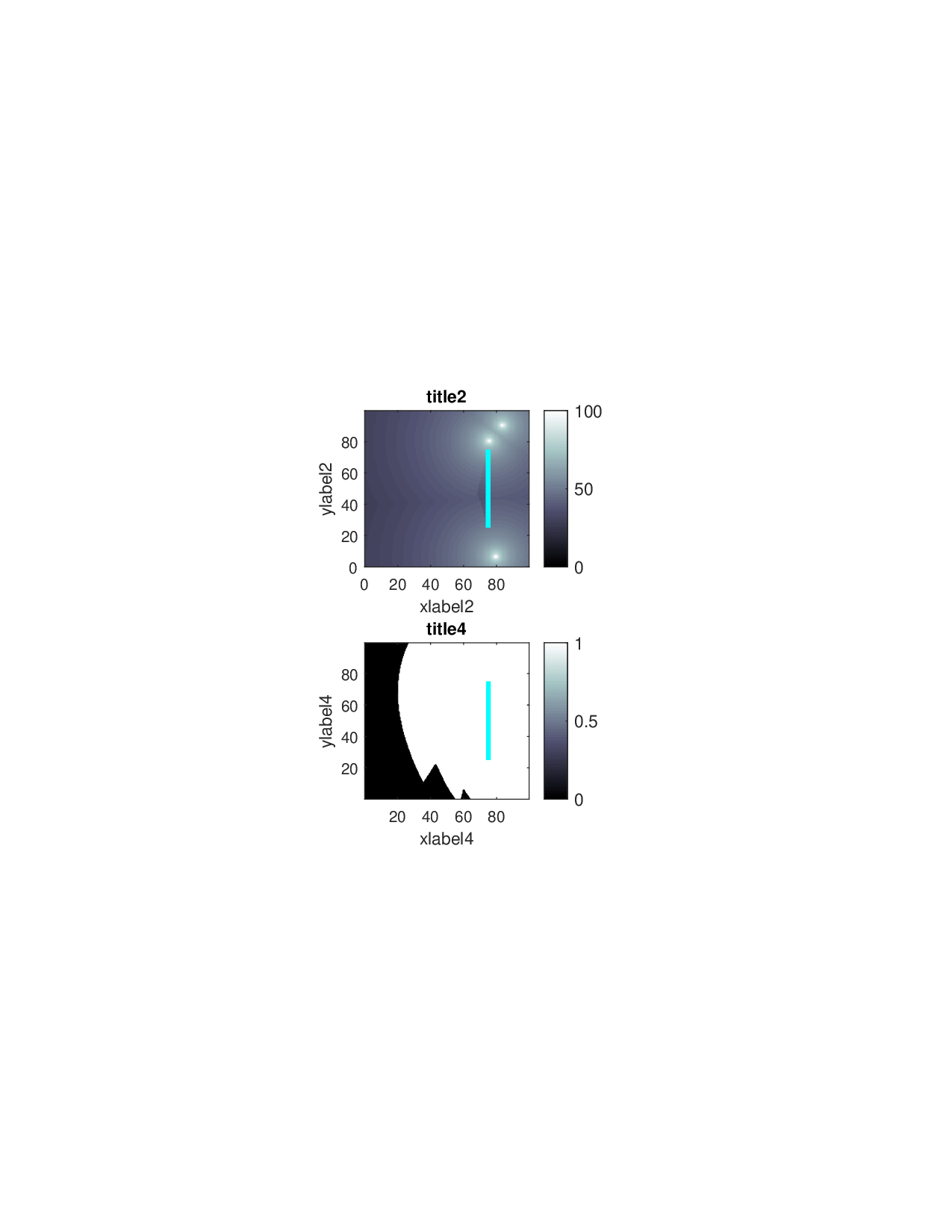}
\end{psfrags}
\end{center}
\end{subfigure}%
\caption{Coverage of two-hop transmission with cooperative relaying.}
\label{fig:coverage_2h}
\end{subfigure}%
\end{center}
\vspace{-15pt}
\caption{Map of \gls{snr} and coverage of $10^{-9}$ outage probability for spectral efficiency of 1 \gls{bpcu}, in an area of $100\times100$ $m^2$, in presence of a blockage (cyan color wall); the bright point in the center locates the \gls{ap} while the rest of the bright points locate the relay devices; (a)  \gls{ap} with 4 antennas in the center provides $95.5\%$ when transmitting over the full time cycle $T$, i.e., transmission rate of 1 \gls{bpcu}; (b) on the left,  $87.95\%$ coverage from an \gls{ap} with 4 antennas  at transmission rate of 2 \gls{bpcu}, over first $T/2$ duration; on the right, $72.1\%$ coverage using three single antenna relay devices at transmission rate of 2 \gls{bpcu}, over the second $T/2$ duration; $100\%$ combined coverage of the two phases.}
\label{fig:coverage_map}
\if\mycmd 1
\vspace{-20pt}
\else
\vspace{-10pt}
\fi
\end{figure*}
%----------------
\else
%---------------------------
\begin{figure*}[t]
\begin{center}
\begin{subfigure}{.35\textwidth}
\begin{center}
\begin{psfrags}
\psfrag{xlabel}[c][c][\scalevalueSSS]{X (m)}
\psfrag{ylabel}[c][c][\scalevalueSSS]{Y (m)}
\psfrag{title}[c][c][\scalevalueSSS]{\gls{snr} (dB)}
\psfrag{xlabel2}[c][c][\scalevalueSSS]{X (m)}
\psfrag{ylabel2}[c][c][\scalevalueSSS]{Y (m)}
\psfrag{title2}[c][c][\scalevalueSSS]{coverage map}
\includegraphics[width=0.6\columnwidth,keepaspectratio]{snr_map_1hop.eps}
\end{psfrags}
\caption{Coverage of single-hop transmission.}
\label{fig:coverage_1h}
\end{center}
\end{subfigure}%
\begin{subfigure}{.6\textwidth}
\begin{subfigure}{.5\textwidth}
\begin{center}
\begin{psfrags}
\psfrag{xlabel}[c][c][\scalevalueSSS]{X (m)}
\psfrag{ylabel}[c][c][\scalevalueSSS]{Y (m)}
\psfrag{title}[c][c][\scalevalueSSS]{\gls{snr} (dB)}
\psfrag{xlabel2}[c][c][\scalevalueSSS]{X (m)}
\psfrag{ylabel2}[c][c][\scalevalueSSS]{Y (m)}
\psfrag{title2}[c][c][\scalevalueSSS]{coverage map}
\psfrag{xlabel3}[c][c][\scalevalueSSS]{X (m)}
\psfrag{ylabel3}[c][c][\scalevalueSSS]{Y (m)}
\psfrag{title3}[c][c][\scalevalueSSS]{\gls{snr} (dB)}
\psfrag{xlabel4}[c][c][\scalevalueSSS]{X (m)}
\psfrag{ylabel4}[c][c][\scalevalueSSS]{Y (m)}
\psfrag{title4}[c][c][\scalevalueSSS]{coverage map}
\includegraphics[width=.7\columnwidth,keepaspectratio]{snr_map_2hop_2.eps}
\end{psfrags}
\end{center}
\end{subfigure}%
\begin{subfigure}{.5\textwidth}
\begin{center}
\begin{psfrags}
\psfrag{xlabel1}[c][c][\scalevalueSSS]{X (m)}
\psfrag{ylabel1}[c][c][\scalevalueSSS]{Y (m)}
\psfrag{title1}[c][c][\scalevalueSSS]{\gls{snr} (dB)}
\psfrag{xlabel2}[c][c][\scalevalueSSS]{X (m)}
\psfrag{ylabel2}[c][c][\scalevalueSSS]{Y (m)}
\psfrag{title2}[c][c][\scalevalueSSS]{\gls{snr} (dB)}
\psfrag{xlabel3}[c][c][\scalevalueSSS]{X (m)}
\psfrag{ylabel3}[c][c][\scalevalueSSS]{Y (m)}
\psfrag{title3}[c][c][\scalevalueSSS]{\gls{snr} (dB)}
\psfrag{xlabel4}[c][c][\scalevalueSSS]{X (m)}
\psfrag{ylabel4}[c][c][\scalevalueSSS]{Y (m)}
\psfrag{title4}[c][c][\scalevalueSSS]{coverage map}
\includegraphics[width=.7\columnwidth,keepaspectratio]{snr_map_2hop_1.eps}
\end{psfrags}
\end{center}
\end{subfigure}%
\caption{Coverage of two-hop transmission with cooperative relaying.}
\label{fig:coverage_2h}
\end{subfigure}%
\end{center}
\caption{Map of \gls{snr} and coverage of $10^{-9}$ outage probability for spectral efficiency of 1 \gls{bpcu}, in an area of $100\times100$ $m^2$, in presence of a blockage (cyan color wall); the bright point in the center locates the \gls{ap} while the rest of the bright points locate the relay devices; (a)  \gls{ap} with 4 antennas in the center provides $95.5\%$ when transmitting over the full time cycle $T$, i.e., transmission rate of 1 \gls{bpcu}; (b) on the left,  $87.95\%$ coverage from an \gls{ap} with 4 antennas  at transmission rate of 2 \gls{bpcu}, over first $T/2$ duration; on the right, $72.1\%$ coverage using three single antenna relay devices at transmission rate of 2 \gls{bpcu}, over the second $T/2$ duration; $100\%$ combined coverage of the two phases.}
\label{fig:coverage_map}
\if\mycmd 1
\vspace{-20pt}
\else
\vspace{-10pt}
\fi
\end{figure*}
%----------------
\fi

\if\mycmd 1
\vspace{-10pt}
\else
\fi
\subsection{Motivation for Multi-Hop Transmission}

Cooperative relaying has been studied recently in several works as an enabler of \gls{urllc}, e.g., see \cite{swamy2015cow,swamy:2018relsel,Arvin:2019,Kaiming:2019ratespliting}, leveraging on the spatial diversity gain from engaging multiple relay devices, which increases robustness to fading variations. The focus of the design in cooperative relaying scheme in \cite{swamy2015cow} has been to mitigate the effect of small-scale fading. Such approach is highly beneficial in  absence of \gls{mimo} techniques. However, with increasing deployment of  massive \gls{mimo}, in practice,  those benefits  can be largely undermined. Particularly, cellular communication technologies typically rely on channel hardening effect of \gls{mimo} to mitigate the effect of small-scale fading \cite{Hochwald2004channelhardening}. 

Nevertheless, multi-hop relaying has historically been considered as a means of extending coverage in various wireless technologies, such as \gls{wimax}, \gls{lte}, and  recenlty in \gls{5g} \gls{nr}, e.g., see \cite{kim2008relaycoverage,Gui2018relaycoverage,Lang2009relaycoverage,DawidKoziol:2017,3gpp22866}. The coverage problem caused by static or mobile blockages is in fact a challenging issue in industrial environments. Blockage can generally impede the link \gls{snr} by obstructing the \gls{los}. More significantly, when blockage size is several times larger than the electromagnetic  wavelength, the diffraction around the obstacle becomes weaker, making the impact of blockage stronger. Consequently, blockages becomes a more severe challenge in higher frequencies, or the so called \gls{mmwave} \cite{5262296}. Moreover, the dynamic nature of factory floor with large number of static and moving objects makes it difficult to provide every time/everywhere wireless link availability. This further motivates the use of cooperative relaying to deal with temporary and/or zonal loss of coverage.

To this point, the example in \figref{fig:coverage_map} illustrates the \gls{snr} and coverage of $10^{-9}$ outage probability for spectral efficiency of 1 \gls{bpcu} over an area of $100\times100$ $m^2$. A simple static blockage is positioned on the right side of the area. First, in \figref{fig:coverage_1h}, it is shown that using the total time budget $T$ (i.e., transmitting at  1 \gls{bpcu}), direct transmission by the \gls{ap} with 4 antennas provides $95.5\%$ coverage. This means that only around $95.5\%$ of the points over the area can achieve the  required target outage probability of $10^{-9}$ with a single-hop transmission. Then, in \figref{fig:coverage_2h}, the time budget $T$ is divided by two, and the transmission is done twice at the doubled rate of 2 \gls{bpcu}. Therefore, the coverage of the direct transmission from access point reduces to $87.95\%$ (left hand side of \figref{fig:coverage_2h}), due to the increase in transmission rate. However, in the second $T/2$ portion of the time, three  devices randomly positioned around the blockage, and each  having a single transmit antenna,  relay the transmission from the \gls{ap}. The relaying phase at  2 \gls{bpcu}, also provides a complementary coverage of $72.1\%$, mostly around the area affected by the blockage. But more interestingly, the overall coverage of the two phases reaches the desirable $100\%$.

Knowing that the blockage affects the coverage around the right hand side  of the square area, such coverage enhancement from two-hop relaying can in practice be directed  towards devices in the same area. This in fact increases the efficiency of ultra-reliable communications, by deploying the cooperative relaying in an \emph{on-demand} fashion, only for the devices with coverage issues. 

Improving coverage for \gls{iiot} applications can alternatively be done by densification of the \glspl{ap}. However, over-provisioning is not  efficient in terms of the cost of the network deployment. It should be noted that the intention of the example in \figref{fig:coverage_map} is not to claim that two-hop cooperative  relaying is always better than single-hop transmission. In fact, as we will discuss in the following sections, with a smart and dynamic algorithm  to use the cooperative relaying gain in an \emph{on-demand} manner, the overall spectral efficiency can be improved compared to the case where all the devices are served with two-hop transmission.

\if\mycmd 1
\vspace{-10pt}
\else
\fi
\section{Proposed Channel-Aware Transmission Protocol}
\label{Sec:Model}

In this section we first introduce the proposed transmission protocol. Then outage and diversity order analysis of the protocol are presented.

\if\mycmd 1
\vspace{-10pt}
\else
\fi
\subsection{Transmission Protocol}
\label{Sec:transmissionprotocol}

The total \gls{dl} transmission time $T_D$ is divided into two parts, using a partitioning factor $0 \leq \beta \leq 1$, where $\Toh = \beta T_D$ is used for rate-adaptive single-hop transmission of independent messages to devices with strong instantaneous channel to the controller \glspl{ap} in a \gls{tdma} fashion. The remaining $\Tth = (1-\beta)T_D$ is used for  two-hop cooperative transmission phase to the rest of the devices, where the messages of all remaining devices are aggregated and transmitted in two hops. In addition, the controller acquires the knowledge of $\gha{i}{j}$'s for the \gls{ap}-device pairs via channel estimation using pilots of length $L$. Based on this, devices are put in an order according to the instantaneous transmission rate that they can receive at from the \glspl{ap}. The achievable transmission rate for device $j$, $\aR_{j}$, can be computed  using \eqref{Eq:linkoutage}. The controller estimates the achievable rate, using
\begin{align}\label{Eq:achievablerate}
\ahR_{j} = W \log \left( 1 + \sum_{i \in \msa}\gha{i}{j}  \right).
\end{align}
 %Without loss of generality, lets assume that $\aR_1 \geq \aR_2 \geq \ldots \geq \aR_N$.
Note that, for \gls{icsi}, i.e., $\sigma_{e} > 0$,  $\prob{\ahR_{j} > \aR_{j}} = 0.5$. In other words, regardless of the channel  estimation  precision, the estimated transmission rate is above the achievable rate $50\%$ of the time. To curb the impact of channel estimation error in case of \gls{icsi}, we use a rate back-off parameter $0 \leq \rbo \leq 1$ to adjust the transmission rate to $\rbo \cdot \ahR$. The following definition denotes the largest  subset of the devices  that the controller can accommodate with single-hop transmission  over   time $\tau$. 
\begin{align}\label{Eq:strongdevices}
\pmb{S}(\tau, \{\gha{i}{j} \}) = \;  \underset{\mss}{\arg} \; \max_{\mss \subset \msd} \; \; & \card{\mss} \\  \nonumber
 \text{subject to} & \sum_{j \in \mss} \frac{1}{\rbo \cdot \ahR_{j}} \leq \frac{\tau}{B}, \\
 &   \ahR_{k} \leq \ahR_{j}, \; \forall k \in \msd \setminus \mss, \forall j \in \mss \nonumber
\end{align}
Let $\msoh$, where $\msoh \subset \msd$, and $\msth = \msd \setminus \msoh$ denote the random set of the \emph{strong} and \emph{weak} devices, respectively.   Let $\koh = \card{\msoh}$ and $\kth = \card{\msth}$ be the discrete random variables of size of those sets, which follows $\koh + \kth = N$. Let $\Roh{j} = \rbo \cdot \ahR_{j}$ denote the transmission rate for device $j \in \msoh$. We assume that $\beta$ is known by all the nodes in the network. Therefore, upon generating the set $\msoh$ for a given realization of the channels, the controller sends the set of indexes in $\msoh$ and the transmission rates $\Roh{j}$, over a control channel to the strong devices. Such information is necessary for the  devices to be able to follow the scheduling order of  transmission in the single-hop phase. The devices in $\msth$ are then scheduled over a two-hop transmission over the remaining $\Tth$, where their messages are aggregated, and  $\alpha \cdot \Tth$  and $(1-\alpha)\cdot \Tth$ are used for broadcasting and relaying the aggregated messages respectively, with $0 \leq \alpha \leq 1$.

In \figref{fig:scheduling} the time scheduling of the proposed \gls{andcoop} transmission protocol is illustrated\footnote{Note that the time dimension is chosen in here as an example. In practice the division of the time-frequency resources between the single-hop and two-hop phases of the protocol can be in either the time domain, frequency domain, or both.}. In the following the proposed \gls{andcoop} transmission protocol is summarized. We assume that the controller has the knowledge of the appropriate $\beta$ and $\alpha$ design parameters, which are acquired off-line  and are shared with the devices (we discuss the optimization of those parameters in the following subsection). A summary of the proposed algorithm in each time cycle is as follows:
\begin{enumerate}
\item 	Using the knowledge of \gls{ap}-device \gls{csi}, the controller finds the set of devices, $\msoh$, that will be scheduled over single-hop rate-adaptive transmission. This is done according to $\msoh = \pmb{S}(\Toh,  \{\gha{i}{j} \})$ in   \eqref{Eq:strongdevices}, while $\Toh = \beta \cdot T$.  
\item 	The controller adapts transmission rate for each device in $\msoh$ according to their instantaneous channel by setting $\Roh{j} = \rbo \cdot \ahR$, thus allocating $B/\Roh{j}$ seconds of the total time for transmission to device $j$. The controller \glspl{ap} will then perform \gls{comp} transmission of the message for each node $j \in \msoh$ with the adapted rate in a \gls{tdma} fashion.
\item 	\label{step-b} All the $B$-bit messages intended for devices in $\msth = \msd \setminus \msoh$ are aggregated  together.  The controller \glspl{ap} jointly broadcast the aggregated message at rate $\Rth{b} = \frac{B \kth}{\alpha \Tth}$,   over the first $\alpha$ portion of  $\Tth$ time\footnote{In practice, the messages can be concatenated before encoding which will potentially increase coding gain and reduce number of decoding attempts for relay devices.}.
\item 	\label{step-receive} All devices in $\msd$ attempt decoding the broadcast message from previous step. The successful devices to decode will act as relays.
\item 	The  \glspl{ap} broadcast the message from step~\ref{step-b} at rate $\Rth{r} = \frac{B \kth}{(1-\alpha) \Tth}$. The relay devices from step~\ref{step-receive} re-encode with the  same code rate, and cooperate as simultaneous relays.
\end{enumerate}

%---------------------------
\begin{figure}[t]
\begin{center}
\psfrag{a}[c][c][\scalevalueSS]{$\Toh = \beta \cdot T$}
\psfrag{b}[c][c][\scalevalueSS]{$\alpha \cdot \Tth$}
\psfrag{c}[c][c][\scalevalueSS]{$(1-\alpha)\cdot \Tth$}
\psfrag{t}[c][c][\scalevalueSS]{$T$}
\psfrag{u}[c][c][\scalevalueSS]{$\ldots$}
\psfrag{v}[c][c][\scalevalueSS]{$\Rth{r} = \frac{B \kth}{(1-\alpha) \Tth}$}
\psfrag{w}[c][c][\scalevalueSS]{$\Rth{b} = \frac{B \kth}{\alpha \Tth}$}
\psfrag{x}[c][c][\scalevalueSS]{$\Roh{j_{1}}$}
\psfrag{s}[c][c][\scalevalueSS]{$\Roh{j_{2}}$}
\psfrag{z}[c][c][\scalevalueSS]{$\Roh{j_{\koh}}$}
\includegraphics[width=\scalevalueFigWidth\columnwidth,keepaspectratio]{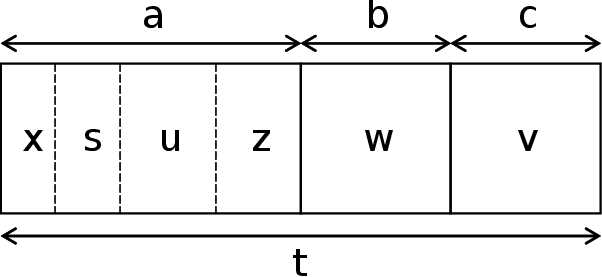}
\caption{Time scheduling illustrated  for the proposed scheme. The transmission rate for each scheduled time slot is identified, assuming $j_1, j_2, \ldots, j_{\koh} \in \msoh$. For the devices in $\msth$, the aggregated message is first broadcast with rate $\Rth{b}$ and then relayed with rate $\Rth{r}$.}
\label{fig:scheduling}
\end{center}
\if\mycmd 1
\vspace{-30pt}
\else
%\vspace{-10pt}
\fi
\end{figure}
%----------------

\if\mycmd 1
\vspace{-10pt}
\else
\fi
\subsection{A Note on Optimization of the Design}

The proposed transmission protocol can be optimized based on the design parameters $L$, $\beta$ and $\rbo$, to achieve the minimum $\Pout$. We perform numerical optimization in two scenarios of \gls{pcsi} and \gls{icsi}. Note that in \gls{pcsi} scenario, we fix $L = 0$ and $\rbo = 1$, only optimizing with respect to $\beta$, for given transmit power and spectral efficiency. To this end, we fix the value of the time division parameter $\beta$ for all realizations of the channel. Therefore, we are able to numerically optimize $\beta$ for a given setup by running the simulation for all values of $\beta$ from a finite set of real numbers uniformly chosen from the $[0, 1]$ interval to derive $\hat{\beta} = \underset{\beta}{\arg \min}   \Pout(\beta)$. 

Optimization of the \gls{icsi} scenario with exhaustive search across $L$, $\beta$ and $\rbo$, is computationally costly. However, as we will show in the next section, parameter $L$  can be fixed with marginal effect on the outage, which can significantly reduces the search for the optimal $L$ , $\beta$ and $\rbo$.

Parameter $\alpha$ is used to partition the time $\Tth$ between  broadcast and relaying hops of the two-hop cooperative transmission. Increasing $\alpha$ results in smaller $\Rth{b}$ which increases chances of decoding for all nodes and results in larger expected number of relay nodes for the second hop. In turn, it will increase $\Rth{r}$ and decrease chance of decoding in the second hop. 
We note that the optimization of $\alpha$ is not a trivial problem however, following the observation in \cite{swamy2015cow}, the average reliability gain from optimizing $\alpha$ with respect to $\alpha = 0.5$ can be marginal, therefore, in the numerical analysis that will follow, we fix $\alpha = 0.5$. % and defer the optimization of $\alpha$ to future work.

\if\mycmd 1
\vspace{-10pt}
\else
\fi
\subsection{Outage Analysis}

We analyze  \emph{system outage probability} of the proposed system, denoted by $\Pout$ and defined as the average probability that at least one device fails to decode its own message at the end of time-cycle $T$.  Let $\Poh(\msoh)$ and $\Pth(\msth)$    denote the  probability of outage for at least one device in, respectively,  the adaptive-rate single-hop phase and the two-hop cooperative phase of the transmission protocol.

%The system outage probability is given by
%\begin{align}
%\Pout = 1 - (1 - \Poh(\Toh) ) \left( 1 - \Pth(\Tth) \right).
%\end{align}
%
%
%Let $\Poh(\Toh,\msoh)$  be the  probability of outage for at least one device $j \in \msoh$ during the adaptive-rate single-hop phase of the transmission  protocol with duration $\Toh$. Thus, 
\paragraph*{Single-hop rate-adaptive transmission} Conditioned on \gls{csi}, we have 
\begin{align}\label{Eq:outage-1hop-icsi}
\Poh(\msoh | \{ \hha{i}{j} \}) =\prob{ \exists j \in \msoh,  \rbo \cdot \ahR > \aR },
\end{align}
where expectation is over channel gains.
%Therefore, from \eqref{Eq:strongdevices} if $\exists j \in \msd, j \notin \msoh$, then \emph{time overflow} occurs which results in the $\Poh$ outage probability, given by
%\begin{align}\label{Eq:outage-1hop-icsi}
%\Poh(\Toh | \{ \hha{i}{j} \}) = \prob{\sum_{j \in \msd} \frac{1}{\rbo \cdot \ahR} > \frac{ \Toh}{B}} 
%\end{align}
For the case of \gls{icsi} with channel estimation error, the outage probability in \eqref{Eq:outage-1hop-icsi} is non-zero as also discussed in \cite[Sec. III-A]{Rebal:2018TCOM}. Characterization of $\Poh$ is not analytically tractable. However, when $\beta = 1$, i.e., all the devices are scheduled with single-hop transmission, outage probability is equivalent to the probability of \emph{time overflow}, i.e., the chance that $\Toh$ is too short given the channel gains to successfully accommodate  all the packets. In this case, analytical bounds on the outage probability can be found  in \cite{Rebal:2018TCOM}, where it is shown that in practical ranges of \gls{snr}, as cell load $N \times B$ grows, the chance of time overflow increases above \gls{urllc} target outage probability and dominates system outage regardless of the precision of channel estimation. 

For the case of \gls{pcsi}, when $\beta < 1$,  all the devices in $\msoh$ pass the success condition  in \eqref{Eq:linkoutage}, resulting in $\Poh(\msoh | \{ \ha{i}{j} \}) = 0$. Therefore, the following is valid for \gls{pcsi} scenario.
\begin{align}\label{Eq:outage-1hop}
\Poh(\msoh | \{ \ha{i}{j} \}) = \left\{\begin{matrix}
0 & 0 \leq \beta <1 \\ 
\prob{\sum_{j \in \msd} \frac{1}{\aR_j} > \frac{ \Toh}{B}} & \beta = 1.
\end{matrix}\right.
\end{align}

%In such case, a multiplicative \emph{scheduling back-off} parameter $0 <s \leq 1$ can be used to arbitrarily reduce $\Poh(\Toh)$ by subjecting the maximization in \eqref{Eq:strongdevices} to $\sum_{j \in \msoh} {1}/{\aR_j} \leq {s \times \Toh}/{B}$ and transmitting at rates $\Roh{j} = s \times \aR_j$.

\paragraph*{Two-hop cooperative transmission}
%Next, let $\Pth(\Tth,\msth)$  be the average probability that at least one device $j \in \msth$ is in outage during the two-hop part of the protocol with duration $\Tth$. 
Conditioned on \gls{csi},   the outage probability of the two-hop phase is given by
%\begin{figure*}[t]
%\normalsize
%\fontsize{9.5}{11.5}\selectfont
\begin{align}
\Pth(\msth \mid \{ \hha{i}{j} \}) =  \prob{ \exists j \in \msth \setminus \msr :   \sum_{k \in \msr}\gd{k}{j}   <  \zeta },
%  \middle|  \log \left( 1 + \sum_{i \in \msa}\ra{i}{j}|\ha{i}{j}|^2  \right) < \frac{\Rth{b}}{W}
\nonumber%\label{Eq:Ptwohop-1}
\end{align}
%\vspace{-10pt}
%\end{figure*}
where $\zeta = 2^{\Rth{r}/W} -\sum_{i \in \msa}\ga{i}{j} -1$, and $\msr$ is the set of relay devices, defined as
\begin{align}
\msr = \{j \in \msd:  \Rth{b} \leq \aR_{j} \}.
\end{align}
\if\mycmd 2
\begin{figure*}[b]
\normalsize
\hrulefill
\begin{align}\label{Eq:Ptwohop}
\Pth\left(\msd \right) = \sum_{\substack{ \forall \mss \subset \msd }} & \prob{\forall j \in \mss : \; W \log \bigg( 1 + \sum_{i \in \msa}\ga{i}{j}  \bigg) \geq \Rth{b} } \cdot \prob{\forall j \in \msd \setminus \mss : \; W \log \bigg( 1 + \sum_{i \in \msa}\ga{i}{j}  \bigg) < \Rth{b} } \\
& \cdot \left( 1 - \prob{\forall j \in \msd \setminus \mss : \; W \log \bigg( 1 + \sum_{i \in \msa}\ga{i}{j} + \sum_{k \in \mss}\gd{k}{j}  \bigg) \geq \Rth{r} \; \middle| \; \forall j \in \msd \setminus \mss : \; W \log \bigg( 1 + \sum_{i \in \msa}\ga{i}{j}  \bigg) < \Rth{b} } \right)
\nonumber
\end{align}
\end{figure*}
Although, averaging $\Pth(\msth \mid \{ \ga{i}{j} \})$ over channel gains is not analytically tractable, for the purpose of analyzing the diversity order gain of the transmission protocol, we are interested in the outage probability when all devices are served in  the two-hop cooperative phase. 

An expression for $\Pth$, conditioned on $\kth = N$ can be written as follows in \eqref{Eq:Ptwohop} at the bottom of the page.
\else
Although, averaging $\Pth(\msth \mid \{ \ga{i}{j} \})$ over channel gains is not analytically tractable, for the purpose of analyzing the diversity order gain of the transmission protocol, we are interested in the outage probability when all devices are served in  the two-hop cooperative phase. 

An expression for $\Pth$, conditioned on $\kth = N$ can be written as follows in \eqref{Eq:Ptwohop}.
\fontsize{9}{11}\selectfont
\begin{align}\label{Eq:Ptwohop}
\Pth\left(\msd \right) = \sum_{\substack{ \forall \mss \subset \msd }} & \prob{\forall j \in \mss : \; W \log \bigg( 1 + \sum_{i \in \msa}\ga{i}{j}  \bigg) \geq \Rth{b} } \cdot \prob{\forall j \in \msd \setminus \mss : \; W \log \bigg( 1 + \sum_{i \in \msa}\ga{i}{j}  \bigg) < \Rth{b} } \\
& \cdot \left( 1 - \prob{\forall j \in \msd \setminus \mss : \; W \log \bigg( 1 + \sum_{i \in \msa}\ga{i}{j} + \sum_{k \in \mss}\gd{k}{j}  \bigg) \geq \Rth{r} \; \middle| \; \forall j \in \msd \setminus \mss : \; W \log \bigg( 1 + \sum_{i \in \msa}\ga{i}{j}  \bigg) < \Rth{b} } \right)
\nonumber
\end{align}
\normalsize
\fi
For a practical wireless network in which the channel fading distribution depends on path-loss and shadowing, the evaluation of \eqref{Eq:Ptwohop} is challenging. Simplification of \eqref{Eq:Ptwohop} can be obtained in the following special case. Let's assume a setup with fixed nominal \gls{snr} $\rho$ on all \gls{ap}-device and device-device links with \gls{iid} small-scale fading $h$. %Furthermore, any receiver node can harvest full transmitter diversity from $m$ simultaneous transmissions. 
In such a scenario, using \eqref{Eq:linkoutage}, the probability of decoding failure with $m$ cooperating transmitters at rate $R$  is as follows.
\begin{align}\label{Eq:pfail}
\Pm{m,R} = \prob{W \log\Big( 1 + \rho \sum_{l = 1}^{m}|h_l|^2 \Big) < R},
\end{align}
where $h_l \sim CN(0,1)$, are \gls{iid} random variables. Therefore, $\sum_{l = 1}^{m}|h_l|^2$ has the Erlang distribution and $p^m$ can be computed as \cite{Laneman:2003}
\begin{align}
\Pm{m,R}  = \gamma\left( m,\frac{\omega}{\Pt} \right),
\end{align}
where $\gamma(x,y) = \int_{0}^{y} t^{x-1} e^{-t} dt$ is the incomplete Gamma function. Moreover, $\omega = W \cdot \sigma_0 \cdot (2^{R/W} - 1)$ and $\sigma_0$ denotes \gls{psd} of the \gls{awgn} and $\Pt$ denotes the transmit power at each transmitting node.

%\begin{align}
%p = 1 - \exp(\frac{2^{\frac{N B}{W \Toh}} - 1}{\rho})
%\end{align}
%
%With such assumption, since the device-device channels are unknown to the controller, \eqref{Eq:Ptwohop} can be summarized as follows 
%\begin{align}
%\Pth(\Tth,\msth | \{ \ha{i}{j} \})  = 1 - \left(1 - \Pm{M+n,\Rth{r}} \right)^{k},
%\label{Eq:Ptwohop-2}
%\end{align}
%where $n = \card{\pmb{S}(\Tth/2)}$ is the number of successful devices from broadcast hop, $k = \card{\msth \setminus \pmb{S}(\Tth/2)}$ is the number of poor devices that fail  in the broadcast hop, and %$\Pm{M+n}$ is the probability of  failure from \eqref{Eq:pfail} at rate 
%$\Rth{r} = \frac{B (N - K)}{(1 - \alpha) \Tth}$.
%
%\begin{figure*}[h]
%\begin{align}
%\Pth = \sum_{n = 0}^{N-1} p^{M\cdot (N-n)} \left(1-p^M \right)^{n} \left( \sum_{k = 0}^{\min\{K,n\}} \binom{N-K}{n-k} \binom{K}{k} \left(1 - (1 - p^{n})^{(K-k)} \right) \right)
%\label{Eq:Ptwohop-2}
%\end{align}
%\end{figure*}
%
In such simplified scenario, as similarly suggested in \cite{swamy2015cow}, \eqref{Eq:Ptwohop} can be reformulated as follows.
\if\mycmd 1
\begin{align}\label{Eq:Ptwohop-4}
\Pth\left(\msd \right) = \sum_{n = 0}^{N-1}\left( q_{b}^{M}\right)^{(N-n)} \left(1-q_{b}^M \right)^{n}  \binom{N}{n} \left(1 - \left( 1 - {q_{r}^{(M+n)}} \right)^{(N-n)} \right),
\end{align}
\else
\begin{align}\label{Eq:Ptwohop-4}
&\Pth\left(\msd \right) = \\
&\sum_{n = 0}^{N-1}\left( q_{b}^{M}\right)^{(N-n)} \left(1-q_{b}^M \right)^{n}  \binom{N}{n} \left(1 - \left( 1 - {q_{r}^{(M+n)}} \right)^{(N-n)} \right),
\nonumber
\end{align}
\fi  
where  $q_{b}^M = \Pm{M,\Rth{b}}$ and 
\begin{align}\label{Eq:qr}
q_{r}^{(M+n)} = \min \{1, \Pm{M+n,\Rth{r}} / \Pm{M,\Rth{b}} \},
\end{align}
is the conditional failure probability  of a device in relaying hop given 
\if\mycmd 1
\else
that 
\fi
it failed in broadcast hop.

\if\mycmd 1
\vspace{-10pt}
\else
\fi
\subsection{Diversity-Multiplexing Tradeoff Analysis}
\label{Sec:DMT}

In this section, we analyze the diversity-multiplexing trade-off of the proposed scheme for the case of \gls{pcsi}, assuming independent Rayleigh distributed fading for all links, using the definition in \eqref{Eq:dive_order_def}. First, note that when transmit power goes to $\infty$, the achievable transmission rate from \eqref{Eq:achievablerate} also approaches $\infty$, resulting in $ \msoh = \msd$, when $0 < \beta \leq 1$. Therefore, unless $\beta = 0$, all the devices are scheduled in the single-hop rate-adaptive phase. For that reason, we  analyze the diversity-multiplexing tradeoff for two extreme cases of $\beta = 1$ and $\beta = 0$, respectively, when all the devices are scheduled in single-hop rate-adaptive phase and, when all the devices are scheduled in two-hop cooperative phase. This also provides lower bounds for the diversity-multiplexing tradeoff of the proposed \gls{andcoop} scheme.

\begin{proposition}
\label{Prop:1h}
When all the devices are scheduled in single-hop rate-adaptive phase, the diversity order of the considered system at zero multiplexing is given by
\begin{align}
\label{Eq:div_order_1hop}
d_\textit{single-hop}(0) = M.
\end{align}
\end{proposition}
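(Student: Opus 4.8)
The plan is to specialize everything to the zero-multiplexing point $r=0$, where the payload $B$, the cycle time $T$, the bandwidth $W$ and hence every rate threshold remain fixed constants as $\Pt\to\infty$, while each average link SNR $\ra{i}{j}=\Pa/(W\varnoise)$ grows linearly in $\Pt$. With $\beta=1$ and perfect CSI there is no two-hop phase, so $\Pout=\Poh$, and \eqref{Eq:outage-1hop} collapses to the \emph{time-overflow} probability $\Pout=\prob{\sum_{j\in\msd}1/\aR_{j}>\Toh/B}$, where $\aR_{j}=W\log\big(1+\sum_{i\in\msa}\ga{i}{j}\big)$ is the per-device achievable rate obtained by CoMP combining of all $M$ access-point links. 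I would then establish the exponential equality $\Pout\doteq\Pt^{-M}$ (writing $f(\Pt)\doteq\Pt^{-d}$ when $-\lim\log f/\log\Pt=d$, as is standard in the diversity-multiplexing literature) and read off $d_\textit{single-hop}(0)=M$ directly from \eqref{Eq:dive_order_def}.

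The one analytic ingredient is the near-origin behaviour of a single device's aggregate SNR $S_j:=\sum_{i\in\msa}\ga{i}{j}$, a sum of $M$ independent exponentials with means $\ra{i}{j}\propto\Pt$. For any fixed $\epsilon>0$, nonnegativity of the summands sandwiches the CDF between two products, namely $\prod_{i\in\msa}\big(1-e^{-\epsilon/(M\ra{i}{j})}\big)\le\prob{S_j<\epsilon}\le\prod_{i\in\msa}\big(1-e^{-\epsilon/\ra{i}{j}}\big)$; since $\epsilon/\ra{i}{j}\to0$, both bounds equal a constant times $\prod_{i\in\msa}1/\ra{i}{j}$, so $\prob{S_j<\epsilon}\doteq\Pt^{-M}$. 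This is exactly the multi-antenna/CoMP diversity mechanism: a device is starved only if all $M$ independent links to it fade simultaneously, an event of probability order $\Pt^{-M}$.

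For the matching bounds on $\Pout$ I would argue by pigeonhole upward and by a single-term inclusion downward. Upward: if $\sum_{j}1/\aR_{j}>\Toh/B$ then at least one term exceeds $\Toh/(NB)$, i.e.\ $S_j<2^{NB/(\Toh W)}-1=:\epsilon_1$, a fixed constant; a union bound then gives $\Pout\le\sum_{j\in\msd}\prob{S_j<\epsilon_1}\doteq N\,\Pt^{-M}\doteq\Pt^{-M}$, hence $d_\textit{single-hop}(0)\ge M$. Downward: retaining only device $1$, the outage event contains $\{1/\aR_{1}>\Toh/B\}=\{S_1<2^{B/(\Toh W)}-1=:\epsilon_0\}$, so $\Pout\ge\prob{S_1<\epsilon_0}\doteq\Pt^{-M}$, hence $d_\textit{single-hop}(0)\le M$. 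Combining the two inequalities yields $d_\textit{single-hop}(0)=M$.

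The main obstacle is conceptual rather than computational: verifying that the single-device, all-links-fade event genuinely dominates the outage, i.e.\ that no cheaper failure mechanism (such as many devices being only mildly weak) can occur at a higher exponential order. The pigeonhole step addresses this by forcing at least one constant-order term, while independence across devices ensures that configurations with two or more simultaneously starved devices contribute only at order $\Pt^{-2M}$ and are therefore negligible. I would also check that the zero-multiplexing assumption keeps $\epsilon_0$ and $\epsilon_1$ bounded strictly between $0$ and $\infty$, so that the Erlang-near-zero scaling is genuinely $\Pt^{-M}$ and is not distorted by the logarithm inside $\aR_{j}$ or by any growth of $B$ with $\Pt$.
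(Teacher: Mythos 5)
Your proposal is correct and follows essentially the same route as the paper: both arguments sandwich the time-overflow probability between the event that some device fails at the generous rate $B/\Toh$ (lower bound) and the event that some device fails at the equal-split rate $NB/\Toh$ (upper bound), and then use the $\Pt^{-M}$ near-origin scaling of the $M$-fold aggregate SNR to show both bounds decay with exponent $M$. The only cosmetic differences are that you derive the Erlang small-argument behaviour from first principles via a product sandwich (the paper cites the approximation $p(m,R)\approx(\omega/\Pt)^m$ from Laneman) and you use a single-device inclusion for the lower bound rather than the full $1-(1-p)^N$, neither of which changes the substance.
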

\begin{proof}
See \appref{App:A}.
\end{proof}
Moreover, following the proof in \appref{App:A}, upper and lower bounds of the diversity-multiplexing tradeoff are readily derived as follows for the case where all the devices are scheduled over single-hop transmission.
\begin{align}\label{Eq:divmux_1hop}
M (1-r) \leq d_\textit{single-hop}(r) \leq  M (1-\frac{r}{N}).
\end{align}

For the case when all the devices are scheduled in two-hop cooperative phase (i.e., the OccupyCoW protocol in \cite{swamy2015cow}), the following proposition presents the diversity-multiplexing tradeoff  for the general case of $0 < \alpha <1$, when $M>1$.

\if\mycmd 1
\vspace{-10pt}
\else
\fi
\begin{proposition}
\label{Prop:2h}
When  all the devices are scheduled in two-hop cooperative phase, the diversity-multiplexing tradeoff is given by
\begin{align}\label{Eq:DMT-2hop}
d_\textit{two-hop}(r) = (M+N-1)\left( 1- \frac{r}{1-\alpha} \right),
\end{align}
where at zero multiplexing gain it yields
\begin{align}
\label{Eq:div_orde_2hop}
d_\textit{two-hop}(0) = M+N-1.
\end{align}
\end{proposition}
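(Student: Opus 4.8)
The plan is to specialize to $\beta = 0$, so that every device is served in the two-hop phase, $\kth = N$, and (under \gls{pcsi}) $\Tth = T_D = T$; then $\Pout = \Pth(\msd)$ is given exactly by the closed form \eqref{Eq:Ptwohop-4}. With $\eta = r\log_2\Pt$ the two rates become $\Rth{b}/W = \eta/\alpha = (r/\alpha)\log_2\Pt$ and $\Rth{r}/W = \eta/(1-\alpha) = (r/(1-\alpha))\log_2\Pt$, so the whole problem reduces to tracking the \gls{snr} exponents of the three building blocks $q_b^M$, $\Pm{M+n,\Rth{r}}$ and $q_r^{(M+n)}$ as $\Pt\to\infty$. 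Throughout I write $a\doteq b$ to mean that $a$ and $b$ share the same \gls{snr} exponent, i.e.\ $\lim \log a/\log\Pt = \lim \log b/\log\Pt$.

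The first step is the high-\gls{snr} order of the per-device failure probability. Since $\gamma(m,x)=\int_0^x t^{m-1}e^{-t}\diff t \sim x^m/m$ as $x\to0$, and for $R/W = c\log_2\Pt$ with $c<1$ we have $\omega/\Pt = W\sigma_0(2^{R/W}-1)/\Pt \doteq \Pt^{-(1-c)}\to0$, it follows that $\Pm{m,R} = \gamma(m,\omega/\Pt)\doteq \Pt^{-m(1-c)}$. Applying this with $c = r/\alpha$ for the broadcast hop and $c = r/(1-\alpha)$ for the relay hop gives $q_b^M \doteq \Pt^{-M(1-r/\alpha)}$ and $\Pm{M+n,\Rth{r}} \doteq \Pt^{-(M+n)(1-r/(1-\alpha))}$. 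Hence the conditional relay-failure probability \eqref{Eq:qr} satisfies $q_r^{(M+n)}\doteq \Pt^{-[(M+n)(1-r/(1-\alpha))-M(1-r/\alpha)]}$ whenever that exponent is nonnegative, so that the $\min\{1,\cdot\}$ is inactive.

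Next I would substitute these orders into each summand of \eqref{Eq:Ptwohop-4}. The factors $(1-q_b^M)^n$ and $\binom{N}{n}$ are $\doteq\Pt^0$, and $1-(1-q_r^{(M+n)})^{N-n}\doteq q_r^{(M+n)}$, so the $n$-th term carries \gls{snr} exponent
\begin{align}
e(n) = M\Big(1-\tfrac{r}{\alpha}\Big)(N-n-1) + (M+n)\Big(1-\tfrac{r}{1-\alpha}\Big), \nonumber
\end{align}
and the diversity order is $d_\textit{two-hop}(r) = \min_{0\le n\le N-1} e(n)$. The coefficient of $n$ in $e(n)$ is $(1-\tfrac{r}{1-\alpha}) - M(1-\tfrac{r}{\alpha})$, which is negative for $M>1$ in the relevant low-multiplexing regime; thus $e(n)$ decreases in $n$ and is minimized at $n=N-1$, where the first term vanishes and $e(N-1) = (M+N-1)(1-\tfrac{r}{1-\alpha})$. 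This is exactly \eqref{Eq:DMT-2hop}, and setting $r=0$ recovers \eqref{Eq:div_orde_2hop}.

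I expect the main obstacle to be rigorously pinning down that $n=N-1$ is the dominant summand and that the $\min\{1,\cdot\}$ in \eqref{Eq:qr} is inactive there, i.e.\ that it is the relay hop, not the broadcast hop, that sets the exponent. The clean picture I would use to organize this is that the typical system-outage event at high \gls{snr} is the one in which a single unlucky device fails the broadcast while the other $N-1$ devices decode and join the $M$ \glspl{ap} as relays, so that this device must still fail to be covered by $M+N-1$ cooperating transmitters in the relay hop; the relay rate $\Rth{r}\propto 1/(1-\alpha)$ is what produces the multiplexing penalty $(1-\tfrac{r}{1-\alpha})$. Verifying that every other split of broadcast successes ($n<N-1$) decays strictly faster, and that the approximation $1-(1-q_r^{(M+n)})^{N-n}\doteq q_r^{(M+n)}$ holds uniformly, then completes the argument.
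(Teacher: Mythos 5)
Your proposal is correct and follows essentially the same route as the paper's proof: both start from \eqref{Eq:Ptwohop-4}, replace $\Pm{m,R}$ by its high-SNR order $(\omega/\Pt)^m$ with $R/W \propto \log \Pt$, and identify the $n=N-1$ summand as the slowest-decaying term, whose exponent $(M+N-1)\left(1-\frac{r}{1-\alpha}\right)$ yields \eqref{Eq:DMT-2hop}. Your explicit exponent function $e(n)$ and the monotonicity-in-$n$ argument merely make the dominance of $n=N-1$ (which the paper asserts for $M>1$ without elaboration) slightly more transparent.
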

\begin{proof}
See \appref{Proof_prop_2h}.
\end{proof}
Intuitively, diversity order $M + N -1$, i.e.,  the slope of outage probability curve in high \gls{snr}, corresponds to the case when all the $M$ \glspl{ap} cooperate with  $N-1$ strongest devices to transmit to the weakest device. It is  an interesting observation from \eqref{Eq:DMT-2hop} that the tradeoff between diversity order $d$ and multiplexing gain $r$ in high \gls{snr} is controlled by the inverse of $1- \alpha$. In other words, the duration of the relaying phase, which directly affects the outage probability for the weakest devices, controls the diversity multiplexing tradeoff in high \gls{snr}. Therefore, for a given multiplexing gain $r$, the maximum diversity is achieved when $\alpha$ is the minimum allowed value for \eqref{Eq:DMT-2hop} to be non-negative, i.e., $0 < \alpha \leq 1-r$. Since $\alpha$ cannot be zero in \eqref{Eq:DMT-2hop}, the following upper bound is valid for the two-hop transmission for $r > 0$.
\begin{align}
\label{Eq:div_orde_2hop_up}
d_\textit{two-hop}(r) < (M+N-1) ( 1- r ).
\end{align}
This upper bound is depicted by the dashed green line in \figref{fig:div_mux_tradeoff}.  For the  case of $\alpha = \frac{1}{2}$, which is the exercised design in \cite{Arvin:2019,swamy2015cow,Kaiming:2019ratespliting}, the diversity-multiplexing tradeoff in  \eqref{Eq:DMT-2hop} becomes 
\begin{align}\label{Eq:2hop_divmux_half}
d_\textit{two-hop}(r)|_{\alpha=1/2} = (M+N-1) ( 1- 0.5 \cdot r )
\end{align}
As also depicted in \figref{fig:div_mux_tradeoff}, in such case the maximum achievable  multiplexing gain is $r = 0.5$, intuitively corresponding to the fact that each message is transmitted twice with the same rate, once in broadcast phase and once more in the relaying phase. 

It is clear that the  two-hop operation in \eqref{Eq:2hop_divmux_half} is spectrally inefficient, due to the fact that at very low system outage probability  the gain from cooperative relaying is small, thus only a small percentage of the devices will benefit from relaying. On the other hand, the single-hop transmission from \propref{Prop:1h} can achieve higher multiplexing gain than \eqref{Eq:2hop_divmux_half}. This further suggests that it is best to capture the relaying benefit only for that small percentage of the devices that experience a \emph{weak} channel to the \glspl{ap}, while  enjoying the high  rate single-hop transmission to the devices with \emph{strong} channel, which in turn increases the overall spectral efficiency. Moreover, note that since by design the outage probability of the proposed \gls{andcoop} scheme is upper bounded by that of the two extremes studied in \propref{Prop:1h} and \propref{Prop:2h}, then from \eqref{Eq:dive_order_def} we readily have $d_\textit{single-hop}(r) \leq d_\textit{\gls{andcoop}}(r)$ and  $d_\textit{two-hop}(r) \leq d_\textit{\gls{andcoop}}(r)$ for given $\alpha$ and $r$.

%
%In the limit, when $\Pt \rightarrow \infty$, for non-zero $\beta$, the scheduler can accommodate all the devices in $\Toh$, therefore, the diversity order of the proposed scheme in infinite \gls{snr} is equal to $M$ following \propref{Prop:1h}. 
%
%However, if finite \gls{snr} regime, when $N$ is large enough, we can assume that at least one device (i.e., the one with worst link to the \glspl{ap}) is scheduled in the two-hop phase. Alternatively, by setting $\rbo = 0$ for $\beta \leq 1$, all the devices are scheduled with two-hop transmission, which proves that the proposed \gls{andcoop} transmission scheme achieves the maximum diversity order of $M + N - 1$. 

%---------------------------
\begin{figure}[t]
\begin{center}
\psfrag{xlabel}[c][c][\scalevalueS]{multiplexing gain, $r$}
\psfrag{ylabel}[c][c][\scalevalueS]{diversity order, $d(r)$}
\psfrag{x1}[lc][lc][\scalevalueSS]{\propref{Prop:andcoop}}
\psfrag{XXXXXXXXXx1}[lc][lc][\scalevalueSS]{two-hop; up}
\psfrag{x2}[lc][lc][\scalevalueSS]{two-hop; $\alpha = \frac{1}{2}$}
\psfrag{x3}[lc][lc][\scalevalueSS]{single-hop; low}
\psfrag{x4}[lc][lc][\scalevalueSS]{single-hop; up}
\psfrag{y1}[rc][rc][\scalevalueSS]{$M$}
\psfrag{ye}[rc][rc][\scalevalueSS]{$M+N-1$}
\psfrag{xe}[rc][rc][\scalevalueSS]{$\frac{N}{N+1}$}
\if\mycmd 1
\includegraphics[width=.32\columnwidth,keepaspectratio]{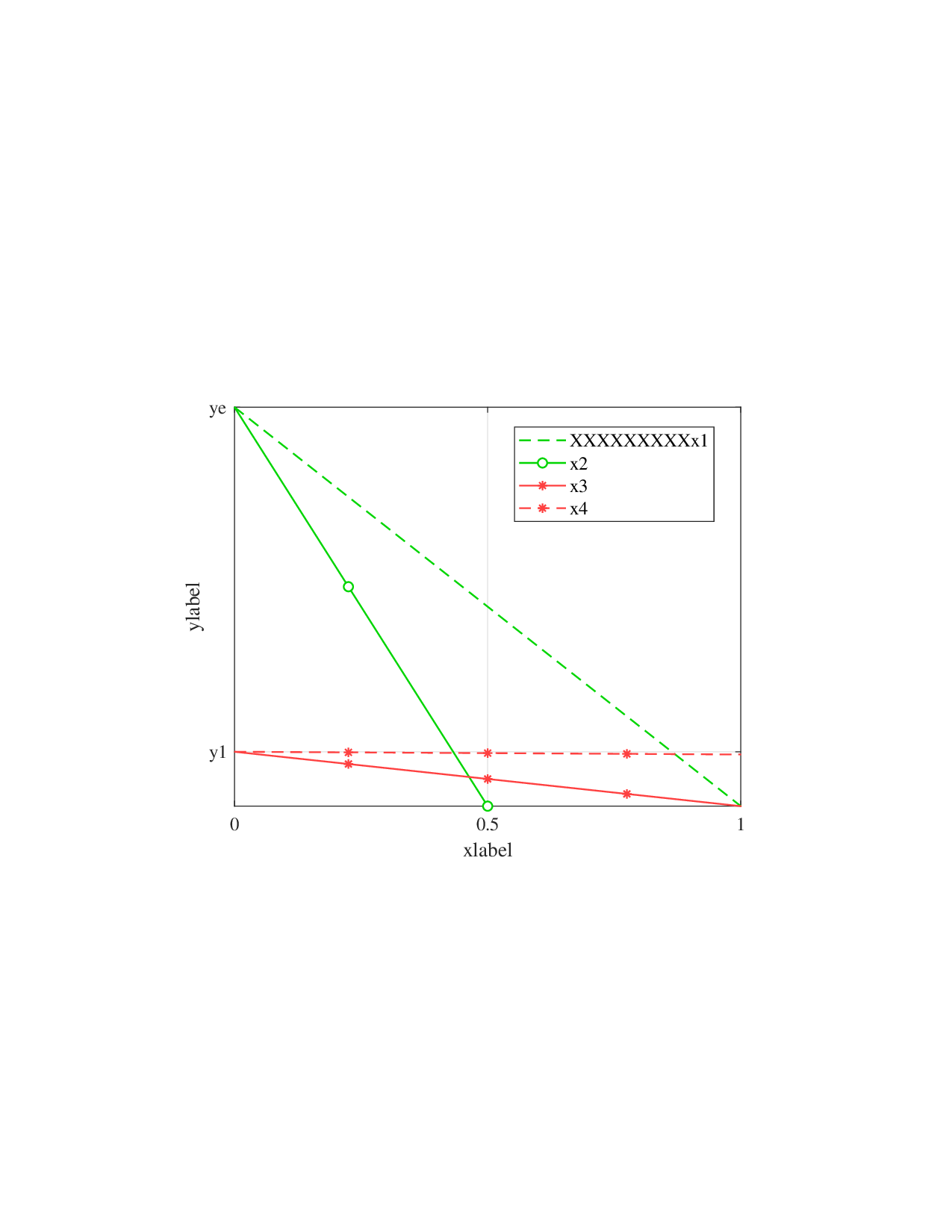}
\else
\includegraphics[width=0.82\columnwidth,keepaspectratio]{div_mux_tradeoff.eps}
\fi
\caption{Diversity-multiplexing tradeoff.}
\label{fig:div_mux_tradeoff}
\end{center}
\if\mycmd 1
\vspace{-30pt}
\else
%\vspace{-10pt}
\fi
\end{figure}
\section{Numerical Results}
\label{Sec:Results}

This section  presents numerical results from simulating \gls{dl} operation  of the network described in \secref{Sec:ProblemSetup}, where $M$ single-antenna \glspl{ap} and $N$ single-antenna devices are randomly distributed across the factory area. To this end, we start by analyzing the performance with the assumption of \gls{pcsi}, to focus on system parameter designs. Then, we extend the analysis to the case of \gls{icsi}, and study the impact of \gls{csi} estimation error on the performance indicators.

The plots presented throughout this section are generated using a system-level simulation, where we adopt the  parameters summarized in \tabref{Tab:Parameters}  (except in cases where stated otherwise). Path loss exponents are determined using the factory and open-plan building channel model in \cite{rapp91}. The probability that a link is \gls{los} is derived based on the distance of the communicating  nodes, $\nu$, using the following function.
\begin{align}
\plos(\nu) = a + \indic{\nu \leq b}\,\frac{1-a}{b^2}\,(\nu-b)^2,
\end{align}
where $a$ is a fixed probability mass, and $b$ is the cutoff above which the probability of a link being \gls{los} becomes fixed to $a$. The link is therefore \gls{nlos} with probability of $1-\plos(\nu)$.  The cycle duration, number of devices, data per device and bandwidth are given values similar to those in \cite{swamy2015cow} and \cite{weiner2014design}.

With respect to the optimization of $\beta$, the following  schemes are studied in this section.
\begin{itemize}
\item 	$\beta = \hat{\beta}$: This represents our proposed \gls{andcoop} scheme, where an optimal $\hat{\beta}$ portion of the total time is  allocated for rate-adaptive single-hop transmission to a subset of the devices with the highest instantaneous channel quality. The remaining $1 - \hat{\beta}$ portion of the time is then used for two-hop transmission to the remaining devices.
\item 	$\beta = 1$, ideal rate adaptation: This transmission scheme mimics the typical transmission in cellular technologies, such as \gls{lte}, where, the transmission rate for each user is adapted to the instantaneous channel quality.
\item 	$\beta = 0$, two-hop transmission: This is a special case of our proposed scheme, where the total time resources is allocated to the two-hop transmission towards all the users. The scheme was originally proposed in  \cite{swamy2015cow} and is known as the OccupyCoW protocol.
\end{itemize}

%The given floor area is similar to common areas of factories that have surveyed in measurement campaigns \cite{hampicke1999,nist1951}.

\if\mycmd 1
\def \scaletemp {0.45} 
\else
\def \scaletemp {1} 
\fi

\begin{table}[t]
\centering
\caption{Simulation parameters setup.}
\resizebox{\scaletemp\columnwidth}{!}{%
\begin{tabular}{ l  l}
\hline	\hline 	
Parameter description & Value \\
\hline	\hline 
Floor area & 100$\times$100 m$^2$  \\
Number of devices, $N$ & 50  \\
Number of \glspl{ap}, $M$  & 1, 2 or 3 \\
Data per device & 50 Bytes \\
Cycle duration, $T$ & 1 ms  \\
Bandwidth & 20 MHz \\
Carrier frequency & 3.5 GHz ($\lambda\approx$ 8.57 cm) \\
\gls{ap} transmit power, $\Pa$ & 23 dBm \\
device transmit power, $\Pd$ & 23 dBm \\
\gls{psd} of the \gls{awgn} & -174 dBm/Hz \\
Path loss exponent ($\nu \leq 10\lambda$) & 2 \\
\gls{los} path loss exponent ($\nu > 10\lambda$) & 3.26 \\
\gls{nlos} path loss exponent ($\nu > 10\lambda$) & 3.93 \\
Blockage model: probability parameter $a$ & 0.25 \\
Blockage model: cutoff parameter $b$ & 15 m \\
Shadowing power value: & \\
\hline
\quad \quad \gls{los}, from \gls{ap} antenna & 1.4 dB \\
\quad \quad \gls{nlos}, from \gls{ap} antenna & 4.6 dB \\
\quad \quad \gls{los}, from device antenna & 8.7 dB \\
\quad \quad \gls{nlos}, from device antenna & 15.2 dB \\
\hline \hline 
\end{tabular}
}
\label{Tab:Parameters}
\if\mycmd 1
\vspace{-25pt}
\else
\fi
\end{table}

\if\mycmd 1
\vspace{-10pt}
\else
\fi
\subsection{Performance Analysis with Perfect \gls{csi}}

Let's start by assuming that the controller has perfect knowledge of the channel gains for all the \gls{ap}-device links. We emphasize that in the present work, the \gls{csi} is solely used for the purpose of transmission rate adaptation, meaning that the channel coefficient phase is not collected nor utilized. This differentiates our work from \gls{csi}-based distributed multi-antenna systems, such as in \cite{alonzo2020urllc}, which rely on coherent joint transmission and beam-forming at the transmitter. This further relaxes the assumption of tight synchronization among antennas for distributed cooperative transmission. %Instead, we rely on techniques such as ... which operate by introducing a conscious jitter to transform the spatial diversity of distributed antennas into frequency selectivity.

\if\mycmd 1
\begin{figure*}[t]
\centering
\begin{minipage}[b]{.47\textwidth}
\begin{center}
\psfrag{xlabel}[c][c][\scalevalueS]{spectral efficiency, $\eta$  (\gls{bpcu})}
\psfrag{ylabel}[c][c][\scalevalueS]{probability of system outage, $\Pout$}
\psfrag{XXX1}[lc][lc][\scalevalueSS]{$\beta = \hat{\beta}$}
\psfrag{x2}[lc][lc][\scalevalueSS]{$\beta = 0$}
\psfrag{x3}[lc][lc][\scalevalueSS]{$\beta = 1$}
\psfrag{XXX4}[lc][lc][\scalevalueSS]{$M = 1$}
\psfrag{x5}[lc][lc][\scalevalueSS]{$M = 2$}
\psfrag{x6}[lc][lc][\scalevalueSS]{$M = 3$}
\includegraphics[width=\scalevalueFigWidthSmall\columnwidth,keepaspectratio]{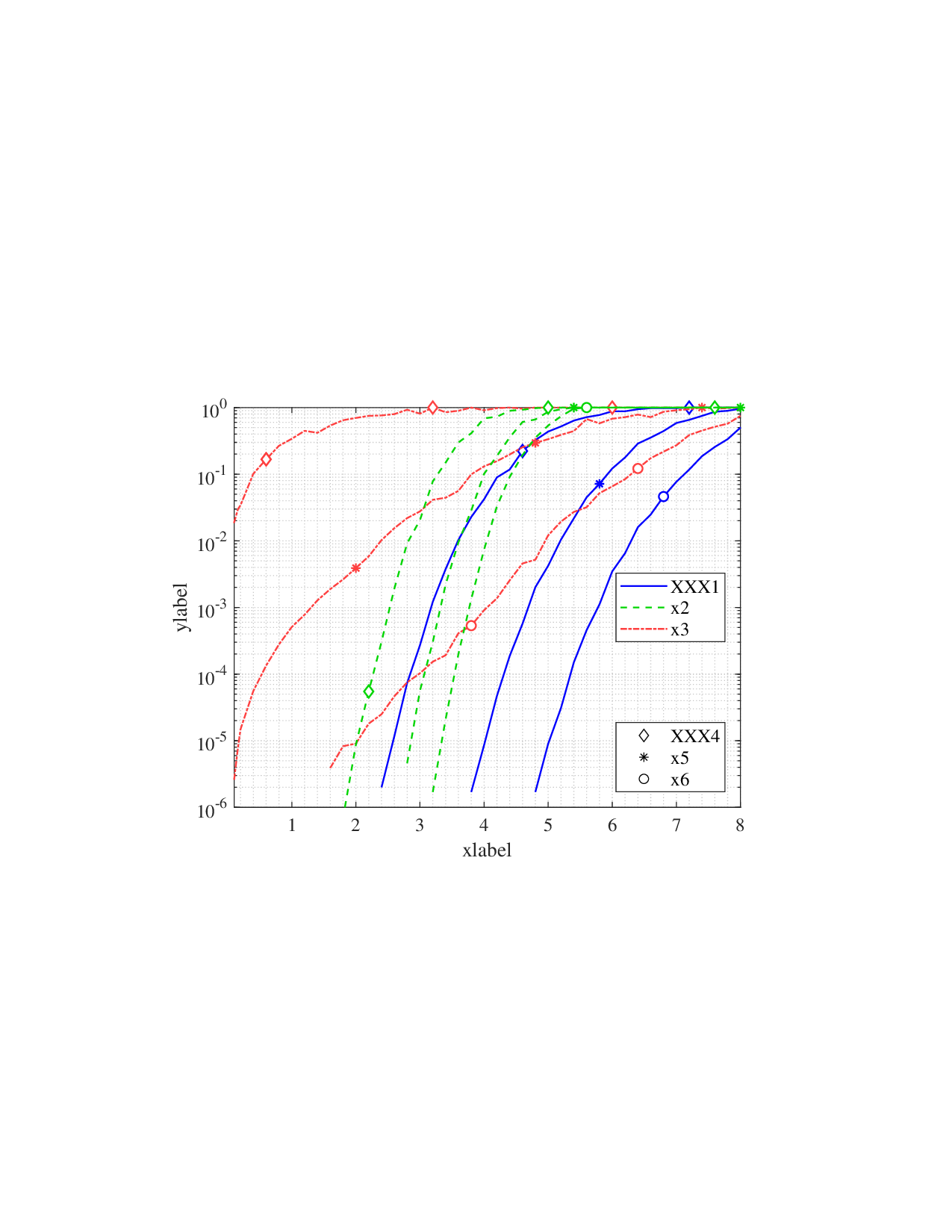}
\caption{System outage probability against spectral efficiency at $\Pa = \Pd = 23$ dBm.}
\label{fig:comp_id_rate}
\end{center}
\end{minipage}\qquad
\begin{minipage}[b]{.47\textwidth}
\begin{center}
\psfrag{xlabel}[c][c][\scalevalueS]{number of weak devices, $\kth$}
\psfrag{ylabel}[c][c][\scalevalueS]{cumulative distribution function}
\psfrag{XXX1}[lc][lc][\scalevalueSS]{$M = 1$}
\psfrag{x2}[lc][lc][\scalevalueSS]{$M = 2$}
\psfrag{x3}[lc][lc][\scalevalueSS]{$M = 3$}
\includegraphics[width=\scalevalueFigWidthSmall\columnwidth,keepaspectratio]{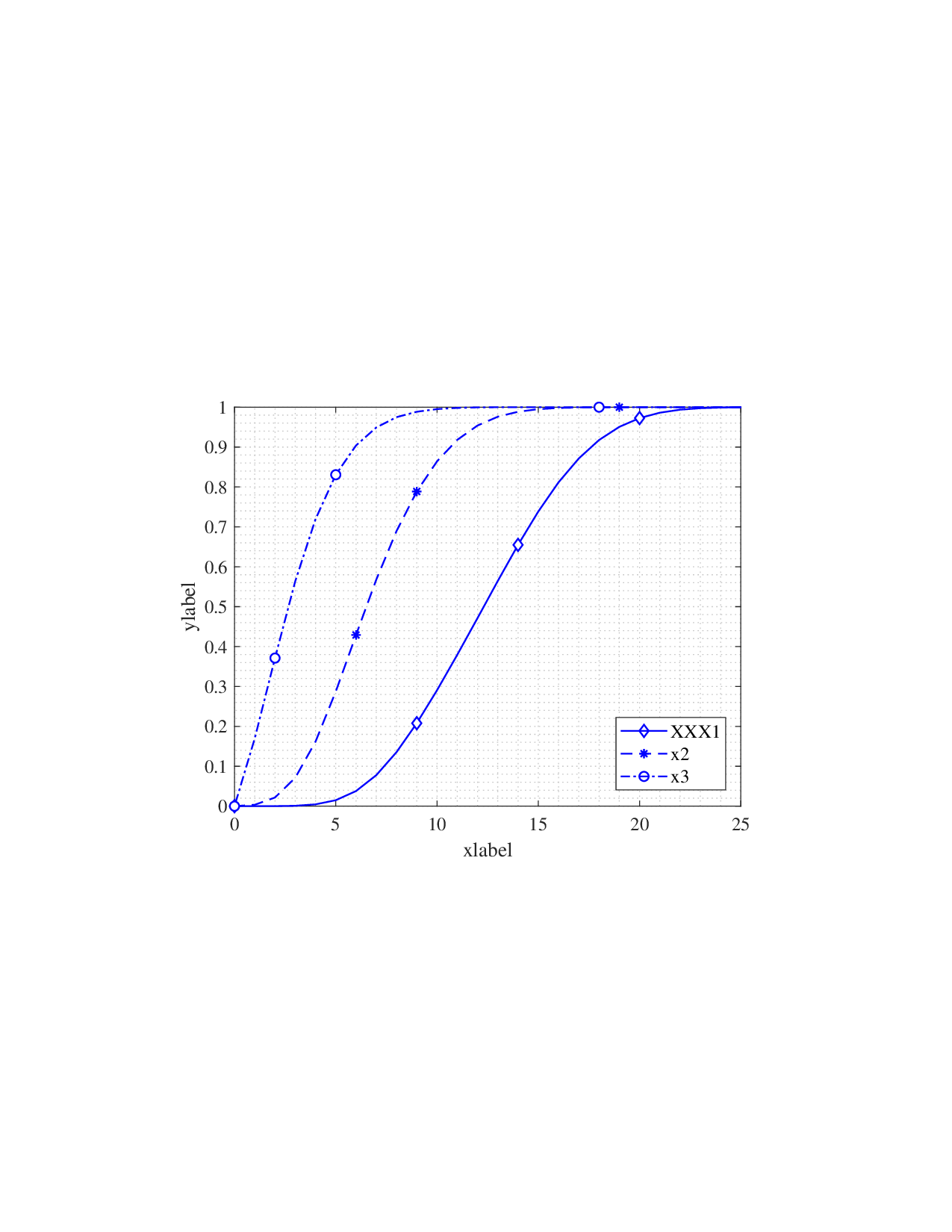}
\caption{Statistics of $\kth$ %, the number of devices with weak channel scheduled over two-hop transmission by the proposed \gls{andcoop} (i.e., $\beta = \hat{\beta}$),
 with $N = 50$, $\Pa = \Pd = 23$ dBm at $\Pout = 10^{-5}$ points of \figref{fig:comp_id_rate}.}\label{fig:weak_node_stats_id}
\end{center}
\end{minipage}
\if\mycmd 1
\vspace{-35pt}
\else
%\vspace{-10pt}
\fi
\end{figure*}
\fi

\if\mycmd 2
%---------------------------
\begin{figure}[t]
\begin{center}
\psfrag{xlabel}[c][c][\scalevalueS]{spectral efficiency, $\eta$  (\gls{bpcu})}
\psfrag{ylabel}[c][c][\scalevalueS]{probability of system outage, $\Pout$}
\psfrag{XXX1}[lc][lc][\scalevalueSS]{$\beta = \hat{\beta}$}
\psfrag{x2}[lc][lc][\scalevalueSS]{$\beta = 0$}
\psfrag{x3}[lc][lc][\scalevalueSS]{$\beta = 1$}
\psfrag{XXX4}[lc][lc][\scalevalueSS]{$M = 1$}
\psfrag{x5}[lc][lc][\scalevalueSS]{$M = 2$}
\psfrag{x6}[lc][lc][\scalevalueSS]{$M = 3$}
\includegraphics[width=\scalevalueFigWidthSmall\columnwidth,keepaspectratio]{comp_id_rate.eps}
\caption{System outage probability against spectral efficiency at $\Pa = \Pd = 23$ dBm.}
\label{fig:comp_id_rate}
\end{center}
\if\mycmd 1
\vspace{-35pt}
\else
%\vspace{-10pt}
\fi
\end{figure}
%----------------
\fi

\paragraph*{Proposed scheme improves spectral efficiency} In \figref{fig:comp_id_rate}, system outage probability is shown against spectral efficiency $\eta$, derived as $\eta = \frac{N B}{T W}$. The proposed \gls{andcoop} scheme with optimized time division of $\beta = \hat{\beta}$, improves spectral efficiency by at least 0.5 \gls{bpcu} when $M = 1$ \gls{ap} is deployed. The gain is higher when a larger number of \glspl{ap} are deployed. Namely, with $M = 2$ and $M = 3$, more than 1 and 1.5 \gls{bpcu} increase in spectral efficiency is achieved with respect to the case of $\beta = 0$, i.e., when only two-hop cooperative transmission is deployed. 

\if\mycmd 2
%---------------------------
\begin{figure}[t]
\begin{center}
\psfrag{xlabel}[c][c][\scalevalueS]{number of weak devices, $\kth$}
\psfrag{ylabel}[c][c][\scalevalueS]{cumulative distribution function}
\psfrag{XXX1}[lc][lc][\scalevalueSS]{$M = 1$}
\psfrag{x2}[lc][lc][\scalevalueSS]{$M = 2$}
\psfrag{x3}[lc][lc][\scalevalueSS]{$M = 3$}
\includegraphics[width=\scalevalueFigWidthSmall\columnwidth,keepaspectratio]{weak_node_stats_id.eps}
\caption{Statistics of $\kth$ %, the number of devices with weak channel scheduled over two-hop transmission by the proposed \gls{andcoop} (i.e., $\beta = \hat{\beta}$),
 with $N = 50$, $\Pa = \Pd = 23$ dBm at $\Pout = 10^{-5}$ points of \figref{fig:comp_id_rate}.}
\label{fig:weak_node_stats_id}
\end{center}
\if\mycmd 1
\vspace{-35pt}
\else
%\vspace{-10pt}
\fi
\end{figure}
%----------------
\fi

The gain in spectral efficiency is thanks to transmitting packets to \emph{strong} devices with high rate in the single-hop phase, allowing the two-hop phase to accommodate the \emph{weak} devices reliably at even  large packet sizes. This way, the robustness of rate-adaptation to increase in load is combined together with the robustness of OccupyCoW to fading, providing an improved reliability at a higher spectral efficiency. \figref{fig:weak_node_stats_id} shows the \gls{cdf} of the number of weak devices scheduled with two-hop transmission by the proposed \gls{andcoop} scheme with optimized $\beta$. The statistics are collected for the points in \figref{fig:comp_id_rate} where the proposed \gls{andcoop} achieves $\Pout = 10^{-5}$. Interestingly, the average number of users scheduled with two-hop transmission is just below 13, 7 and 3 respectively for the case of 1, 2 and 3 \glspl{ap}.

\if\mycmd 2
%---------------------------
\begin{figure}[t]
\begin{center}
\psfrag{xlabel}[c][c][\scalevalueS]{transmit power (dBm)}
\psfrag{ylabel}[c][c][\scalevalueS]{probability of system outage, $\Pout$}
\psfrag{XXX1}[lc][lc][\scalevalueSS]{$\beta = \hat{\beta}$}
\psfrag{x2}[lc][lc][\scalevalueSS]{$\beta = 0$}
\psfrag{x3}[lc][lc][\scalevalueSS]{$\beta = 1$}
\psfrag{XXX4}[lc][lc][\scalevalueSS]{$M = 1$}
\psfrag{x5}[lc][lc][\scalevalueSS]{$M = 2$}
\psfrag{x6}[lc][lc][\scalevalueSS]{$M = 3$}
\includegraphics[width=\scalevalueFigWidthSmall\columnwidth,keepaspectratio]{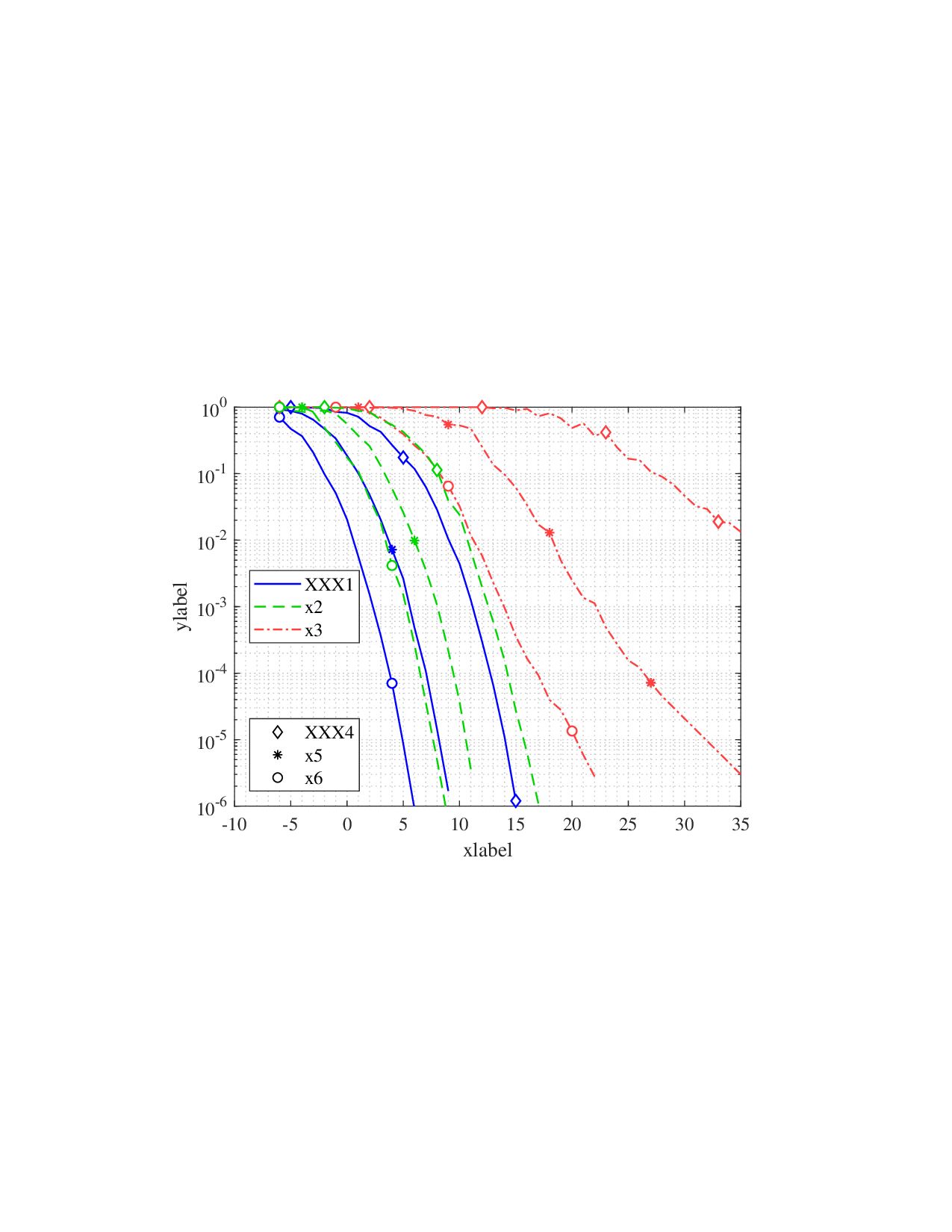}
\caption{System outage probability against transmit power, $\Pa = \Pd$, for $B = 50$ bytes per device.}
\label{fig:comp_id_power}
\end{center}
\if\mycmd 1
\vspace{-35pt}
\else
%\vspace{-10pt}
\fi
\end{figure}
%----------------
\fi

\paragraph*{Ultra-reliability at lower transmit power} By fully exploiting the diversity gain in the network, the proposed \gls{andcoop} scheme relaxes the need for high \gls{snr} to achieve ultra-reliability. As it is shown in \figref{fig:comp_id_power}, system outage probability of $\Pout = 10^{-5}$, the required transmit power of the proposed scheme reduces by a few dB compared to the OccupyCoW protocol. Such transmit power gap, when compared against the case of single-hop with ideal rate adaptation, can grow to tens of dB. Aside from improving the overall  energy efficiency of the system, operating at a lower transmit power can also reduce the interference generated by the cell towards neighbouring cells, in case of a multi-cell operation, as also studied in \cite{Arvin:2019}. Moreover, the proposed \gls{andcoop} can naturally reduce the average relaying time per relay device, by reducing the overall duration of the relaying phase. The combined effect is a significant reduction in the average consumed energy in the  relaying phase across devices, as depicted in \figref{fig:relay_power_consumption_stats_id_jules}. The curves suggest $30\%$ to $40\%$ reduction in average relaying energy consumption in all cases.

\if\mycmd 2
%---------------------------
\begin{figure}[t]
\begin{center}
\psfrag{xlabel}[c][c][\scalevalueS]{average transmit energy per device ($\mu$J)}
\psfrag{ylabel}[c][c][\scalevalueS]{cumulative distribution function}
\psfrag{XXX1}[lc][lc][\scalevalueSS]{$\beta = \hat{\beta}$}
\psfrag{x3}[lc][lc][\scalevalueSS]{$\beta = 0$}
\psfrag{XXX4}[lc][lc][\scalevalueSS]{$M = 1$}
\psfrag{x5}[lc][lc][\scalevalueSS]{$M = 2$}
\psfrag{x6}[lc][lc][\scalevalueSS]{$M = 3$}
\includegraphics[width=\scalevalueFigWidthSmall\columnwidth,keepaspectratio]{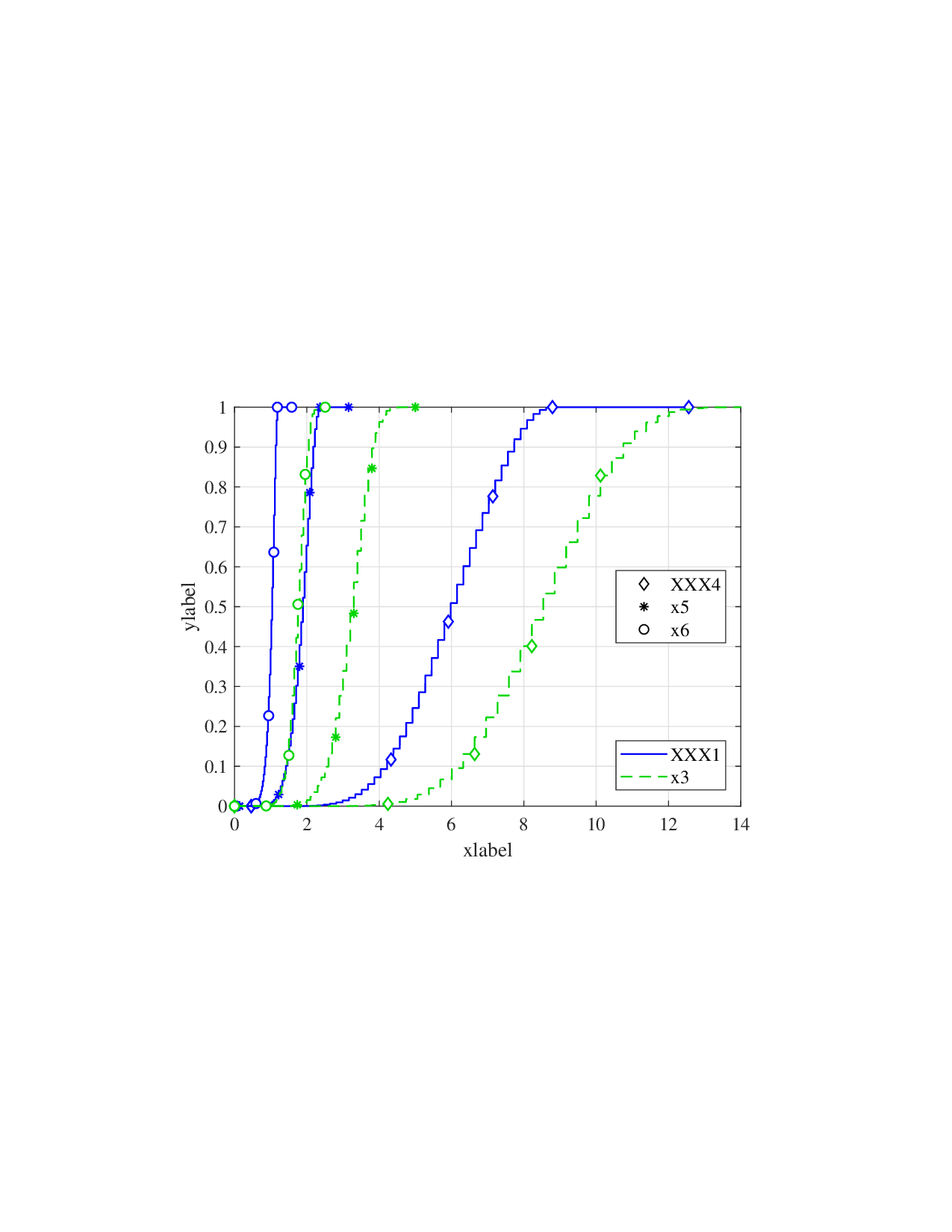}
\caption{Statistics of  average consumed energy per device for relaying with $N = 50$, $B = 50$ bytes at $\Pout = 10^{-5}$ points of \figref{fig:comp_id_power}.}
\label{fig:relay_power_consumption_stats_id_jules}
\end{center}
\if\mycmd 1
\vspace{-35pt}
\else
%\vspace{-10pt}
\fi
\end{figure}
%----------------
\fi

\if\mycmd 1
\begin{figure*}
\centering
\begin{minipage}[b]{.47\textwidth}
\begin{center}
\psfrag{xlabel}[c][c][\scalevalueS]{transmit power (dBm)}
\psfrag{ylabel}[c][c][\scalevalueS]{probability of system outage, $\Pout$}
\psfrag{XXX1}[lc][lc][\scalevalueSS]{$\beta = \hat{\beta}$}
\psfrag{x2}[lc][lc][\scalevalueSS]{$\beta = 0$}
\psfrag{x3}[lc][lc][\scalevalueSS]{$\beta = 1$}
\psfrag{XXX4}[lc][lc][\scalevalueSS]{$M = 1$}
\psfrag{x5}[lc][lc][\scalevalueSS]{$M = 2$}
\psfrag{x6}[lc][lc][\scalevalueSS]{$M = 3$}
\includegraphics[width=\scalevalueFigWidthSmall\columnwidth,keepaspectratio]{comp_id_power.eps}
\caption{System outage probability against transmit power, $\Pa = \Pd$, for $B = 50$ bytes per device.}
\label{fig:comp_id_power}
\end{center}
\end{minipage}\qquad
\begin{minipage}[b]{.47\textwidth}
\begin{center}
\psfrag{xlabel}[c][c][\scalevalueS]{average transmit energy per device ($\mu$J)}
\psfrag{ylabel}[c][c][\scalevalueS]{cumulative distribution function}
\psfrag{XXX1}[lc][lc][\scalevalueSS]{$\beta = \hat{\beta}$}
\psfrag{x3}[lc][lc][\scalevalueSS]{$\beta = 0$}
\psfrag{XXX4}[lc][lc][\scalevalueSS]{$M = 1$}
\psfrag{x5}[lc][lc][\scalevalueSS]{$M = 2$}
\psfrag{x6}[lc][lc][\scalevalueSS]{$M = 3$}
\includegraphics[width=\scalevalueFigWidthSmall\columnwidth,keepaspectratio]{relay_power_consumption_stats_id_jules.eps}
\caption{Statistics of  average consumed energy per device for relaying with $N = 50$, $B = 50$ bytes at $\Pout = 10^{-5}$ points of \figref{fig:comp_id_power}.}
\label{fig:relay_power_consumption_stats_id_jules}
\end{center}
\end{minipage}
\if\mycmd 1
\vspace{-35pt}
\else
%\vspace{-10pt}
\fi
\end{figure*}
\fi

\if\mycmd 2
%---------------------------
\begin{figure}[t]
\begin{center}
\psfrag{xlabel}[c][c][\scalevalueS]{probability of system outage, $\Pout$}
\psfrag{ylabel}[c][c][\scalevalueS]{empirical outage exponent}
\psfrag{XXX1}[lc][lc][\scalevalueSS]{$\beta = \hat{\beta}$}
\psfrag{x2}[lc][lc][\scalevalueSS]{$\beta = 0$}
\psfrag{x3}[lc][lc][\scalevalueSS]{$\beta = 1$}
\psfrag{XXX4}[lc][lc][\scalevalueSS]{$M = 1$}
\psfrag{x5}[lc][lc][\scalevalueSS]{$M = 2$}
\psfrag{x6}[lc][lc][\scalevalueSS]{$M = 3$}
\includegraphics[width=\scalevalueFigWidthSmall\columnwidth,keepaspectratio]{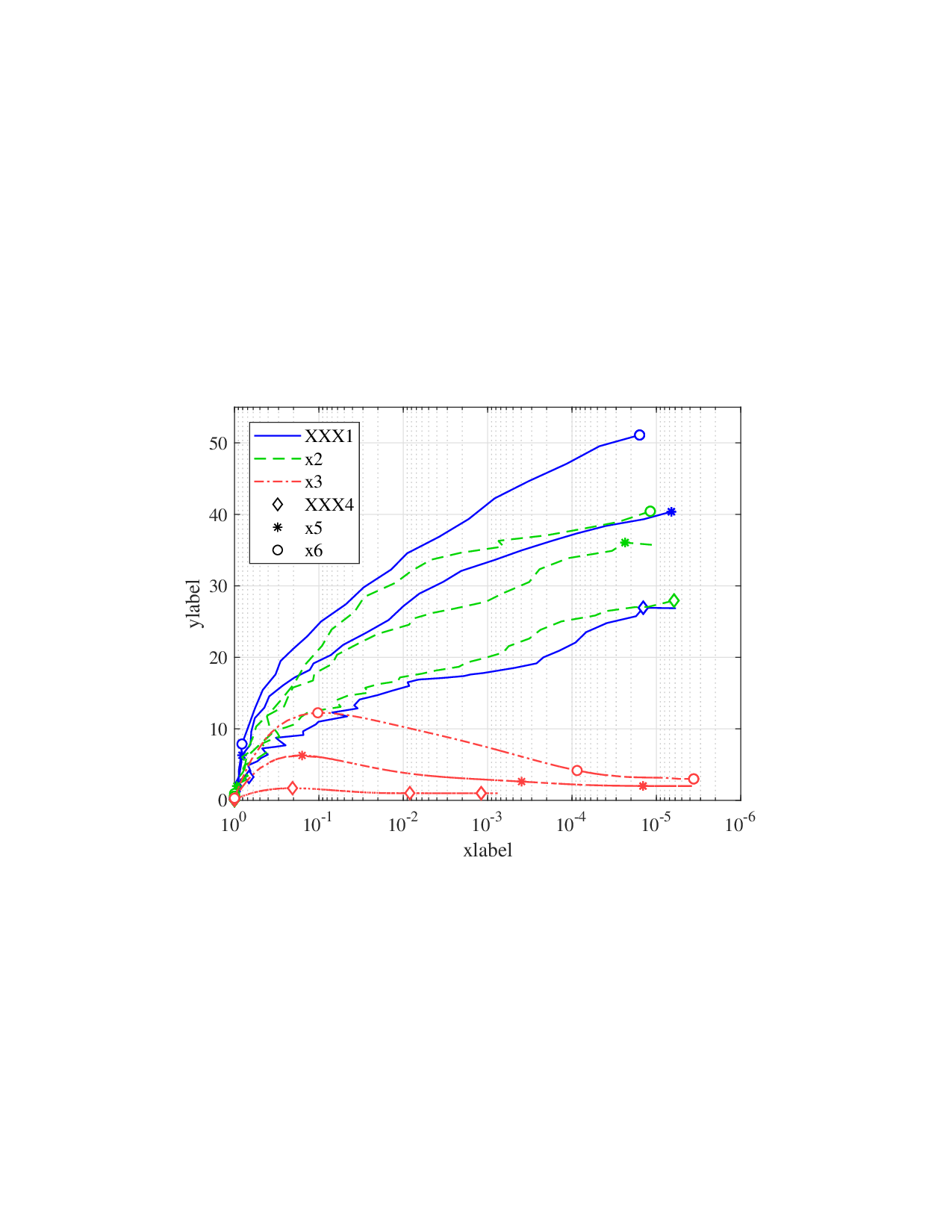}
\caption{Empirical diversity order in finite \gls{snr} against system outage probability for $B = 50$ bytes per device.}
\label{fig:div_order}
\end{center}
\if\mycmd 1
\vspace{-35pt}
\else
%\vspace{-10pt}
\fi
\end{figure}
%----------------
\fi

\paragraph*{Quick reach to the maximum diversity gain} We study the empirical outage exponent of the proposed transmission scheme in \figref{fig:div_order}. For that purpose, we adopt a simulation setup where all links exhibit a single nominal average \gls{snr} value, i.e., removing the effect of large-scale fading in channel gain. The system outage probability is then simulated across a finite range of link \gls{snr}. As depicted in \figref{fig:div_order},  with optimal $\beta = \hat{\beta}$, the proposed protocol  reaches quickly to the maximum achievable diversity order of $M+N-1$ at $M = 3$, confirming the derivation in \eqref{Eq:div_orde_2hop}. This is thanks to collecting the multi-user diversity gain at its best, by rate-adaptive  scheduling of the devices with strong channel condition, while exploiting the spatial diversity gain from cooperative relaying towards the devices with poor channel condition. Interestingly, with $\beta  = 1$, the gain from multi-user diversity can initially increase the outage exponent. By increasing \gls{snr}, where all devices will naturally be scheduled with the same transmission rate, the diversity order approaches to $M$, as was also suggested by the  derivation in \eqref{Eq:div_order_1hop}.

\if\mycmd 2
%---------------------------
\begin{figure}[t]
\begin{center}
\psfrag{ylabel}[c][c][\scalevalueS]{probability of system outage, $\Pout$}
\psfrag{xlabel}[c][c][\scalevalueS]{number of devices, $N$}
\psfrag{XXX1}[lc][lc][\scalevalueSS]{$\beta = \hat{\beta}$}
\psfrag{x2}[lc][lc][\scalevalueSS]{$\beta = 0$}
\psfrag{x3}[lc][lc][\scalevalueSS]{$\beta = 1$}
\psfrag{XXX4}[lc][lc][\scalevalueSS]{$M = 1$}
\psfrag{x5}[lc][lc][\scalevalueSS]{$M = 3$}
%\psfrag{x6}[lc][lc][\scalevalueSS]{$M = 3$}
\includegraphics[width=\scalevalueFigWidthSmall\columnwidth,keepaspectratio]{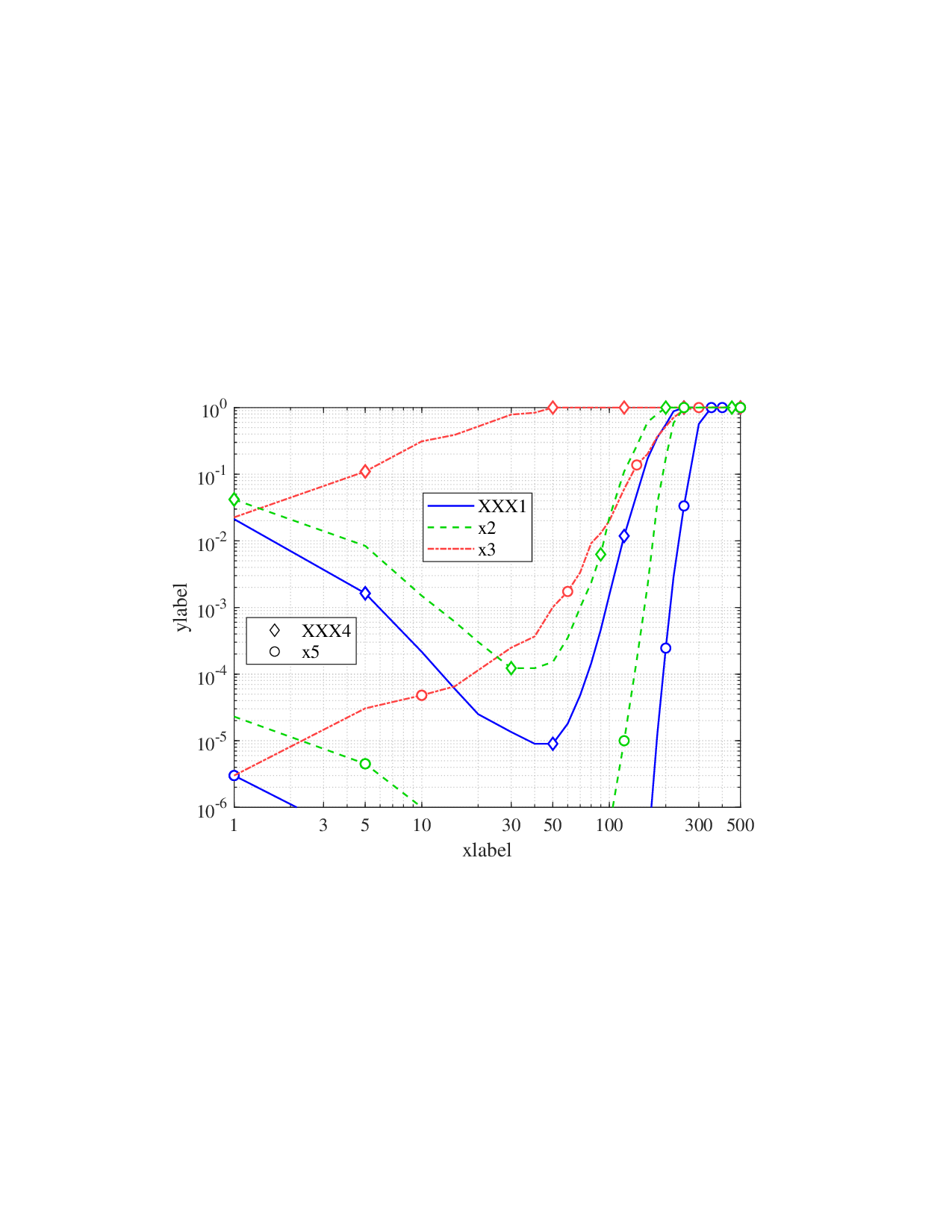}
\caption{System outage probability against number of devices, $N$, for $\Pa = \Pd = 14$ dBm and $B = 50$ bytes per device.}
\label{fig:comp_id_outage_population_pwr_14}
\end{center}
\if\mycmd 1
\vspace{-35pt}
\else
%\vspace{-10pt}
\fi
\end{figure}
%----------------
\fi

\if\mycmd 1
\begin{figure*}
\centering
\begin{minipage}[b]{.47\textwidth}
\begin{center}
\psfrag{xlabel}[c][c][\scalevalueS]{probability of system outage, $\Pout$}
\psfrag{ylabel}[c][c][\scalevalueS]{empirical outage exponent}
\psfrag{XXX1}[lc][lc][\scalevalueSS]{$\beta = \hat{\beta}$}
\psfrag{x2}[lc][lc][\scalevalueSS]{$\beta = 0$}
\psfrag{x3}[lc][lc][\scalevalueSS]{$\beta = 1$}
\psfrag{XXX4}[lc][lc][\scalevalueSS]{$M = 1$}
\psfrag{x5}[lc][lc][\scalevalueSS]{$M = 2$}
\psfrag{x6}[lc][lc][\scalevalueSS]{$M = 3$}
\includegraphics[width=\scalevalueFigWidthSmall\columnwidth,keepaspectratio]{div_order_snr.eps}
\caption{Empirical diversity order in finite \gls{snr} against system outage probability for $B = 50$ bytes per device.}
\label{fig:div_order}
\end{center}
\end{minipage}\qquad
\begin{minipage}[b]{.47\textwidth}
\begin{center}
\psfrag{ylabel}[c][c][\scalevalueS]{probability of system outage, $\Pout$}
\psfrag{xlabel}[c][c][\scalevalueS]{number of devices, $N$}
\psfrag{XXX1}[lc][lc][\scalevalueSS]{$\beta = \hat{\beta}$}
\psfrag{x2}[lc][lc][\scalevalueSS]{$\beta = 0$}
\psfrag{x3}[lc][lc][\scalevalueSS]{$\beta = 1$}
\psfrag{XXX4}[lc][lc][\scalevalueSS]{$M = 1$}
\psfrag{x5}[lc][lc][\scalevalueSS]{$M = 3$}
%\psfrag{x6}[lc][lc][\scalevalueSS]{$M = 3$}
\includegraphics[width=\scalevalueFigWidthSmall\columnwidth,keepaspectratio]{comp_id_outage_population_pwr_14.eps}
\caption{System outage probability against number of devices, $N$, for $\Pa = \Pd = 14$ dBm and $B = 50$ bytes per device.}
\label{fig:comp_id_outage_population_pwr_14}
\end{center}
\end{minipage}
\if\mycmd 1
\vspace{-35pt}
\else
%\vspace{-10pt}
\fi
\end{figure*}
\fi

%
%%---------------------------
%\begin{figure}[t]
%\begin{center}
%\psfrag{ylabel}[c][c][\scalevalueS]{probability of system outage, $\Pout$}
%\psfrag{xlabel}[c][c][\scalevalueS]{number of devices, $N$}
%\psfrag{XXX1}[lc][lc][\scalevalueSS]{$\beta = \hat{\beta}$}
%\psfrag{x2}[lc][lc][\scalevalueSS]{$\beta = 0$}
%\psfrag{x3}[lc][lc][\scalevalueSS]{$\beta = 1$}
%\psfrag{XXX4}[lc][lc][\scalevalueSS]{$M = 1$}
%\psfrag{x5}[lc][lc][\scalevalueSS]{$M = 3$}
%%\psfrag{x6}[lc][lc][\scalevalueSS]{$M = 3$}
%\includegraphics[width=\scalevalueFigWidthSmall\columnwidth,keepaspectratio]{comp_id_outage_population_pwr_11.eps}
%\caption{System outage probability against number of devices, $N$, for transmit power of $\Pa = \Pd = 11$ dBm and $B = 50$ bytes per device.}
%\label{fig:comp_id_outage_population_pwr_11}
%\end{center}
%\vspace{-10pt}
%\end{figure}
%%----------------

\paragraph*{Improved scalability with network size} In practical industrial networks, the number of devices connected to the same controller can become largely dynamic. Therefore, it is crucial for the transmission protocol to be able to scale with the network size, up or down, without depriving it of reliability. To that end, in \figref{fig:comp_id_outage_population_pwr_14} the system outage probability of the three transmission schemes are tested against a range of network sizes, i.e., the number of devices in the network, $N$. We fixed the packet size for each device to $B = 50$ bytes to imitate the realistic conditions. 

The system outage probability for the case of single-hop transmission with ideal rate adaptation increases, almost linearly, by increasing the network size. This is an expected outcome since the system is unable to gain from the increase in number of devices (maximum diversity order or $M$, as proposed in \eqref{Eq:div_order_1hop}). Thus, increasing the number of devices, merely translates into a higher likelihood of scheduling  time-overflow.

On the contrary, the proposed \gls{andcoop} protocol and the OccupyCoW protocol can benefit from the increase in network device. In fact, by increasing the network size, the potential cooperative diversity gain also increases, which in turn reduces the system outage probability. Meanwhile, by increasing the network size, the average transmission rate increases too, which has an opposite effect on outage probability. Therefore, for those two schemes, we observe a turning point for system outage probability. Overall, the proposed \gls{andcoop} can guarantee $\Pout \leq 10^{-5}$ for $N \leq 180$ with $M = 3$ \glspl{ap} in this example. For the OccupyCoW scheme this reduces to only $2<N\leq 120$. It should be noted that at $N = 1$,  the proposed scheme is equivalent to single-hop transmission since the total resources are allotted to the single device. Moreover, the overall scheduling overhead  also increases with the number of devices increasing. However, since a fixed packet size per device is assumed in this analysis, by introducing a fixed scheduling overhead per device, the trend of the curves in \figref{fig:comp_id_outage_population_pwr_14} and the above conclusions will remain intact.

%%---------------------------
%\begin{figure}[t]
%\begin{center}
%\psfrag{xlabel}[c][c][\scalevalueS]{relay transmit power}
%\psfrag{ylabel}[c][c][\scalevalueS]{probability of system outage, $\Pout$}
%\psfrag{XXX1}[lc][lc][\scalevalueSS]{$\beta = \hat{\beta}$}
%\psfrag{x2}[lc][lc][\scalevalueSS]{$\beta = 0$}
%\psfrag{x3}[lc][lc][\scalevalueSS]{$\beta = 1$}
%\psfrag{XXX4}[lc][lc][\scalevalueSS]{$M = 1$}
%\psfrag{x5}[lc][lc][\scalevalueSS]{$M = 2$}
%\psfrag{x6}[lc][lc][\scalevalueSS]{$M = 3$}
%\includegraphics[width=\scalevalueFigWidthSmall\columnwidth,keepaspectratio]{comp_id_rel_power.eps}
%\caption{ for $B = 50$ bytes per device.}
%\label{fig:rel_power}
%\end{center}
%\vspace{-10pt}
%\end{figure}
%%----------------
%\paragraph*{effect of relay device transmit power}

\if\mycmd 1
\vspace{-10pt}
\else
\fi
\subsection{Effect of Imperfect \gls{csi}}

As discussed earlier in \secref{Sec:transmissionprotocol}, due to the inevitable \gls{csi} estimation error, the transmitter must use a back-off parameter $0 < \rbo \leq 1$ to adjust the transmission rate that is adapted to the \gls{icsi}.  %This helps reduce the chance of failure for a single user down from $50\%$, which is attained if $\rbo = 1$. 
Moreover, the effect of \gls{icsi} is only on the single-hop rate-adaptive transmission phase. The two-hop  phase, thanks to the fixed-rate transmission, encounters no impact from the \gls{icsi}.

Numerical optimization in presence of \gls{csi} error is a demanding task which requires exhaustive search for the optimal operating point across three parameters $L$, $\beta$ and $\rbo$. Therefore, it is crucial to understand the impact that each of those parameters have on the performance in order to reduce the optimization complexity. %In the following we show that;
\if\mycmd 1
\begin{figure*}
\centering
\begin{minipage}[b]{.47\textwidth}
\begin{center}
\psfrag{xlabel}[c][c][\scalevalueS]{number of pilot symbols per device, $L$}
\psfrag{ylabel}[c][c][\scalevalueS]{system outage probability, $\Pout$}
\psfrag{XXXXx1}[lc][lc][\scalevalueSS]{best case}
\includegraphics[width=\scalevalueFigWidthSmall\columnwidth,keepaspectratio]{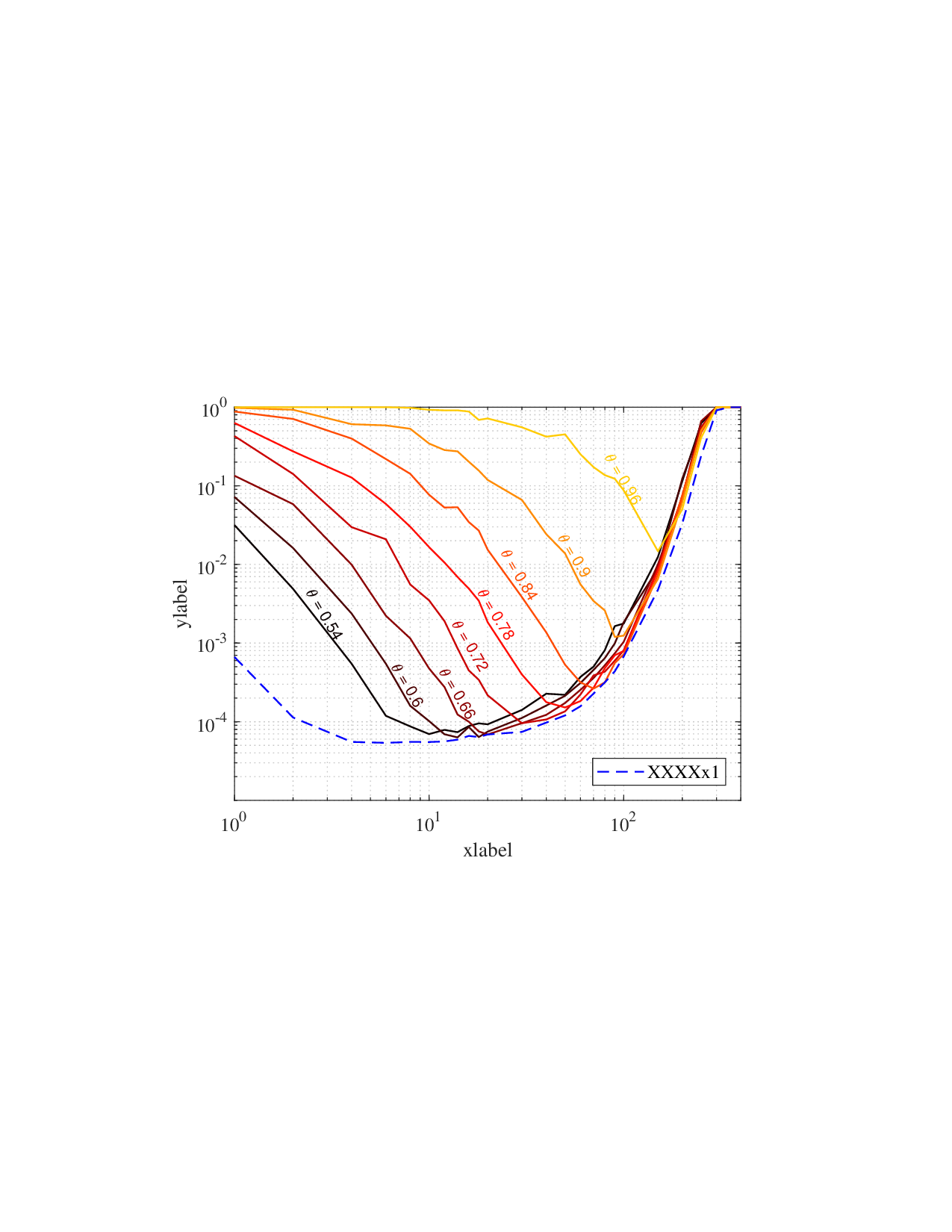}
\caption{System outage probability of the proposed \gls{andcoop} for different  $\rbo$ values. We fixed $\Pa = \Pd = 14$ dBm, $M = 1$, $N = 50$, $B = 50$ bytes and $\beta = 0.1$.}
\label{fig:T1_tradeoff}
\end{center}
\end{minipage}\qquad
\begin{minipage}[b]{.47\textwidth}
\begin{center}
\psfrag{xlabel}[c][c][\scalevalueS]{number of pilot symbols per device, $L$}
\psfrag{ylabel}[c][c][\scalevalueS]{system outage probability, $\Pout$}
\psfrag{XXXXx1}[lc][lc][\scalevalueSS]{best case}
\includegraphics[width=\scalevalueFigWidthSmall\columnwidth,keepaspectratio]{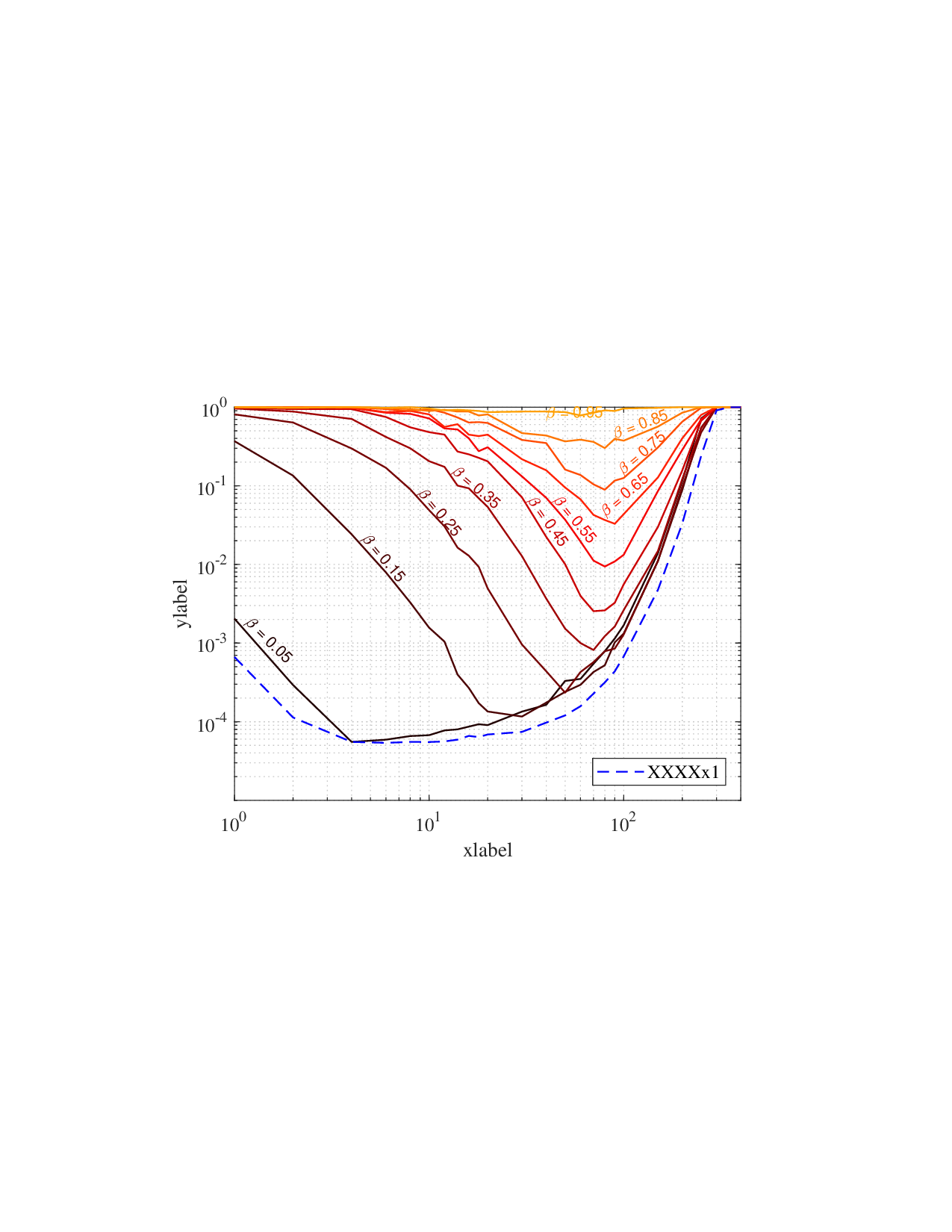}
\caption{System outage probability of the proposed \gls{andcoop} for different  $\beta$ values. We fixed $\Pa = \Pd = 14$ dBm, $M = 1$, $N = 50$, $B = 50$ bytes and $\rbo = 0.6$.}
\label{fig:Rbo_tradeoff}
\end{center}
\end{minipage}
\if\mycmd 1
\vspace{-35pt}
\else
%\vspace{-10pt}
\fi
\end{figure*}
\fi

\if\mycmd 2
%---------------------------
\begin{figure}[t]
\begin{center}
\psfrag{xlabel}[c][c][\scalevalueS]{number of pilot symbols per device, $L$}
\psfrag{ylabel}[c][c][\scalevalueS]{system outage probability, $\Pout$}
\psfrag{XXXXx1}[lc][lc][\scalevalueSS]{best case}
\includegraphics[width=\scalevalueFigWidthSmall\columnwidth,keepaspectratio]{T1_fixed_tradeoff.eps}
\caption{System outage probability of the proposed \gls{andcoop} for different  $\rbo$ values. We fixed $\Pa = \Pd = 14$ dBm, $M = 1$, $N = 50$, $B = 50$ bytes and $\beta = 0.1$.}
\label{fig:T1_tradeoff}
\end{center}
\if\mycmd 1
\vspace{-35pt}
\else
%\vspace{-10pt}
\fi
\end{figure}
%----------------

%---------------------------
\begin{figure}[t]
\begin{center}
\psfrag{xlabel}[c][c][\scalevalueS]{number of pilot symbols per device, $L$}
\psfrag{ylabel}[c][c][\scalevalueS]{system outage probability, $\Pout$}
\psfrag{XXXXx1}[lc][lc][\scalevalueSS]{best case}
\includegraphics[width=\scalevalueFigWidthSmall\columnwidth,keepaspectratio]{Rbo_fixed_tradeoff.eps}
\caption{System outage probability of the proposed \gls{andcoop} for different  $\beta$ values. We fixed $\Pa = Pd = 14$ dBm, $M = 1$, $N = 50$, $B = 50$ bytes and $\rbo = 0.6$.}
\label{fig:Rbo_tradeoff}
\end{center}
\if\mycmd 1
\vspace{-35pt}
\else
%\vspace{-10pt}
\fi
\end{figure}
%----------------
\fi

\paragraph*{Fixing a small number of pilot symbols per device} To reduce the complexity of the exhaustive numerical optimization we restrict each of the three parameters  $L$, $\beta$ and $\rbo$, to a finite set of values. Then, for each triple, we simulate the system outage probability. The curves depicted against parameter $L$ in \figref{fig:T1_tradeoff}, show the course of system outage probability across different values of $\rbo$ when we fixed $\beta = 0.1$. Similarly, in \figref{fig:Rbo_tradeoff}, system outage probability is depicted for different $\beta$ values while fixing $\rbo = 0.6$. The best case system outage probability in both those figures, represents the minimum outage probability that is attainable at a given $L$ while optimizing against $\beta$ and $\rbo$. The following observations are given.
\begin{itemize}
%\item 	For fixed $\beta > 0$, increasing $\rbo$ reduces the chance of failure in the single-hop phase by lowering the average transmission rate in that phase. This in turn increases the load and thus the transmission rate for the two-hop phase, making it more likely to fail. 
\item 	By increasing $L$, channel estimation error decreases, which in turn lowers the impact of $\rbo$ on system  outage probability  for fixed $\beta$.
% By reducing the number of pilot symbols per device, channel estimation error increases, which in turn lowers  the optimal value of $\rbo$ for fixed $\beta > 0$.
%\item 	For fixed $\rbo$, by decreasing $\beta$, a larger number of devices are scheduled with two-hop transmission, which relaxes the dependency of outage probability on $L$.% the number of pilot symbols $L$.
\item 	By  optimization across different values of $\beta$ and $\rbo$,  it becomes evident that system outage probability is within a small margin of the optimal value, across a wide range of $L$. In this example, choosing $2 \leq L \leq 30$, system outage probability remains roughly unchanged. 
\end{itemize}
From those observations, to simplify the optimization process we propose to fix $L = 10$ and  optimize only across $\beta$ and $\rbo$.

\if\mycmd 2
%---------------------------
\begin{figure}[t]
\begin{center}
\psfrag{ylabel}[c][c][\scalevalueS]{probability of system outage, $\Pout$}
\psfrag{xlabel}[c][c][\scalevalueS]{transmit power (dBm)}
\psfrag{XXX1}[lc][lc][\scalevalueSS]{$\beta = \hat{\beta}$}
\psfrag{x2}[lc][lc][\scalevalueSS]{$\beta = 0$}
\psfrag{x3}[lc][lc][\scalevalueSS]{\gls{icsi}}
\psfrag{XXX4}[lc][lc][\scalevalueSS]{\gls{pcsi}}
\psfrag{XXX5}[lc][lc][\scalevalueSS]{$M = 1$}
\psfrag{x6}[lc][lc][\scalevalueSS]{$M = 3$}
\includegraphics[width=\scalevalueFigWidthSmall\columnwidth,keepaspectratio]{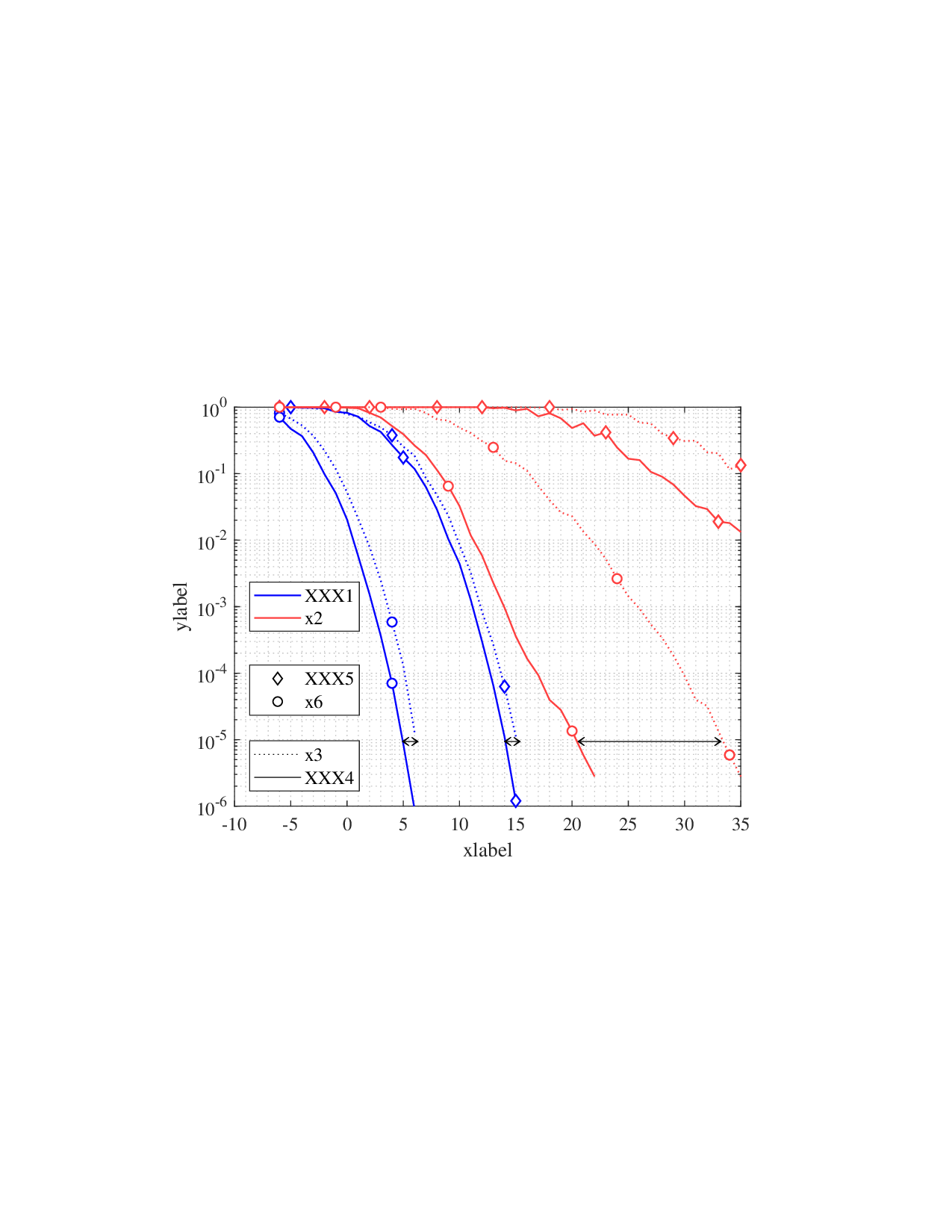}
\caption{Comparison of outage probability between \gls{icsi} and \gls{pcsi} for $N = 50$, $B = 50$ bytes per device and $L = 10$.}
\label{fig:comp_power_outage_reVSid}
\end{center}
\if\mycmd 1
\vspace{-35pt}
\else
%\vspace{-10pt}
\fi
\end{figure}
%----------------
\fi

\if\mycmd 1
\begin{figure*}
\centering
\begin{minipage}[b]{.47\textwidth}
\begin{center}
\psfrag{ylabel}[c][c][\scalevalueS]{probability of system outage, $\Pout$}
\psfrag{xlabel}[c][c][\scalevalueS]{transmit power (dBm)}
\psfrag{XXX1}[lc][lc][\scalevalueSS]{$\beta = \hat{\beta}$}
\psfrag{x2}[lc][lc][\scalevalueSS]{$\beta = 0$}
\psfrag{x3}[lc][lc][\scalevalueSS]{\gls{icsi}}
\psfrag{XXX4}[lc][lc][\scalevalueSS]{\gls{pcsi}}
\psfrag{XXX5}[lc][lc][\scalevalueSS]{$M = 1$}
\psfrag{x6}[lc][lc][\scalevalueSS]{$M = 3$}
\includegraphics[width=\scalevalueFigWidthSmall\columnwidth,keepaspectratio]{comp_power_outage_reVSid.eps}
\caption{Comparison of outage probability between \gls{icsi} and \gls{pcsi} for $N = 50$, $B = 50$ bytes per device and $L = 10$.}
\label{fig:comp_power_outage_reVSid}
\end{center}
\end{minipage}\qquad
\begin{minipage}[b]{.47\textwidth}
\begin{center}
\psfrag{xlabel}[c][c][\scalevalueS]{system outage probability, $\Pout$}
\psfrag{ylabel}[c][c][\scalevalueS]{}
\psfrag{X1}[lc][lc][\scalevalueSS]{$\hat{\beta}$}
\psfrag{x2}[lc][lc][\scalevalueSS]{$\hat{\rbo}$}
\psfrag{XXX4}[lc][lc][\scalevalueSS]{$M = 1$}
\psfrag{x6}[lc][lc][\scalevalueSS]{$M = 3$}
\includegraphics[width=\scalevalueFigWidthSmall\columnwidth,keepaspectratio]{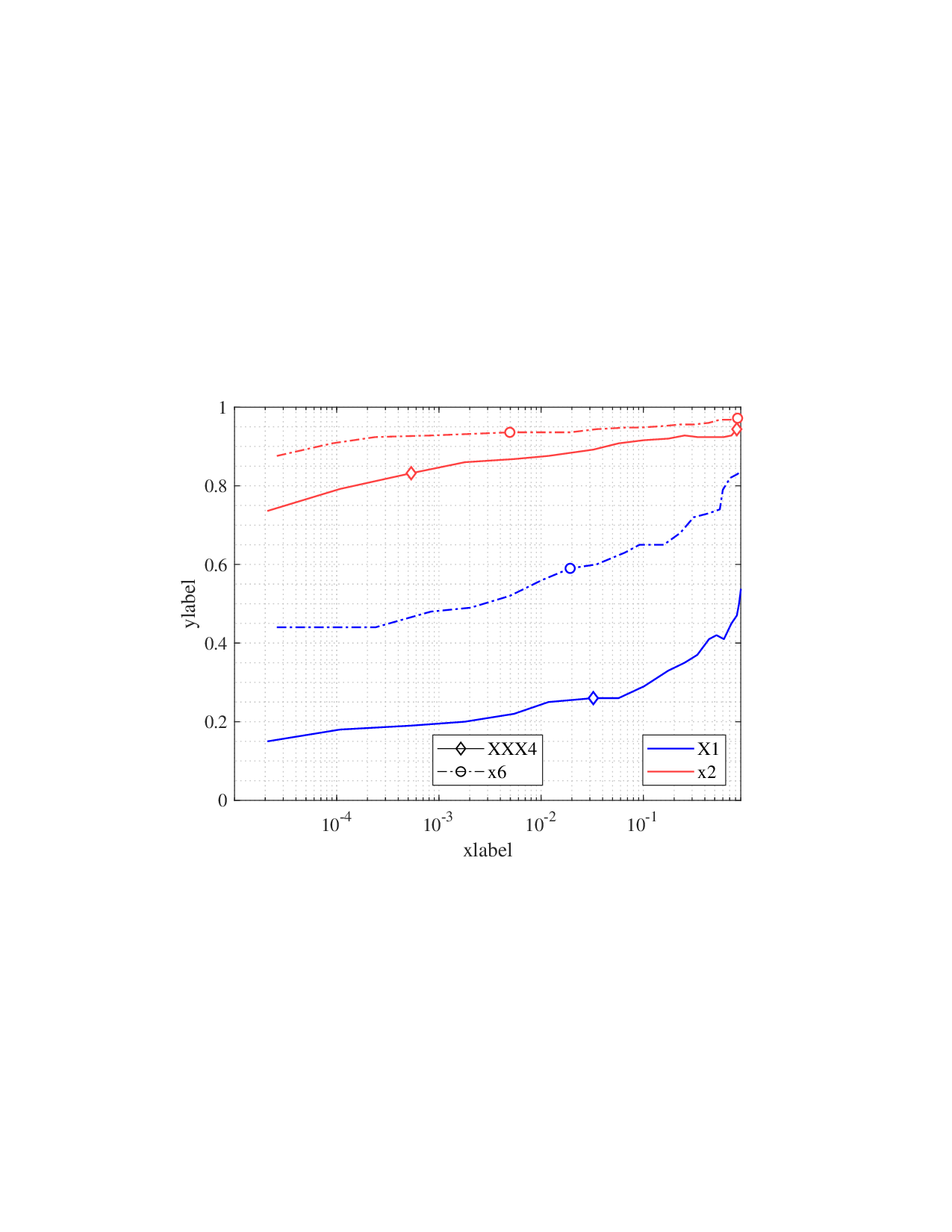}
\caption{The optimal values $\hat{\beta}$ and $\hat{\rbo}$ for \gls{icsi}, where $N = 50$, $B = 50$ bytes per device and $L = 10$.}
\label{fig:opt_beta_rbo_re}
\end{center}
\end{minipage}
\if\mycmd 1
\vspace{-35pt}
\else
%\vspace{-10pt}
\fi
\end{figure*}
\fi

\paragraph*{Impact of channel estimation error is marginal}  For fixed number of pilot symbols per device $L = 10$, we examine the performance degradation of the proposed \gls{andcoop} from \gls{icsi} with respect to the case of \gls{pcsi} under similar setup as in \figref{fig:comp_id_power}. As shown, for both cases of $M = 1$ and $M = 3$, in presence of \gls{icsi} the proposed scheme with $\beta = \hat{\beta}$ operates within a small 1-2 dB gap from the case of \gls{pcsi}. On the other hand, the gap for the case of single-hop rate-adaptive transmission can become very large, e.g., $~15$ dB for the case of $M = 3$. Such large \gls{snr} gap owes to high channel estimation error of the devices with poor channel quality during rate adaptation for  single-hop transmission. Our proposed \gls{andcoop} circumvents that by identifying those  devices and scheduling them over fixed-rate two-hop transmission.

\if\mycmd 2
%---------------------------
\begin{figure}[h]
\begin{center}
\psfrag{xlabel}[c][c][\scalevalueS]{system outage probability, $\Pout$}
\psfrag{ylabel}[c][c][\scalevalueS]{}
\psfrag{X1}[lc][lc][\scalevalueSS]{$\hat{\beta}$}
\psfrag{x2}[lc][lc][\scalevalueSS]{$\hat{\rbo}$}
\psfrag{XXX4}[lc][lc][\scalevalueSS]{$M = 1$}
\psfrag{x6}[lc][lc][\scalevalueSS]{$M = 3$}
\includegraphics[width=\scalevalueFigWidthSmall\columnwidth,keepaspectratio]{opt_beta_rbo_re.eps}
\caption{The optimal values $\hat{\beta}$ and $\hat{\rbo}$ for \gls{icsi}, where $N = 50$, $B = 50$ bytes per device and $L = 10$.}
\label{fig:opt_beta_rbo_re}
\end{center}
\if\mycmd 1
\vspace{-35pt}
\else
%\vspace{-10pt}
\fi
\end{figure}
%----------------
\fi

\paragraph*{Optimal value for $\beta$ and $\rbo$ for fixed $L$} In  \figref{fig:opt_beta_rbo_re}, the optimal  values $\hat{\beta}$ and $\hat{\rbo}$ are reported for fixed $L = 10$. Those values were used in \figref{fig:comp_power_outage_reVSid} for the proposed \gls{andcoop} in case of \gls{icsi}. It is evident that with a larger number of \gls{ap} antennas, on average a larger number of devices are scheduled with single-hop transmission (i.e., larger $\hat{\beta}$). Moreover, the optimal rate adjustment factor $\hat{\rbo}$ is smaller, when the number of \gls{ap} antennas is smaller. As could be expected, both  $\hat{\beta}$ and $\hat{\rbo}$ have a non-increasing trend with system outage probability decreasing. This parallels our design analogy for ultra-reliable communication, where only the devices with strong channel conditions should undergo rate-adaptive transmission (i.e., smaller $\hat{\beta}$), and for those, a more conservative transmission rate adjustment factor is necessary (i.e., smaller $\hat{\rbo}$).

%\paragraph*{avoiding rate plateau while improved robustness to load}

%%---------------------------
%\begin{figure}[t]
%\begin{center}
%\psfrag{ylabel}[c][c][\scalevalueS]{probability of system outage, $\Pout$}
%\psfrag{xlabel}[c][c][\scalevalueS]{spectral efficiency, $\eta$  (\gls{bpcu})}
%\psfrag{XXX1}[lc][lc][\scalevalueSS]{$\beta = \hat{\beta}$}
%\psfrag{x2}[lc][lc][\scalevalueSS]{$\beta = 0$}
%\psfrag{x3}[lc][lc][\scalevalueSS]{$\beta = 1$}
%\psfrag{XXX4}[lc][lc][\scalevalueSS]{$M = 1$}
%\psfrag{x5}[lc][lc][\scalevalueSS]{$M = 2$}
%\psfrag{x6}[lc][lc][\scalevalueSS]{$M = 3$}
%\includegraphics[width=\scalevalueFigWidthSmall\columnwidth,keepaspectratio]{comp_outage_re_rate.eps}
%\caption{Empirical diversity order in finite \gls{snr} against system outage probability for $B = 50$ bytes per device.}
%\label{fig:comp_re_rate}
%\end{center}
%\vspace{-10pt}
%\end{figure}
%%----------------

%%%%%%%%%%%%%%%%%%%%%%%%%%%%%%%%%%%%%%%%%%%%%%%%%
\section{Conclusion}
\label{Sec:Conclusion}

We proposed a channel-aware  \gls{urllc} transmission protocol for  industrial wireless control where a controller communicates with several devices in the \gls{dl}. The proposed transmission protocol uses the knowledge of instantaneous channel conditions of \gls{ap}-device links to identify devices with \emph{strong} and \emph{weak} channel conditions. The strong devices are served with a single-hop rate-adaptive transmission  where the  rate is adapted to the instantaneous channel. With this approach, multi-user diversity gain is exploited and frequency resources are efficiently utilized. Meanwhile, the weak devices enjoy a two-hop cooperative communication in which transmission rate is fixed and all the nodes in the network cooperate in  relaying. We analyzed the system outage probability and the diversity order  under the proposed transmission protocol. Through numerical analysis we derived optimal time division between the two sets of \emph{strong} and \emph{weak} devices and showed that  such optimization can improve spectral efficiency by more than 0.5, 1 and 1.5 \gls{bpcu}, when the controller is equipped with 1, 2 and 3 \gls{ap} antennas, respectively. Moreover, we observed that the proposed \gls{andcoop} transmission protocol can effectively reduce the required transmit power for reliable industrial wireless control, improve the scalability with respect to network size, and marginalize the impact of channel estimation error on outage probability. The improvements are thanks to the instantaneous awareness to channel conditions that allows to exploit different sources of diversity gain in the network. As the continuation of this work in future, we will look into evaluating the performance of the proposed transmission protocol  considering  spatio-temporal correlation of shadowing caused by blockages. Further, we will study the use of dedicated full-duplex relay nodes in the proposed transmission protocol.

%\appendices
\if\mycmd 1
\vspace{-15pt} 
\fi
\begin{appendix} 

%\section{}

%\subsection{Proof of \propref{Prop:motiv1}}
\subsection{Proof of Proposition 1}
\label{App:motiv}

The achievable rate $\aR_{j}$ for device $j$ is given as 
\if\mycmd 1
\fontsize{9}{11}\selectfont
\begin{align}
\aR_{j} = W \log \left( 1 + \sum_{i \in \msa} \ga{i}{j}  \right),
\end{align}
\normalsize
\else
\begin{align}
\aR_{j} = W \log \left( 1 + \sum_{i \in \msa} \ga{i}{j}  \right),
\end{align}
\fi
measured in \gls{bps}. Then, the random  time duration $\tau_j$, in seconds, required for successful transmission to device $j$  is equal to $\tau_j = \frac{B'}{\aR_{j}}$, where $B' = \frac{N}{K}B$ is the adjusted packet size per scheduled device for a given $K$. Note that  $\aR_{j}$'s are \gls{iid} random variables which result in \gls{iid} $\tau_j$'s. The transmission from source to $K$ arbitrarily chosen devices is then \emph{successful} if the sum of $\tau_j$'s for those $K$ devices is not larger than $T$. Without loss of generality, let's assume $\tau_1 \leq \tau_2 \leq \ldots \leq \tau_N$, meaning that $\aR_1 \geq \aR_2 \geq \ldots \geq \aR_N$, where $\aR_j$'s form \emph{order statistics} drawn from \gls{cdf} $\cdf_{\aR}$. 

For the sake of better reliability (i.e., maximum diversity gain), the scheduler  chooses the  $K$ devices with best channels to transmit to, where $K \in \{1, 2, \ldots, N\}$.  %the transmission is successful if
%\begin{align}
%\sum_{j = 1}^{K} \tau_j \leq K \frac{T}{N}.
%\end{align}
Thus, the probability of transmission error is equivalent to the probability of \emph{time overflow} given as
\if\mycmd 1
\fontsize{9}{11}\selectfont
\begin{align}
\Pout  =  \prob{ \sum_{j = 1}^{K} \tau_j \geq T} = \prob{ \sum_{j = 1}^{K} \frac{1}{\aR_{j}} \geq  \frac{K T}{N B}}.
\label{Eq:Pout-k-best}
\end{align}
\normalsize
\else
\begin{align}\nonumber
\Pout & =  \prob{ \sum_{j = 1}^{K} \tau_j \geq T} \\
& = \prob{ \sum_{j = 1}^{K} \frac{1}{\aR_{j}} \geq  \frac{K T}{N B}}.
\label{Eq:Pout-k-best}
\end{align}
\fi

The diversity gain $d$ from \eqref{Eq:dive_order_def} can be derived for the case of \gls{iid} Rayleigh fading for all links and fixed average \gls{snr} of $\rho$ over all links, as follows. First, note that 
\if\mycmd 1
\fontsize{9}{11}\selectfont
\begin{align}
\prob{ \frac{1}{\aR_{j}} \geq  \frac{1}{R}}  \overset{(a)}{=} \sum_{k = N - j + 1}^{N} \binom{N}{k} \cdf_{\aR}(R)^{k} \left(1 -  \cdf_{\aR}(R) \right)^{N-k} 
 \overset{(b)}{=} \sum_{k = N - j + 1}^{N} \binom{N}{k} \Pm{M,R}^{k} \left(1 -  \Pm{M,R} \right)^{N-k},
\end{align}
\normalsize
\else
\begin{align}
\prob{ \frac{1}{\aR_{j}} \geq  \frac{1}{R}} & \overset{(a)}{=} \sum_{k = N - j + 1}^{N} \binom{N}{k} \cdf_{\aR}(R)^{k} \left(1 -  \cdf_{\aR}(R) \right)^{N-k} \\ \nonumber
& \overset{(b)}{=} \sum_{k = N - j + 1}^{N} \binom{N}{k} \Pm{M,R}^{k} \left(1 -  \Pm{M,R} \right)^{N-k},
\end{align}
\fi
where $(a)$ follows from the \gls{cdf} of the $N - j + 1$th order statistics \cite[Chapter~6]{Mittelhammer99} and $(b)$ follows from \eqref{Eq:pfail}. Using the approximations $\Pm{m,R} \approx (\frac{\omega}{\Pt})^m$ and $\left( 1 - \Pm{m,R} \right) \approx 1$ for $\frac{\omega}{\Pt} \rightarrow 0$ \cite{Laneman:2003}, the following holds for any bounded real value  $R$. 
\if\mycmd 1
\fontsize{9}{11}\selectfont
\begin{align}
d_j   = - \lim_{\Pt \rightarrow \infty} \frac{\log \prob{ \frac{1}{\aR_{j}} \geq  \frac{1}{R}}}{\log \Pt}  =  - \lim_{\Pt \rightarrow \infty} \frac{\log  \sum_{k = N - j + 1}^{N} \binom{N}{k} (\frac{\omega}{\Pt})^{k \cdot M} } {\log \Pt}  = M (N - j + 1)
\label{Eq:div-best-j}
\end{align}
\normalsize
\else
\begin{align}\nonumber
d_j  & = - \lim_{\Pt \rightarrow \infty} \frac{\log \prob{ \frac{1}{\aR_{j}} \geq  \frac{1}{R}}}{\log \Pt} \\ \nonumber
& =  - \lim_{\Pt \rightarrow \infty} \frac{\log  \sum_{k = N - j + 1}^{N} \binom{N}{k} (\frac{\omega}{\Pt})^{k \cdot M} } {\log \Pt} \\
& = M (N - j + 1)
\label{Eq:div-best-j}
\end{align}
\fi

For the case where $K = 1$ device with the best channels is transmitted to, the diversity gain follows from \eqref{Eq:div-best-j} as
\begin{align}
d = M N.
\end{align}
For  $K > 1$, the maximum diversity gain follows from the choice of $K$ devices with best channels where the probability of outage $\Pout$ from \eqref{Eq:Pout-k-best} is bounded as follows
\if\mycmd 1
\fontsize{9}{11}\selectfont
\begin{align}\nonumber
\prob{ \frac{1}{\aR_{K}} \geq  \frac{K T}{N B}}  \leq \prob{ \sum_{j = 1}^{K} \frac{1}{\aR_{j}} \geq \frac{K T}{N B}}  \leq  \prob{ \frac{1}{\aR_{K}} \geq  \frac{T}{N B}}.
\end{align}
\normalsize
\else
\begin{align}\nonumber
\prob{ \frac{1}{\aR_{K}} \geq  \frac{K T}{N B}}  \leq \prob{ \sum_{j = 1}^{K} \frac{1}{\aR_{j}} \geq \frac{K T}{N B}}  \leq  \prob{ \frac{1}{\aR_{K}} \geq  \frac{T}{N B}}.
\end{align}
\fi
Therefore, the diversity order of $\Pout$ is bounded on both sides by $M(N - K + 1)$ according to \eqref{Eq:div-best-j}, which concludes
\if\mycmd 1
\fontsize{9}{11}\selectfont
\begin{align}
d = M (N - K + 1).
\end{align}
\normalsize
\else
\begin{align}
d = M (N - K + 1).
\end{align}
\fi

\if\mycmd 1
\vspace{-15pt} 
\fi
%\section{}

\if\mycmd 1
\def \scaletemp {0.55} 
\else
\def \scaletemp {1} 
\fi

%\subsection{Proof of \propref{Prop:1h}}
\subsection{Proof of Proposition 2}
\label{App:A}

Although a closed form of outage probability in \eqref{Eq:outage-1hop} is not available, the outage probability $\Pout  = \Poh(\Toh)$ is  bounded as follows.
\if\mycmd 1
\fontsize{9}{11}\selectfont
\begin{align}\label{Eq:bounds}
\resizebox{\scaletemp\columnwidth}{!}{$1 - \left(1 - \Pm{M,\frac{B}{\Toh}}\right)^N \leq \Pout \leq 1 - \left(1 - \Pm{M,\frac{N B}{\Toh}}\right)^N$}
\end{align}
\normalsize
\else
\begin{align}\label{Eq:bounds}
\resizebox{\scaletemp\columnwidth}{!}{$1 - \left(1 - \Pm{M,\frac{B}{\Toh}}\right)^N \leq \Pout \leq 1 - \left(1 - \Pm{M,\frac{N B}{\Toh}}\right)^N$}
\end{align}
\fi
The upper bound in \eqref{Eq:bounds} is realized by dividing the time $\Toh$ equally among the devices, while the lower bound is realized by allotting the time $\Toh$ to every device. Therefore, denoting the diversity order of the upper and the lower bounds in \eqref{Eq:bounds}, respectively, by $d_\text{\tiny{upper}}$ and $d_\text{\tiny{lower}}$, it can be concluded from \eqref{Eq:dive_order_def} that
\begin{align}\label{Eq:bounds_d}
d_\text{\tiny{upper}} \leq d \leq d_\text{\tiny{lower}}.
\end{align}
%
%when transmit power is unbounded, we can directly conclude the following,
%\begin{align}
%\forall i, j \in \msoh, \lim_{\Pt \rightarrow \infty} \frac{\Roh{i}}{\Roh{j}} = 1,
%\end{align}
%meaning that all the devices will be scheduled with the equal rate of $\frac{N B}{\Toh}$. In such case, we can use \eqref{Eq:linkoutage} to formulate $\Pout$ as follows, assuming fixed nominal \gls{snr} $\rho$ for all links.
%\begin{align}
%\Pout  = \Poh(\Toh,\msd) = 1 - (1 - \Pm{M})^N,
%\end{align}
%where $\Pm{M}$ is the probability of decoding failure from \eqref{Eq:pfail}, at rate $\Roh{j} = \frac{N B}{\Toh}$.
Using the approximation $\Pm{m,R} \approx (\frac{\omega}{\Pt})^m$ for $\frac{\omega}{\Pt} \rightarrow 0$ \cite{Laneman:2003}, the following holds for any bounded real value  $R$.
\if\mycmd 1
\fontsize{9}{11}\selectfont
\begin{align}\nonumber
 - \lim_{\Pt \rightarrow \infty} \frac{\log 1 - \left(1 - \Pm{M,R} \right)^N}{\log \Pt}  = - \lim_{\Pt \rightarrow \infty} \frac{\log N (\frac{\omega}{\Pt})^M}{\log \Pt} = M
%\label{Eq:div_order_1hop_bounds}
\end{align}
\normalsize
\else
\begin{align}\nonumber
 - \lim_{\Pt \rightarrow \infty} \frac{\log 1 - \left(1 - \Pm{M,R} \right)^N}{\log \Pt}  = - \lim_{\Pt \rightarrow \infty} \frac{\log N (\frac{\omega}{\Pt})^M}{\log \Pt} = M
%\label{Eq:div_order_1hop_bounds}
\end{align}
\fi
Therefore, we have  $d_\text{\tiny{upper}} = d_\text{\tiny{lower}} = M$, which according to \eqref{Eq:bounds_d} yields \eqref{Eq:div_order_1hop}.
\subsection{Proof of Proposition 3}
\label{Proof_prop_2h}

We start from  the probability of outage   $\Pout = \Pth(\msd)$ in \eqref{Eq:Ptwohop-4}. First note that for $\Pt \rightarrow \infty$, where $n>0$ we have
\if\mycmd 1
\fontsize{9}{11}\selectfont
\begin{align}\label{Eq:qr-prop}
\frac{\Pm{M+n,\Rth{r}} }{ \Pm{M,\Rth{b}} } = \frac{ \left( \frac{\omega_{2\text{h},r}}{\Pt} \right)^{M+n}}{\left( \frac{\omega_{2\text{h},b}}{\Pt} \right)^M },
\end{align}
\normalsize
\else
\begin{align}\label{Eq:qr-prop}
\frac{\Pm{M+n,\Rth{r}} }{ \Pm{M,\Rth{b}} } = \frac{ \left( \frac{\omega_{2\text{h},r}}{\Pt} \right)^{M+n}}{\left( \frac{\omega_{2\text{h},b}}{\Pt} \right)^M },
\end{align}
\fi
where $\omega_{2\text{h},b} = W \cdot \sigma_0 \cdot (2^{\Rth{b}/W} - 1)$, and $\omega_{2\text{h},r} = W \cdot \sigma_0 \cdot (2^{\Rth{r}/W} - 1)$, and
\if\mycmd 1
\fontsize{9}{11}\selectfont
\begin{align}
\Rth{b} & = \frac{NB}{\alpha  T} = \frac{W r \log \Pt}{\alpha}, \\
\Rth{r} & = \frac{NB} { (1-\alpha)  T} = \frac{W r \log \Pt}{1- \alpha}.
\end{align}
\normalsize
\else
\begin{align}
\Rth{b} & = \frac{NB}{\alpha  T} = \frac{W r \log \Pt}{\alpha}, \\
\Rth{r} & = \frac{NB} { (1-\alpha)  T} = \frac{W r \log \Pt}{1- \alpha}.
\end{align}
\fi
For the case where $n = 0$ and $\alpha \geq 0.5$, we have $q_{r}^{(M+n)} = 1$. Otherwise, $q_{r}^{(M+n)}$ is derived using \eqref{Eq:qr-prop}. Therefore, in the limit of $\Pt \rightarrow \infty$, we use the following approximation, 
\if\mycmd 1
\fontsize{9}{11}\selectfont
\begin{align}\label{Eq:a-qr-approx}
1 - \left( 1 - {q_{r}^{(M+n)}} \right)^{(N-n)} \approx 
\left\{\begin{matrix}
1 & n = 0 \; \& \; \alpha \geq \frac{1}{2} \\ 
(N-n) q_{r}^{(M+n)} & \tr{otherwise}
\end{matrix}\right.
\end{align}
\normalsize
\else
\begin{align}\label{Eq:a-qr-approx}
1 - \left( 1 - {q_{r}^{(M+n)}} \right)^{(N-n)} \approx 
\left\{\begin{matrix}
1 & n = 0 \; \& \; \alpha \geq \frac{1}{2} \\ 
(N-n) q_{r}^{(M+n)} & \tr{otherwise}
\end{matrix}\right.
\end{align}
\fi

We further  use the approximation $\left(1-q_{b}^M \right)^{n} \approx 1$ in high \gls{snr}. Thus,    we obtain
\if\mycmd 1
\fontsize{9}{11}\selectfont
\begin{align}\label{Eq:two-hop-approx-outage-div}
\Pout & \approx \sum_{n = 0}^{N-1}\left( q_{b}^{M}\right)^{(N-n)}  \binom{N}{n} \left( 1 - \left( 1 - {q_{r}^{(M+n)}} \right)^{(N-n)} \right).
\end{align}
\normalsize
\else
\begin{align}\label{Eq:two-hop-approx-outage-div}
\Pout & \approx \sum_{n = 0}^{N-1}\left( q_{b}^{M}\right)^{(N-n)}  \binom{N}{n} \left( 1 - \left( 1 - {q_{r}^{(M+n)}} \right)^{(N-n)} \right).
\end{align}
\fi
Using the approximation in \eqref{Eq:a-qr-approx} it easily follows that for $M > 1$, the slowest term in  \eqref{Eq:two-hop-approx-outage-div} approaching zero as $\Pt \rightarrow \infty$  is the term of $n = N - 1$. Therefore, we use this term to approximate $\Pout$, where  substituting  in \eqref{Eq:dive_order_def} yields \eqref{Eq:DMT-2hop}. 

\end{appendix}

%\vspace{-10pt}
%{%\fontsize{8}{11}\selectfont
%\tiny
\bibliographystyle{IEEEtran}
%\balance
\bibliography{IEEEabrv,references_all}{}
%\balance
%}
%\input{Biographies.tex}
\end{document}